\providecommand{\U}[1]{\protect\rule{.1in}{.1in}}
\numberwithin{equation}{section}
\newtheorem{theorem}{Theorem}
\newtheorem{algorithm}[theorem]{Algorithm}
\newtheorem{remark}{Remark}
\newtheorem{lemma}{Lemma}
\theoremstyle{definition}
\DeclareMathOperator{\E}{\text{E}}
\renewcommand{\Pr}{\mathbb{P}}
\newtheorem{asu}{Assumption}
\newcounter{subassumption}[asu]
\renewcommand{\thesubassumption}{(\textit{\roman{subassumption}})}
\renewcommand{\p@subassumption}{\theasu}
\newcommand{\subasu}{
	\refstepcounter{subassumption}
	\thesubassumption}
\numberwithin{equation}{section}
\numberwithin{theorem}{section}
\numberwithin{lemma}{section}
\title{Recovering latent linkage structures and spillover effects with structural breaks in panel data models\thanks{The authors would like to thank Geert Dhaene, Chen Huang, Chih-sheng Hsieh, Elena Manresa, Liangjun Su, participants of seminars at University of Cyprus, Charles III University of Madrid, CUNEF University, University of Southampton, Academica Sinica, IAAE2022, AMEST2022, CMStatistics 2022, Kansai Keikyou Keizaigaku Workshop 2023, SETA2023, and University of Queensland for valuable comments. Okui acknowledges financial support from JSPS KAKENHI Grant numbers 22H00833, 22K20154, 23H00804, and 23K25501. Sun acknowledges the support from the National Natural Science Foundation of China Grant Number 72203032. All authors acknowledge support from Tinbergen Institute.}}
\author{Ryo Okui\thanks{Faculty of Economics, Graduate School of Economics, University of Tokyo. Email: okuiryo@e.u-tokyo.ac.jp}\quad\qquad  Yutao Sun\thanks{Institute for Advanced Economic Research (IAER) at Dongbei University of Finance and Economics. Email: yutao.sun@dufe.edu.cn}\quad\qquad   
	Wendun Wang\thanks{Econometric Institute, Erasmus University Rotterdam; Tinbergen Institute. Email: wang@ese.eur.nl}}
\begin{document}
	
	\maketitle
	
	\begin{abstract}
		This paper introduces a framework to analyze time-varying spillover effects in panel data. We consider panel models where a unit's outcome depends not only on its own characteristics (private effects) but also on the characteristics of other units (spillover effects). The linkage of units is allowed to be latent and may shift at an unknown breakpoint. We propose a novel procedure to estimate the breakpoint, linkage structure, spillover and private effects. We address the high-dimensionality of spillover effect parameters using penalized estimation, and estimate the breakpoint with refinement. 
		We establish the super-consistency of the breakpoint estimator, ensuring that inferences about other parameters can proceed as if the breakpoint were known. The private effect parameters are estimated using a double machine learning method. The proposed method is applied to estimate the cross-country R\&D spillovers, and we find that the R\&D spillovers become sparser after the financial crisis.

		\medskip
		\textbf{Keyword:} Spillover effects; Structural break; Panel data; High-dimensional parameter; Double machine learning
		
	\end{abstract}
	
	\clearpage

	\section{Introduction}
	Innovation is widely acknowledged as a cornerstone of technological progress and plays a pivotal role in enhancing productivity \citep{Romer1990}. Empirical research consistently demonstrates that innovation not only bolsters domestic productivity but also generates spillover effects, positively influencing productivity in other countries \citep[see, e.g.,][]{Coe1995, Potterie2001, Coe2009, Ertur2007}. Despite the extensive body of literature on technology spillovers, the dynamics of these effects have received surprisingly little attention. This lack of focus contrasts with empirical evidence, which strongly suggests that spillover effects are not static but instead exhibit significant time-varying patterns. 
	For instance, \citet{OECD2012} highlights that the global financial crisis and the public debt crisis had a pronounced negative impact on business R\&D and innovation activities across a wide range of countries in 2008. 
	Also documented is a remarkable decrease in foreign direct investment  (FDI) and a disproportionate collapse in trade during the financial and debt crisis \citep{OECD2010, OECD2010outlook}.\footnote{\citet{OECD2010} reports a significant 15\% decrease in the volume of OECD imports and exports in the first two quarters of 2009 compared to the same period in 2008. This decline is attributed to the disproportionate fall in domestic demand, outputs, trade of capital goods, and the temporary drying up of short-term trade finance. \citet{OECD2010outlook} shows that FDI inflows sharply declined between 2008 and 2009 in many countries.} Given the disruption in R\&D activities by the crisis and considering that FDI and international trade are commonly regarded as key channels for technology spillovers \citep{keller2004, Coe1995, Potterie2001}, a natural question is if and how economic and political shocks influence R\&D spillovers?
	
	Addressing this empirical question presents several econometric challenges. First, how to identify the latent spillover structure, i.e., which countries generate spillovers and to whom? Existing studies typically assume a predefined functional form of the spillover structure, constructing the adjacency matrix that summarizes cross-country spillovers as an exponential or linear function of an observable bilateral factor, e.g., geographic distance, trade volume, FDI, or linguistic similarity. While the use of each observed factor may be justified by a specific theory, spillover channels are likely influenced by multiple factors simultaneously and also by unobserved variables. Furthermore, the relationship between these factors and spillover structures may be nonlinear and complex. Consequently, the postulated adjacency matrix based on a predetermined function of an observable can lead to misspecification, introducing biases in parameter estimation \citep{hardy2020}. 
	Second, even with a correctly specified spillover structure, estimating the magnitude of pair-specific spillover effects poses additional challenges, because the dimensionality of these effects grows with the number of countries, often surpassing the available time-series observations. 
	Lastly, capturing the temporal dynamics of the latent spillover structure is critical. Most existing studies on R\&D spillovers assume that spillover structures and effects are static over time. In practice, cross-country R\&D spillovers may undergo abrupt changes in response to economic and political shocks. Failing to account for such time variation obscures important insights into network dynamics and may render subsequent analyses based on the estimated spillover structure misleading.
		
	To address these challenges, we propose a novel model and estimation approach that identifies time-varying latent spillover structures without relying on predefined determinants or functional forms of network formation. Specifically, we consider a linear panel data model where the outcome of each unit is influenced by the characteristics of other unit, whose coefficients are referred to as spillover effects, as well as covariates that do not generate spillover effects, whose coefficients are referred to as private effects. Both spillover effects and private effects, or either one, may vary over time. To capture this temporal variability, we model it as abrupt structural breaks. Our approach allows the spillover structure to remain fully unspecified, introducing high-dimensional parameters. Consequently, our methodology is designed to detect structural breaks in these high-dimensional parameters.
		
	Our estimation strategy involves multiple steps. First, we apply a penalized least squares method to obtain preliminary estimates of the breakpoint and other high-dimensional parameters, under the assumption of sparse spillover structures. In the second step, we refine the breakpoint estimates using non-penalized least squares to improve accuracy. With the refined breakpoint estimate, we re-estimate the spillover and private effects within each regime.	To eliminate the regulation bias and enable proper inference for the private effect estimator, we employ double machine learning (DML) in the similar essence of \citet{Belloni2014} and \citet{Chernozhukov2018}. Departing from conventional DML, we propose a post-double-Lasso (least absolute shrinkage and selection operator) procedure to estimate the parameters needed for constructing the Neyman orthogonal score, and this additional post-double-Lasso step is essential for achieving a fast convergence rate of the private effect estimator. 
	
	We show that the proposed breakpoint estimator exhibits super-consistency, meaning that the probability of the estimator coinciding exactly with the true breakpoint approaches one as the cross-sectional dimension ($N$) and time-series dimension ($T$) grow. This property ensures that the estimation of unknown breakpoints has a negligible impact on the accuracy of spillover and private effect estimators. Consequently, the effects estimated under the unknown breakpoint are asymptotically equivalent to those obtained under the true breakpoint. 
	Furthermore, we rigorously analyze the asymptotic properties of the private effect estimator. When the private effect is assumed to be homogeneous across units and constant over time, the estimation can theoretically leverage $NT$ observations. However, the accuracy of this estimation is influenced by the error in estimating the spillover effects, which diminishes at a rate no faster than $\sqrt{T}$. This limitation inherently restricts the convergence rate of the private effect estimator under standard DML inference theory. Hence, we employ post-double-Lasso and establish the conditions under which the proposed private effect estimator achieves $\sqrt{NT}$-consistency.
	
	Using the proposed techniques to examine the relationship between R\&D expenditure and total factor productivity (TFP), we identify a structural break in the cross-country R\&D spillover structure in 2009, aligning closely with the onset of the global financial crisis. Notably, the spillover structure becomes markedly sparser after this break, with many R\&D spillover channels originating from continental Europe, such as those from France, Germany, and Italy, disappearing. This finding indicates that financial crises weaken technology diffusion among countries, resulting in fewer spillover channels in the post-crisis regime. 
	Further analysis reveals that the diminished R\&D spillovers following the financial crisis can be partly attributed to reduced R\&D expenditures in key European countries that had previously served as major spillover generators. This finding aligns with evidence from OECD reports \citep{OECD2012} and other studies using different datasets \citep{Babina2022,hardy-sever2021}. It is also justified by microeconomic foundation that smaller and private firms with weaker balance sheets, as well as firms facing higher financial constraints, tend to cut back disproportionately on R\&D investments during crisis episodes \citep{peia2022}.
	Our study is the first to analyze the time-varying dynamics of cross-country \emph{latent} R\&D spillovers, demonstrating that financial crises affect the structure and magnitude of these spillovers. Importantly, our estimated spillover structure, while correlated with factors such as geographical distance, language similarity, and bilateral economic activities to some extent, cannot be fully explained by these observables. This suggests that the spillover structure is complex, driven by a combination of both observable and unobservable factors.
	
 	In addition to its empirical contribution, our work also enriches multiple strands of econometric literature. First, it relates to the growing body of research on modeling and estimating unknown spillover effects, or more broadly, recovering latent networks. Several identification and estimation techniques have been developed to uncover latent network structures, including network formation models \citep[see, e.g.,][]{Goldsmith-Pinkham2013, Hsieh2016, Hsieh2020}, penalized estimation methods \citep{manresa2016,Paula:2020}, and other approaches \citep{hardy2020, Lewbel2023}, among others. Notably, these studies typically assume that the adjacency matrix remains constant over time.
 	
 	Recently, efforts have been made to relax the static assumption of networks. For instance, \citet{Comola2021} explore the identification and estimation of treatment effects when networks change after an intervention, though their framework assumes that the change points and network structures are known. In different contexts involving network or graph data, several studies have focused on break detection using various methods, such as tensor decomposition of longitudinal network data \citep{Park2020}, break detection in observed networks \citep{WangYuRinaldo2021}, and modeling smooth or abrupt changes in Markov random fields \citep{Kolar2010}. 
 	We differ from these studies by considering break detection in a regression model with a latent network structure, without pre-specifying the network models.

	Two closely related studies in terms of topic, though differing greatly in techniques, are \citet{Han2021} and \citet{Chen2022}. \citet{Han2021} model a time-varying adjacency matrix within a spatial panel data framework, capturing temporal variation by linking the binary adjacency matrix to a set of time-varying observed variables via a logistic function, along with unobserved factors modeled by a Markov process. They employ a Bayesian approach for estimation. In contrast, our model allows for both weighted and unweighted adjacency matrices and does not require specifying the functional form of network formation or the dynamics of unobservables, though it imposes sparsity on the adjacency matrix. \citet{Chen2022} focus on high-dimensional vector autoregressive (VAR) models, using penalized local linear approaches to estimate smoothly varying transition and precision matrices. We differ from this work by considering discrete structural changes. To the best of our knowledge, this paper is the first to analyze structural breaks in a panel regression model with a latent network structure.
	
	Our work also contributes to the growing literature on detecting structural breaks in panel data models. Various methods have been proposed to estimate common structural breaks in the (low-dimensional) slope coefficients, where breaks affect all units simultaneously \citep[see, e.g.,][among others]{bai2010, baltagi&kao&liu2017, qian&su2016, li&qian&su}. \citet{okui&wang2018} propose a penalized estimation technique to detect heterogeneous structural breaks, where the magnitudes and timing of breaks may vary across latent groups. \citet{Lumsdaine2022} extend this by studying break detection when group membership also changes after the break. Our paper differs from these studies by addressing breaks in high-dimensional parameters, which requires a distinct estimation approach and theoretical treatment. 
	
	The topic of structural break detection in high-dimensional models has recently received increasing attention, primarily in the statistical literature. For example, \citet{li2020} explore break detection in large covariance matrices. \citet{safikhani2020} detect break in large VAR models using fused-type penalization with a focus on computational efficiency, which differs from ours in both estimation and theoretical analysis.
	\citet{Lee2016} provide a theoretical analysis of high-dimensional break detection using Lasso under 
	Gaussian errors, while \citet{Lee2018} extend this framework to quantile regressions. 
	Our break detection and asymptotic analysis of the breakpoint estimator builds on and extends that of \citet{Lee2018} to a panel setting. A key contribution of our paper is the establishment of the super-consistency of the estimated breakpoint by exploring the cross-sectional variation, which is crucial for the inference of spillover and private effect estimators. The use of cross-sectional information also alleviates the requirement on the length of time series to a large extent. 
	
	Our work is related to the burgeoning literature on DML \citep[see][among others]{Belloni2014,Chernozhukov2018}. A key difference is that the low-dimensional parameter in our framework, namely the private effect, can be estimated using both cross-sectional and time-series observations, much larger than the number of observations used to estimate the high-dimensional parameters. This increased sample size provides the potential to achieve a faster rate of convergence for the private effect estimator. Building on the DML literature, our estimator solves the Neyman orthogonal score and involves cross-fitting. However, we further innovate by proposing the use of a post-double-Lasso procedure as a preliminary estimate to compute the Neyman orthogonal score. Remarkably, we demonstrate that the proposed private effect estimator attains $\sqrt{NT}$-consistency, even though the spillover effect estimators converge at a slower rate no faster than $\sqrt{T}$.

	The remainder of the paper is organized as follows. Section~\ref{sec:model} introduces the model. Section~\ref{sec:estimation} presents the estimation method. Section~\ref{sec:theory} examines the asymptotic properties of the proposed estimators. Extensions, including the specification of break types and the heterogeneity of private effects, are explored in Section~\ref{sec:extension}. Section~\ref{sec:simulation} evaluates the finite sample performance via simulation. Section~\ref{sec:application} presents the empirical study, and Section~\ref{sec:conclusion} provides concluding remarks.
	
	\section{Econometric models for cross-country R\&D spillover}\label{sec:model}
	
	A widely used model for linking the TFP and innovation employs the Cobb-Douglas function \citep[see, e.g.,][]{Potterie2001, Coe2009} as
	\begin{equation*}
	f_{it}=\exp(\alpha_i+u_{it})\Big(S_{it}^d\Big)^{\gamma}\left(S_{it}^f\right)^{\tau}\Big(H_{it}\Big)^{\delta},
	\end{equation*}
	where $f_{it}$ represents the TFP of country $i$ in year $t$, $\alpha_i$ accounts for unobserved country-specific heterogeneity, and $u_{it}$ denotes the idiosyncratic error term. The determinants of TFP include the domestic and foreign R\&D capital stock, represented by $S_{it}^d$ and $S_{it}^f$, respectively, as well as human capital, denoted by $H_{it}$. These determinants are allowed to correlate with $\alpha_i$, and their contributions to the TFP are captured by the parameters $\gamma$, $\tau$, and $\delta$. The foreign R\&D capital stock, $S_{it}^f$, is typically modeled as a weighted average of R\&D capital stock of other countries, i.e., $S_{it}^f=\sum_{j\neq i}\omega_{ij}S_{jt}^d$, where $\omega_{ij}$ reflects the potential cross-country R\&D spillover channels. Existing literature often assumes a specific, time-invariant spillover structure, meaning that $\omega_{ij}$ is predefined and remains constant over time. For example, the spillover effect are commonly modeled as a function of geographic distance \citep{Ertur2017}, language similarity \citep{Musolesi2007}, or a time aggregated measure of bilateral economic activities, such as foreign direct investment (FDI) \citep{Potterie2001} and international trade \citep{Coe2009}. However, in practice, R\&D spillovers are likely influenced by a combination of observable and unobservable factors, including geopolitical dynamics, historical ties, and other complex relationships. The relative contribution of domestic and foreign R\&D to the TFP also vary markedly across countries. Furthermore, both the spillover channels and effects can shift due to major economic and political events. 
	
	To address these complexities, we propose a model that allows the spillover structure to remain latent and evolve dynamically over time, while accounting for country-specific R\&D effects. Specifically, we extend the TFP Cobb-Douglas  function as
	\begin{equation}\label{eq:CD-function}
		f_{it}=\exp(\alpha_i+u_{it})\Big(S_{it}^d\Big)^{\gamma_{it}}\left(S_{it}^f\right)^{\tau_{it}}\Big(H_{it}\Big)^{\delta},
	\end{equation} 
	where $S_{it}^f=\sum_{j\neq i}\omega_{ij,t}S_{jt}^d$, with $\omega_{ij,t}$ being unknown and time-varying. The R\&D effects, $\gamma_{it}$ and $\tau_{it}$, are allowed to vary across countries and over time. 
	Taking the logarithm of~\eqref{eq:CD-function} results in the following linear model: 
	\begin{equation*}
		\log f_{it} = \alpha_i+\sum_{j=1}^N\gamma_{ij,t}\log S_{jt}^d+\delta\log H_{it}+u_{it},
	\end{equation*}
	where $\gamma_{ii,t}=\gamma_{it}$ and $\gamma_{ij,t}=\tau_{it}\omega_{ij,t}$. With a slight abuse of terminology, we refer to $\{\gamma_{ij,t}\}$ as the spillover parameter, as it encapsulates the impact of the spillover covariate, with $\gamma_{ii,t}$ capturing the heterogeneous direct effect of a country's own R\&D, and $\gamma_{ij,t}$ capturing the pair-specific spillover effect from $j$ to $i$ for $i\neq j$. In contrast, we define $\delta$ as the private effect parameter, since it reflects the influence of non-spillover covariates. Following the majority of studies on the determinants of TFP \citep[see, e.g.,][]{Miller2000, Vandenbussche2006}, we adopt the change specification, namely using the above model to explore the relationship among the changes in the variables.\footnote{Under the change specification, the slope parameters capture the short-run relationship, or put it differently, the deviations from the long-run equilibrium \citep{Coe1995, Potterie2001}.} For notational simplicity, we let $y_{it}=\Delta\log f_{it}$, $x_{jt}=\Delta\log S_{jt}^d$, and $z_{it}=\Delta\log H_{it}$. We assume that the time variation of spillover effects is characterized by structural breaks. For the demonstration purpose, we consider the case of a single break occurring at an unknown time $b$.
	Thus, we obtain the following benchmark econometric model: 
	\begin{equation}\label{eq:model}
		y_{it} = \alpha_i + \sum_{j=1}^N x_{jt} \gamma_{ij,B}\mathbf{1} (t \leq b)+ \sum_{j=1}^N x_{jt} \gamma_{ij,A}\mathbf{1} (t > b) + z_{it}\delta + u_{it},
	\end{equation}
	where $\mathbf{1}(\cdot)$ is an indicator function, $\gamma_{ij,B}$ and $\gamma_{ij,A}$ are the spillover effects before and after the break, respectively. The spillover network is typically sparse due to, e.g., distance decay \citep{Keller2013}, and thus many elements of $\{\gamma_{ij,B}, \gamma_{ij,A}\}$ are expected to be zero. While $x_{it}$ and $z_{it}$ are both scalars, the estimation method and theory can easily accommodate a vector of covariates. In Section~\ref{sec:extension}, we extend model~\eqref{eq:model} by allowing the private effect to be time-varying and heterogeneous (with a latent group pattern) and discuss the specification of breaks.   %

	To write the model in a more compact form, we denote $X_{it}:= (x_{1t}, \dots, x_{Nt})'$ as the vector of all units' spillover covariate that potentially affects unit $i$'s outcome,  $X_{it} (b):= (X_{it}'\mathbf{1} (t \leq b ) , X_{it}' \mathbf{1}(t > b))'$, and the coefficient of $X_{it} (b)$ is  $\gamma_{i}:=(\gamma_{i1, B}, \dots, \gamma_{iN, B}, \gamma_{i1, A}, \dots, \gamma_{iN, A})^\prime$. Further, denote $W_{it} (b):= (1, X_{it}(b)', z_{it})'$, whose associated coefficient is denoted as $\beta_i:= (\alpha_i, \gamma_{i}^\prime, \delta)'$ for $i=1,\ldots,N$. 
	Then, model~\eqref{eq:model} can be written as
	\begin{align}\label{eq:model-compact}
		y_{it} = W_{it} (b)' \beta_i + u_{it}.
	\end{align}
	This representation indicates that fixed effects are addressed by incorporating individual-specific dummy variables. Our objective is to estimate the unknown breakpoint $b$, the spillover and private-effect parameters in $\beta_i$, with their true values denoted by $b^0$ and $\beta_i^0$, respectively. 
	
	This model generalizes the widely studied panel data models with common structural breaks \citep[see, e.g.,][]{baltagi&qu&kao2016,baltagi&kao&liu2017,qian&su2016} by incorporating spillover effects. Unlike standard panel data models, where breaks affect a small number of parameters, our framework addresses breakpoint detection in high-dimensional parameters, adding complexity to both estimation and theoretical analysis. Additionally, our model can be viewed as a generalization of certain network models, such as network autoregressive models \citep{Zhu:2017}, by allowing unit connections to be both unknown and time-varying. Furthermore, our model extends the work of \citet{manresa2016} by permitting spillover effects to vary over time.

	\section{Estimation method}\label{sec:estimation}
	Our estimation strategy is implemented in multiple stages. First, we estimate the breakpoint. Given the estimated breakpoint, we then re-estimate the spillover effect for each regime and the private effect. Each stage consists of several sub-steps.
	
	We begin by estimating the breakpoint. To achieve this, we first obtain preliminary estimates of the breakpoint, fixed effects, and slope parameters by Lasso. Specifically, denoting $\beta:=(\beta_1',\dots,\beta_N')'$, we solve the following optimization problem:
	\begin{eqnarray}\label{eq:obj-break-point-pls}
		(\hat \beta, \hat b) = \arg\min_{\beta, b} V_{NT} (\beta, b) ,
	\end{eqnarray}
	where 
	\begin{eqnarray*}	
		V_{NT} (\beta, b) =	\frac{1}{NT}\sum_{i=1}^N \sum_{t=1}^T (y_{it} - W_{it} (b)' \beta_i )^2 + D(\beta, b),
	\end{eqnarray*}
	and $D(\beta, b)$ incorporates the $L_1$-penalty, given by
	\begin{align*}
		D(\beta, b) = \lambda_{NT,B} \sum_{i=1}^N \sum_{j=1}^N |\gamma_{ij, B} | \phi_{ij,B} + \lambda_{NT,A} \sum_{i=1}^N \sum_{j=1}^N |\gamma_{ij, A} | \phi_{ij,A}.
	\end{align*}
	Here, $\lambda_{NT,B}$ and $\lambda_{NT,A}$ are tuning parameters, and $\phi_{ij,B}$ and $\phi_{ij,A}$ are adaptive weights, which can be computed, e.g., as $\phi_{ij,B}^2=\left[\sum_{t=1}^{b}x_{jt}^2/b\right]^{-1}\quad \textrm{and}\quad \phi_{ij,A}^2=\left[\sum_{t=b+1}^Tx_{jt}^2/(T-b)\right]^{-1}$. The penalty term $D(\beta, b)$ depends on $b$ through $\phi_{ij,B}^2$ and $\phi_{ij,A}^2$. Although Lasso effectively handles high-dimensional parameters and yields a breakpoint estimator $\hat{b}$ near its true value, $\hat{b}$ is not guaranteed to be consistent.
 	
	To obtain a breakpoint estimator with desirable theoretical properties, we refine it in a second step by re-estimating the breakpoint using least squares, based on the preliminary estimator $\hat{\beta}$,\footnote{In this refining step, we may also use the post-Lasso estimator of $\beta$, namely re-estimating $\beta$ based on the selected nonzero spillover effects from~\eqref{eq:obj-break-point-pls}. While this approach is expected to yield a breakpoint estimator with properties similar to those stated below, it requires adjustments in the theoretical analysis. To maintain theoretical transparency, we update the breakpoint estimator using $\hat{\beta}$ as in~\eqref{eq:obj-break-point-refined}.} without imposing any penalty terms, i.e.,
	\begin{eqnarray}\label{eq:obj-break-point-refined}
		\tilde b = \arg\min_{b} \frac{1}{NT}\sum_{i=1}^N \sum_{t=1}^T (y_{it} - W_{it} (b)' \hat \beta_i )^2.
	\end{eqnarray}
	\begin{remark}
	The choice of adaptive weights in the first step is less critical, because the break can be (super-)consistently estimated as long as $\beta$ lies within a neighborhood of its true value (see the proof of Theorem \ref{th-b-superconsistency}). Thus, it is even feasible to use Lasso with non-adaptive weights in~\eqref{eq:obj-break-point-pls} or other penalized estimation methods. Naturally, a more precise estimate of $\beta$ is expected to enhance the accuracy of $\tilde{b}$	in finite samples.	
	\end{remark}
	
	In the next stage, with the updated breakpoint estimate, we re-estimate the spillover effects of each regime using adaptive Lasso and estimate the private effect using DML in a similar spirit of \citet{CHHW2021} \citep[see also][]{Belloni2014, manresa2016, Chernozhukov2018}. This approach helps mitigate the bias of the private-effect estimator introduced by penalized estimation.
	
	To implement DML, we further split each of the two regimes into two sub-samples: the main sample and the auxiliary sample. Let $\mathcal{T}_m$ be the main sample containing both the pre- and post-break regimes and $\mathcal{T}_a$ be the auxiliary sample. We apply post double Lasso separately to each sample. Specifically, using the auxiliary sample, we perform sparse regressions of $z_{it}$ and $y_{it}$ on $X_{it}(\tilde{b})$, respectively, i.e., 
	\begin{equation}\label{eq:double-lasso}
		z_{it}=\eta_i+X_{it}(\tilde{b})^\prime\nu_{i}+e_{it},\quad y_{it} = \alpha_i^* + X_{it} (\tilde b)'\gamma_{i}^* + \tilde u_{it}, \quad t\in\mathcal{T}_a, 
	\end{equation}
	where $\eta_i$ and $\alpha_i^*$ are the individual fixed effect, $\nu_{i}$ and $\gamma^*_i$ are both $2N\times 1$ vectors, and $e_{it}$ and $\tilde{u}_{it}$ are the error terms. We assume that $\nu_i$ is sparse, i.e., $\sum_{j\neq i}\textbf{1}(\nu_{i}\neq 0)<s_\nu$ for a positive and small constant $s_\nu$, and that $\E(e_{it}|X_{i1},\ldots,X_{iT})=0$. 
	This relationship between $z_{it}$ and $X_{it}(\tilde{b})$ implies that $\alpha_i^* = \alpha_i + \eta_i'\delta$, $\gamma_i^* = \gamma_i + \nu_i'\delta$ and $\tilde u_{it} = u_{it} + e_{it}'\delta$. The sparsity of $\gamma_i$ and $\nu_i$ implies that $\gamma_i^*$ is also sparse. 
	The double Lasso procedure identifies the sets of relevant covariates for each of the two regressions in~\eqref{eq:double-lasso} using adaptive Lasso, with the adaptive weights computed based on the assumed structure of the error covariance. For example, if the errors are heteroscedastic and uncorrelated with $x_{it}$, the adaptive weights can be calculated as
	$\varphi_{ij}^2=1/|\mathcal{T}_a|\left(\sum_{t\in\mathcal{T}_a}x_{jt}\hat{e}^*_{it}\right)^2$,
	where $|\mathcal{T}_a|$ denotes the cardinality of $\mathcal{T}_a$, and $\hat{e}^*_{it}$ is a consistent estimator of $e_{it}$, obtained from a preliminary estimation using conservative weights that depend only on $x_{it}$ and $z_{it}$. 
	If the errors are both heteroscedastic and autocorrelated, a HAC-type estimator may be used instead; see \citet{manresa2016} for a more detailed discussion. Let $\hat{\mathbf{s}}_i^{v*}$ and $\hat{\mathbf{s}}_i^{g*}$ be the sets of covariates selected by the adaptive Lasso for the two models in~\eqref{eq:double-lasso}. 
	Then we combine these sets by taking their union to construct the final set of covariates used for estimation, namely $\hat{\mathbf{s}}_i :=  \hat{\mathbf{s}}_i^{v*} \cup  \hat{\mathbf{s}}_i^{g*}$. Denote $X_{it, \mathbf{s}_i}(b)$ as the subvector of $X_{it}(b)$ containing the elements indexed by $\mathbf{s}_i$. Finally, we perform a post-Lasso step, re-estimating the two models in~\eqref{eq:double-lasso} using OLS with the covariates selected in $ \hat{\mathbf{s}}_i$, i.e., 
	\begin{align}\label{eq:double-lasso-post}
				z_{it} = \eta_i + X_{it,\hat{\mathbf{s}}_i} (\tilde b)'\nu_{i,\hat{\mathbf{s}}_i} + e_{it},\quad y_{it} = \alpha_i^* + X_{it,\hat{\mathbf{s}}_i} (\tilde b)'\gamma_{i,\hat{\mathbf{s}}_i}^* +   \tilde u_{it},\quad t\in \mathcal{T}_a.
	\end{align}
	The resulting estimators are denoted as $\tilde{\alpha}_i^a$, $\tilde{\gamma}_i^a$, $\tilde{\eta}_i^a$ and $\tilde{\nu}_i^a$, where the superscript ``$a$'' represents estimates obtained from the auxiliary sample.  
	
	With these estimates from the auxiliary sample, we estimate the private effect $\delta$ with the main sample $\mathcal{T}_m$ based on the Neyman orthogonal score. Specifically, we solve the following equationfor $\delta$
	\begin{align*}
		\sum_{i=1}^N \sum_{t\in \mathcal{T}_m} \left( y_{it} - \tilde{\alpha}_i^a - X_{it} (\tilde b)' \tilde{\gamma}_i^a - (z_{it}  - \tilde{\eta}_i^a - X_{it} (\tilde b)'\tilde{\nu}_i^a )' \delta \right) 
		\left(z_{it} - \tilde{\eta}_i^a - X_{it} (\tilde b)'\tilde{\nu}_i^a \right) =0,
	\end{align*}
	and can obtain a closed form solution 
	\begin{align}\label{eq:tilde-delta}
		\tilde \delta^m = &
		\left( \sum_{i=1}^N \sum_{t\in \mathcal{T}_m} \left(z_{it} - \tilde{\eta}_i^a - X_{it, \hat{\mathbf{s}}_i} (\tilde b)'\tilde{\nu}_i^a \right)  \left(z_{it} - \tilde{\eta}_i^a - X_{it, \hat{\mathbf{s}}_i} (\tilde b)'\tilde{\nu}_i^a \right)' \right)^{-1} \\ \nonumber
		& \times   \sum_{i=1}^N \sum_{t\in \mathcal{T}_m}  \left(z_{it}  - \tilde{\eta}_i^a - X_{it, \hat{\mathbf{s}}_i} (\tilde b)'\tilde{\nu}_i^a \right) \left( y_{it}  - \tilde{\alpha}_i^a - X_{it, \hat{\mathbf{s}}_i} (\tilde b)' \tilde{\gamma}_i^a  \right), 	
	\end{align}
	where the superscript ``$m$'' in $\tilde \delta^m$ indicates that the estimator is obtained with the main sample. 
	We then swap the role of the main and auxiliary samples to obtain $\tilde{\delta}^a$. 
	The final estimator of $\delta$ is obtained as the average of $\tilde\delta^m$ and $\tilde\delta^a$, namely $\tilde \delta = ( \tilde \delta^m + \tilde \delta^a)/2$.
	In the final step, we estimate the spillovers effects using post-Lasso with $y_{it}-z_{it}^\prime\tilde{\delta}$ as the outcome variable. 

	We summarize the estimation procedure in the following algorithm. 
	
	\begin{algorithm}
		\label{alg:main}\
		\begin{itemize} 
			\item[]\textbf{Step 1}: Estimate the coefficient parameters and breakpoint using (adaptive) Lasso as~\eqref{eq:obj-break-point-pls}. 
			
			\item[]\textbf{Step 2}: Refine the breakpoint estimate using least squares as in~\eqref{eq:obj-break-point-refined}, with the preliminary coefficient estimates obtained in Step 1. Use this refined estimate to segment the sample into pre- and post-break regimes.
			
			\item[]\textbf{Step 3}: Divide the time periods in each of the pre- and post-break regimes equally into main and auxiliary sub-samples.
			
			\item[]\textbf{Step 4}: Use the auxiliary sample to estimate the two models in~\eqref{eq:double-lasso} via adaptive Lasso. Combine the sets of covariates with nonzero estimated coefficients from both models. Re-estimate these two models using OLS with the combined set of covariates as in~\eqref{eq:double-lasso-post}.
			
			\item[]\textbf{Step 5}: Estimate the private effect $\delta$ with the main sample as in~\eqref{eq:tilde-delta}. 
			
			\item[]\textbf{Step 6}:  Swap the role of the main and auxiliary samples. Re-do Step 4 using the main sample and Step 5 using the auxiliary sample. Compute the double machine learning estimate of the private effect as $\tilde{\delta}=(\tilde{\delta}^{m}+\tilde{\delta}^a)/2$, where $\tilde{\delta}^{m}$ and $\tilde{\delta}^a$ are the estimate obtained from the main and auxiliary sample, respectively. 
			
			\item[]\textbf{Step 7:} Estimate the spillover effects by regressing $y_{it}-z_{it}^\prime\tilde{\delta}$ on $X_{it}(\tilde{b})$ using post Lasso. 
		\end{itemize}
	\end{algorithm}
	\begin{remark}
		Our double Lasso estimation in Step 4 differs from double selection described by \citet{Belloni2014} and \citet{manresa2016}. Similar to the standard DML procedure \citep[see, e.g.,][]{Chernozhukov2018}, our method employs cross-fitting, dividing the sample into a main and an auxiliary subsample, which are used to estimate the high-dimensional spillover effect and the finite-dimensional private effect $\delta$ separately. We split each of the pre- and post-break samples into two subsamples for the fact that $\delta$ is constant across the two regimes. Unlike existing DML procedures, including \citet{CHHW2021}, our method introduces an additional post-Lasso step, namely estimating~\eqref{eq:double-lasso-post} with double selected variables using OLS. This step mitigates the uncertainty of the variable selection process, enabling the derivation of the asymptotic distribution of $\tilde{\delta}$ and facilitating statistical inference on $\delta$. A more detailed theoretical discussion is provided in Section \ref{sec:dml-theory}.
	\end{remark}
	
	\section{Asymptotic properties}\label{sec:theory}

	This section studies the asymptotic properties of the proposed method. Our key result establishes the super-consistency of the breakpoint estimator defined in \eqref{eq:obj-break-point-refined}. This super-consistency property allows us to treat the breakpoint as known when justifying the penalized estimator of spillover effects and the DML estimator of the private effect. We then analyze the asymptotic distribution of the DML estimator for private effects. We use the superscript ``0'' to denote true values, e.g., $\beta_i^0$ represents the true value of $\beta_i$. 
	
	\subsection{Consistency of breakpoint estimator}
	Some regularity conditions are needed. Recall that $W_{it} (b):= (1, X_{it}(b)', z_{it} ')'$, and denote $W_{it} := (1, X_{it}', z_{it}')'$. 				
	
	\begin{asu}[Data and parameters]\ \\
		\label{a: basic}
		\vspace*{-0.7cm}
		\begin{enumerate}
			\item[] \subasu\label{as-exogeneity}
			$E[W_{it}(b) u_{it}]=0$ for all $i$, $t$, and any choice of $b$. For any $t$ and some constant $M$, $E[ (\sum_{i=1}^N u_{it} X_{it}'(\gamma_{i,A}^0 - \gamma_{i,B}^0)/N)^2] < M / N$.
			
			\item[] \subasu \label{as-tail}
			Let $w_{it}$ be any element of $(x_{it}, z_{it}, u_{it})$. There exist constants $c_1>0$ and $d_1>0$ such that $\sup_{i,t} P(|w_{it}| > a) \leq \exp (- (a/c_1)^{d_1})$ for sufficiently large $a$. 
			
			\item[] \subasu\label{as-mixing}
			$\{ (x_{it}, z_{it},u_{it})\}_{i=1}^N$ is a strong mixing sequence over $t$ with the mixing coefficient $ a_N[t] \leq A \exp (-c_2 t^{d_2})$ for some strictly positive constants $A$, $c_2$, and $d_2$ that are independent of $N$. 
			
			\item[] \subasu\label{as-compactness} 
			The parameter space is $\mathcal{B} = \{ \{\beta_i\}_{i=1}^N: \sup_{i,k} |\beta_{ik} |< B\}$, where $\beta_{ik}$ is the $k$-th element of $\beta_i$ and $B>0$ is a constant independent of both 
			$N$ and $T$.
			
			\item[] \subasu \label{as-max-eigenvalue}
			Let $\lambda_{\max}(A)$ be the maximum eigenvalue of a matrix $A$. There exists some positive constant $M$ such that $\max_{1 \leq i \leq N }\max_{1 \leq t \leq T}\lambda_{\max} (W_{it}W_{it}') < M$ with probability one. 
		\end{enumerate}
	\end{asu}
	Assumption~\ref{as-exogeneity} requires that the covariates are exogenous. Given the construction of $W_{it}(b)$, this assumption can be decomposed into two parts: the error term having a zero mean, and the absence of correlation between the error term and any covariate, i.e., $E(u_{it})=0$ and $E(x_{jt}u_{it}) =0$ for any $j=1,\ldots,N$. Note that this assumption only imposes restrictions on the contemporaneous correlation between the error term and covariates, but not strict exogeneity or predeterminedness of the covariates. It also limits the degree of cross-sectional correlation by ensuring that the variance of the cross-sectional average of $u_{it} X_{it}$ is of order $1/N$.
	Assumption \ref{as-tail} requires thin tails for the distributions of both the covariates and the error term, implying the existence of finite moments of any order.
	Assumption \ref{as-mixing} guarantees a limited degree of serial dependence in the covariates and the error term.  
	Assumption \ref{as-compactness} resembles the compactness assumption commonly applied to the parameter space in asymptotic analysis of extremum estimators. However, due to the high-dimensional nature of the problem, it is expressed differently in this context.
	Assumption \ref{as-max-eigenvalue} bounds the maximum eigenvalue of $W_{it}W_{it}'$ uniformly over $i$ and $t$. This condition precludes scenarios where any element of $W_{it}$ diverges, even as the dimension of $W_{it}$ grows with the sample size.

	Conditions related to the structural break are also required. Let $\beta_{i,B} := (\alpha_i, \gamma_{i,B}', \delta')'$ and $\beta_{i,A} := (\alpha_i, \gamma_{i,A}', \delta')'$ denote the parameter vectors before and after the break, respectively.

	\begin{asu}[Structural break]\ \\
		\label{a: break}
		\vspace*{-0.7cm}
		\begin{enumerate}
			\setcounter{subassumption}{0}
			\item[] \subasu \label{as-semi-stationary}
			$\max_{ b^0 < t \leq b} E \left[ W_{it} ' (\beta_{i,B}^0 - \beta_{i,B} )\right]^2  < C/b^0 \sum_{t=1}^{b^0} E\left[W_{it} ' (\beta_{i,B}^0 - \beta_{i,B}) \right]^2  $  for any $b > b^0$, and $\max_{ b \leq t < b^0} E \left[W_{it} ' (\beta_{i,A}^0 - \beta_{i,A} )\right]^2  < C/(T-b^0) \sum_{t=b^0+1}^{T} E \left[W_{it} ' (\beta_{i,A}^0 - \beta_{i,A} )\right]^2 $ for any $b< b^0$, for some positive constant $C$ which does not depend on $N$ nor $T$.
			
			\item[] \subasu \label{as-break-identification}
			$\min_t 1/N\sum_{i=1}^N E \left[ X_{it} ' (\gamma_{i,A}^0 - \gamma_{i,B}^0) \right]^2   \geq \underline{m} $ from some constant $\underline{m}>0$ which does not depend on $N$ nor $T$.
			
			\item[] \subasu \label{as-break-date}
			$\tau_{\min} < b^0/T < 1-\tau_{\min}$ for some $\tau_{\min} >0 $ for any $T$.
		\end{enumerate}			
	\end{asu}
	Assumption~\ref{as-semi-stationary} is essential for establishing the asymptotic properties of the preliminary penalized coefficient estimators derived from~\eqref{eq:obj-break-point-pls}. It ensures that the error introduced by estimating the unknown breakpoint has a limited impact on the coefficient estimators. This assumption holds, for instance, when $W_{it}$ is strictly stationary. A similar assumption is also used in \citeauthor{Lee2018} (2018, Assumption A.6).
	Assumption~\ref{as-break-identification} is needed for identifying the breakpoint. It requires that the (true) coefficients before and after the breakpoint differ sufficiently, and that a significant number of units are affected by the structural break.
	Finally, Assumption~\ref{as-break-date} excludes structural breaks near the sample boundaries. It ensures sufficiently long pre- and post-break periods, allowing precise estimation of spillover effects in both regimes.
	
	Since the estimation procedure involves penalization, assumptions regarding the tuning parameters and adaptive weights are necessary. Define the following quantities:
	\begin{align*}
		D_{NT} =& \max ( \lambda_{B,NT} \max_{i,j} \phi_{B,ij}, \lambda_{A,NT} \max_{i,j} \phi_{A,ij}), \\
		D_{\min} =& \min ( \lambda_{B,NT} \min_{i,j} \phi_{B,ij}, \lambda_{A,NT} \min_{i,j} \phi_{A,ij}).
	\end{align*}
	Let $p$ be the dimension of $W_{it} (b)$, which is $p=2 + 2N$ in our benchmark model~\eqref{eq:model}. The number of slope and fixed effects parameters is $N + 2N^2 + 1 $.
	Let $J(\beta)$ denote the index set of the nonzero elements of $\beta \in \mathcal{B}$, and define $s:=|J(\beta^0)|$ as the cardinality of $J(\beta^0)$, i.e., the number of nonzero elements in the true coefficient vector $\beta^0$. Let $\beta_J$ be the vector that matches $\beta$ in the entries indexed by $J(\beta^0)$ and has all other entries set to zero. Similarly, let $\beta_{J^c}$ be the vector that matches $\beta$ in the entries indexed by the complement of $J(\beta^0)$, with all other entries set to zero. Note that $\beta = \beta_{J} + \beta_{J^c}$ and that $\beta^0 = \beta_J^0$. 
	Finally, let $|\cdot |_1$ denote the $L_1$-norm and $\Vert \cdot \Vert_2$ denote the $L_2$ (Euclidean) norm.
	\begin{asu}[Penalty]\ \\
		\label{a: penalty}
		\vspace*{-0.7cm}
		\begin{enumerate}
			\setcounter{subassumption}{0}
			\item[] \subasu \label{as-penalty-order}
			$s \geq \sqrt{N}$ for $N$ sufficiently large; $D_{NT} \sqrt{T} / \log (N p^2) \to \infty $ as $N,T \to \infty$ with probability approaching one and $sD_{NT} N^{c} \to 0$ for some $c>0$; $D_{\min} >0$ for any $N,T$ with probability one;

			\item[] \subasu  \label{as-min-eigenvalue}
			There exists a constant $D_R$ such that $D_{NT} / D_{\min} < D_R$ with probability approaching one. There exists $\rho >0 $, such that for all $\beta \in \mathcal{B} $ satisfying $|\beta_{J^c} |_1 \leq 5 D_R |\beta_{J}|_1 $ it holds that $ \rho \sum_{i=1}^N \vert\beta_{R,i} \vert_1^2 /N \leq  \min_t \sum_{i=1}^N \beta_{R,i}'(E(W_{it}W_{it}') )\beta_{R,i} /N $ for $R=A,B$.
		\end{enumerate}
	\end{asu}
	Assumption \ref{as-penalty-order} imposes restrictions on the degree of sparsity, the relative scales of the cross-sectional and time dimensions, and the order of the tuning parameter in the penalty term in~\eqref{eq:obj-break-point-pls}. This assumption comprises three key parts. The first part requires that the average number of nonzero coefficients per unit is sufficiently large. The second part imposes lower and upper bounds for $D_{NT}$. On the lower bound, $D_{NT}$ must be sufficiently large, with an order greater than $\log (Np^2)/\sqrt{T}$, to ensure sparsity in the estimated coefficients. 
	On the upper bound, $D_{NT}$ is constrained by $s D_{NT} N^{c} \to 0$, where the constant $c$ can be arbitrarily small. For Gaussian $x_{it}$, $N^c$ may be replaced with $\log N$. If the support of $x_{it}$ is uniformly bounded  over $i$ and $t$, $c=0$ is also valid. Overall, this part of condition ensures that $s \cdot (N^c \log (Np^2) / \sqrt{T}) \to 0$. Given $s \geq \sqrt{N}$ and $p = O(N^2)$, a necessary condition is $(\sqrt{N/T}) N^{c}\log N  \to 0$, implying that the time series length must exceed the number of units. 
	Finally, the third part requires a lower bound on $D_{\min}$, akin to the standard Lasso. 
	Assumption~\ref{as-min-eigenvalue} is essentially a compatibility condition commonly used in the Lasso literature \citep[see, e.g.,][]{Lee2018}, but with two notable differences. First, due to the presence of a structural break, the condition is applied separately for the two regimes. Second, the compatibility condition is imposed on the cross-sectional average. A sufficient condition for this compatibility is that $\min_t \min_i \lambda_{\min} (E(W_{it}W_{it}') )$ is bounded away from zero, where $\lambda_{\min}(\cdot)$ returns the minimum eigenvalue.	Note that while $\sum_{t=1}^T W_{it}W_{it}'$ may be singular, especially when $W_{it}$ is high-dimensional, $E(W_{it}W_{it}')$ can still be non-singular. 
	
	Our goal is to analyze the properties of the refined breakpoint estimator. To this end, we first study the behavior of the preliminary estimators defined in~\eqref{eq:obj-break-point-pls}, in a similar spirit of \citet{Lee2018}. Specifically, we evaluate the excess risk of these estimators, defined as:
	\begin{align*}
		R (\beta, b):= \frac{1}{NT}\sum_{i=1}^N \sum_{t=1}^T E \left(y_{it} - W_{it} (b)' \beta_i \right)^2   - \frac{1}{NT}\sum_{i=1}^N \sum_{t=1}^T E\left(y_{it} - W_{it} (b^0)' \beta_i^0  \right)^2.
	\end{align*}
	Here, the expectations are taken with respect to $y_{it}$ and $W_{it}(b)$, while treating $b$ and $\beta_i$ as fixed.
	The following lemma establishes the order of this excess risk.
	\begin{lemma}\label{theorem:risk-consistency}
		Suppose that Assumptions \ref{as-exogeneity}--\ref{as-compactness} and \ref{as-penalty-order} hold. Then,
		\begin{align*}
			R( \hat \beta, \hat b) = O_p \left( s D_{NT} \right).
		\end{align*}
	\end{lemma}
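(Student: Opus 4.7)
The plan is to prove a prediction-error oracle inequality by the standard basic-inequality argument from the Lasso literature, adapted to cover both the estimated breakpoint and the panel dimension. First I observe that by Assumption~\ref{as-exogeneity}, $E[u_{it}W_{it}(b)'\beta_i]=0$ for any $(\beta,b)$ held fixed, so the cross term between $u_{it}$ and the prediction cancels in expectation, yielding
\begin{align*}
R(\beta,b) = \frac{1}{NT}\sum_{i=1}^N\sum_{t=1}^T E\bigl[W_{it}(b^0)'\beta_i^0 - W_{it}(b)'\beta_i\bigr]^2.
\end{align*}
Thus it suffices to bound the empirical analogue of this quantity at $(\hat\beta,\hat b)$ and then transfer to the population version. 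By optimality of $(\hat\beta,\hat b)$ in~\eqref{eq:obj-break-point-pls} together with $y_{it}=W_{it}(b^0)'\beta_i^0+u_{it}$, expansion and rearrangement give the basic inequality
\begin{align*}
\frac{1}{NT}\sum_{i,t}\bigl(W_{it}(\hat b)'\hat\beta_i - W_{it}(b^0)'\beta_i^0\bigr)^2 \le \frac{2}{NT}\sum_{i,t} u_{it}\bigl(W_{it}(\hat b)'\hat\beta_i - W_{it}(b^0)'\beta_i^0\bigr) + D(\beta^0,b^0) - D(\hat\beta,\hat b).
\end{align*}

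Next I would decompose the noise cross-term into the elementary empirical moments $T^{-1}\sum_{t\le b} u_{it}x_{jt}$, $T^{-1}\sum_{t>b} u_{it}x_{jt}$, $T^{-1}\sum_t u_{it}z_{it}$, and $T^{-1}\sum_t u_{it}$. Under the sub-Weibull tail and strong mixing conditions of Assumptions~\ref{as-tail}--\ref{as-mixing}, a Bernstein-type maximal inequality for mixing sequences, combined with a union bound over the $N(N+2)$ covariate labels and the $T$ candidate breakpoints, gives $\max_{i,j,b}|T^{-1}\sum_t u_{it}x_{jt}\mathbf{1}(t\le b)| = O_p(N^c\log(Np^2)/\sqrt T)$, with analogous bounds for the other moments. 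By Assumption~\ref{as-penalty-order}, this noise level is strictly of smaller order than $D_{\min}$. Since $\sum_{i,j}|\gamma_{ij,R}| \le D(\beta,\cdot)/D_{\min}$ for $R\in\{A,B\}$ by the definition of $D_{\min}$, the noise cross-term is bounded by $\varepsilon_{NT}\cdot D(\hat\beta-\beta^0,\hat b)$ with $\varepsilon_{NT}=o_p(1)$. Using $D(\beta^0,b)\le B\, s\, D_{NT}$ from compactness (Assumption~\ref{as-compactness}) together with $|J(\beta^0)|=s$, and the triangle inequality $D(\hat\beta-\beta^0,\hat b)\le D(\hat\beta,\hat b)+D(\beta^0,\hat b)$, substitution into the basic inequality and rearrangement give
\begin{align*}
\frac{1}{NT}\sum_{i,t}\bigl(W_{it}(\hat b)'\hat\beta_i - W_{it}(b^0)'\beta_i^0\bigr)^2 + (1-o_p(1))\, D(\hat\beta,\hat b) = O_p(sD_{NT}),
\end{align*}
which simultaneously controls the empirical prediction error and, as a byproduct, the realized penalty $D(\hat\beta,\hat b)$.

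To transfer the empirical prediction error to the population excess risk $R(\hat\beta,\hat b)$, I write the difference as a centered quadratic form in $W_{it}$ evaluated at the random pair $(\hat\beta,\hat b)$. Using that $D(\hat\beta,\hat b)=O_p(sD_{NT})$ confines $\hat\beta$ to a sparse $\ell_1$-ball and that $\hat b$ takes only $T$ values, sub-Weibull concentration for the empirical second moments of $W_{it}(b)W_{it}(b')'$ applied uniformly in $(b,b')$ (with the union-bound cost absorbed by Assumption~\ref{as-penalty-order}) yields a residual of order $o_p(sD_{NT})$, completing the bound $R(\hat\beta,\hat b)=O_p(sD_{NT})$.

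The main obstacle is the joint randomness of $(\hat\beta,\hat b)$: unlike fixed-design Lasso, both the design $W_{it}(b)$ and the adaptive weights $\phi_{ij,B},\phi_{ij,A}$ depend on $b$, so every concentration bound must be taken uniformly over the $T$ candidate breakpoints, and the empirical-to-population transfer must likewise be uniform over the sparse cone in which $\hat\beta$ is confined. The key enabler is the tuning-parameter order condition $D_{NT}\sqrt T /\log(Np^2)\to\infty$ in Assumption~\ref{as-penalty-order}, which guarantees that the $\log T$ union-bound cost and the sub-Weibull $N^c$ inflation are both absorbed and that the noise remains strictly smaller than the penalty even after all uniformity is accounted for.
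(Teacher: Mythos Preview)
Your overall strategy---derive the basic inequality for the \emph{empirical} prediction error, control the noise cross-term by the penalty, then transfer to the population excess risk---is a legitimate route and differs from the paper's argument. The paper works directly with $R(\hat\beta,\hat b)$ via the decomposition
\[
R(\hat\beta,\hat b)\le\bigl[\nu_{NT}(\beta^0,\hat b)-\nu_{NT}(\hat\beta,\hat b)\bigr]+\bigl[\nu_{NT}(\beta^0,b^0)-\nu_{NT}(\beta^0,\hat b)\bigr]+D(\beta^0,b^0)-D(\hat\beta,\hat b),
\]
bounds the first bracket by $C\tfrac{\log(Np^2)}{\sqrt T}\tfrac{p}{N}\,|\hat\beta-\beta^0|_1$, shows the second is $O_p(1/\sqrt{NT})$, and then closes the argument by a case split on whether $|\hat\beta-\beta^0|_1\le|\beta^0|_1$. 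Your approach instead controls $D(\hat\beta,\hat b)$ and the empirical error simultaneously and transfers afterward; this buys a slightly more transparent ``cone/penalty absorbs noise'' picture, while the paper's case split avoids the separate empirical-to-population transfer.

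There is, however, a genuine gap in your noise bound. The cross-term is
\[
\frac{2}{NT}\sum_{i,t}u_{it}\bigl(W_{it}(\hat b)'\hat\beta_i-W_{it}(b^0)'\beta_i^0\bigr)
=\frac{2}{NT}\sum_{i,t}u_{it}\,W_{it}(\hat b)'(\hat\beta_i-\beta_i^0)
+\frac{2}{NT}\sum_{i,t}u_{it}\bigl[X_{it}(\hat b)-X_{it}(b^0)\bigr]'\gamma_i^0.
\]
Only the first piece is controlled by $\varepsilon_{NT}\,D(\hat\beta-\beta^0,\hat b)$; the second is a \emph{breakpoint-mismatch} term that involves $\gamma_i^0$, not $\hat\gamma_i-\gamma_i^0$, and therefore cannot be absorbed by $D(\hat\beta-\beta^0,\hat b)$. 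You must bound it separately---either by $O_p(1/\sqrt{NT})$ using the second part of Assumption~\ref{a: basic}\ref{as-exogeneity}, or by $(\max_{i,j,b}|T^{-1}\sum_t u_{it}x_{jt}\mathbf 1(\cdot)|)\cdot|\gamma^0|_1/N=O_p(s\log(Np)/ (N\sqrt T))$---both of which are $O_p(sD_{NT})$, so your rearrangement survives once this extra additive term is inserted. The same remark applies to the unpenalized coordinates $\alpha_i,\delta$: their contribution to the cross-term is not governed by $D(\cdot)$ and needs the compactness bound plus $s\ge\sqrt N$ (or $s\ge N$, since the fixed effects are counted in $s$) to be absorbed into $O_p(sD_{NT})$. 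Two smaller points: the $N^c$ factor you insert in the maximal moment bound is not needed there (the paper's Fuk--Nagaev-type lemma gives $\max_{i,j,b}|T^{-1}\sum_t u_{it}x_{jt}\mathbf 1(\cdot)|=O_p(\log(Np)/\sqrt T)$ without it; the $N^c$ in Assumption~\ref{a: penalty}\ref{as-penalty-order} enters elsewhere), and your transfer residual is $O_p(sD_{NT})$ rather than $o_p(sD_{NT})$---which is still sufficient for the conclusion.
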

	This lemma indicates that the preliminary estimators for the coefficients and the breakpoint lie within an appropriately defined neighborhood of their true values with probability approaching one. However, it does not guarantee that the preliminary breakpoint estimator coincides with its true value with high probability.
	
	Next, we analyze the properties of the least-squares breakpoint estimator defined in~\eqref{eq:obj-break-point-refined}. Leveraging the fact that the preliminary coefficient estimator resides in the neighborhood of its true value, we build on the (low-dimensional) panel structural break literature \citep[see, e.g.,][]{Lumsdaine2022} to study the behavior of the refined breakpoint estimator.
	
	\begin{theorem}\label{th-b-superconsistency}
		Suppose that Assumptions \ref{as-exogeneity}--\ref{as-max-eigenvalue} are satisfied. Then, as $N,T \to \infty$, 
		it holds that 
		\begin{align*}
			\Pr(\tilde b = b^0 ) \to 1.
		\end{align*}
	\end{theorem}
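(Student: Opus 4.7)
The plan is to prove exact recovery, $\Pr(\tilde b = b^0) \to 1$, by showing that the refined criterion
\[
\tilde Q(b) := \frac{1}{NT}\sum_{i=1}^N\sum_{t=1}^T \bigl(y_{it} - W_{it}(b)'\hat\beta_i\bigr)^2
\]
is strictly larger at every integer $b \neq b^0$ than at $b^0$, uniformly on the candidate set. Since $b$ ranges over at most $T$ values, such a strict-separation bound combined with a union bound delivers the conclusion. Introduce $a_{it} := W_{it}(b)'\hat\beta_i - W_{it}(b^0)'\hat\beta_i$ and $\Delta_i := \hat\beta_i - \beta_i^0$. Substituting $y_{it} = W_{it}(b^0)'\beta_i^0 + u_{it}$ and using the identity $(y-c)^2 - (y-d)^2 = (d-c)(2y-c-d)$ gives
\[
\tilde Q(b) - \tilde Q(b^0) \;=\; \frac{1}{NT}\sum_{i,t}\Bigl[ a_{it}^2 \;+\; 2\,a_{it}\, W_{it}(b^0)'\Delta_i \;-\; 2\,a_{it}\, u_{it}\Bigr].
\]
By the block structure of $W_{it}(b)$ and $\beta_i$, $a_{it}$ vanishes outside the window between $b$ and $b^0$, and on that window $a_{it} = \pm X_{it}'(\hat\gamma_{i,B} - \hat\gamma_{i,A})$; only $N\,|b-b^0|$ summands contribute to each inner sum.

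For the signal, split $a_{it} = X_{it}'(\gamma_{i,B}^0 - \gamma_{i,A}^0) + X_{it}'(\Delta\gamma_{i,B} - \Delta\gamma_{i,A})$. The oracle piece $\sum_{t\in(b^0,b]}\sum_i [X_{it}'(\gamma_{i,B}^0 - \gamma_{i,A}^0)]^2$ admits the lower bound $c_1 N(b-b^0)$ with probability tending to one, by combining the separation bound in Assumption \ref{as-break-identification} with a cross-sectional law of large numbers justified by Assumptions \ref{as-exogeneity}--\ref{as-mixing}. The residual piece from $\Delta\gamma$ is a prediction-error quantity that Lemma \ref{theorem:risk-consistency} forces to $O_p(s D_{NT})$, which via the compatibility condition in Assumption \ref{as-min-eigenvalue} yields an $L_2$-type bound on $\Delta_i$; the resulting contribution is $o_p(N(b-b^0))$ uniformly in $b$ since $s D_{NT} \to 0$ by Assumption \ref{as-penalty-order}. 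The case $b < b^0$ is symmetric, relying on Assumption \ref{as-semi-stationary} to equate the behaviour of $W_{it}$ in the misassigned regime with that of the genuine pre-break block.

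For the stochastic noise, the leading term $\sum_{t\in(b^0,b]}\sum_i u_{it}\,X_{it}'(\gamma_{i,B}^0 - \gamma_{i,A}^0)$ has, pointwise in $t$, cross-sectional variance bounded by $MN$ by Assumption \ref{as-exogeneity}, hence inner size $O_p(\sqrt{N})$. The strong mixing of Assumption \ref{as-mixing} together with the sub-exponential tails of Assumption \ref{as-tail} then support a Bernstein- or Hajek--Renyi-type maximal inequality, which promotes this to
\[
\max_{b \neq b^0}\,\Bigl|\sum_{t\in(b^0,b]}\sum_i u_{it}\,X_{it}'(\gamma_{i,B}^0 - \gamma_{i,A}^0)\Bigr| \;=\; O_p\bigl(\sqrt{TN\log T}\bigr).
\]
The remaining cross-terms, $\sum a_{it} u_{it}$ with $\Delta\gamma$ replacing $\gamma^0$ and $\sum a_{it}W_{it}(b^0)'\Delta_i$, are controlled by Cauchy--Schwarz against Lemma \ref{theorem:risk-consistency} and the spectral bound in Assumption \ref{as-max-eigenvalue}, yielding $o_p(N(b-b^0))$ uniformly in $b$. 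Collecting the pieces and dividing $\tilde Q(b) - \tilde Q(b^0)$ by the signal scale $N(b-b^0)/(NT)$, the signal is bounded below by $c_1 > 0$ while every other piece is $o_p(1)$; hence $\tilde Q(b) > \tilde Q(b^0)$ for all $b \neq b^0$ on an event of probability tending to one.

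The main obstacle is handling the estimation-error term $\sum_{t\in(b^0,b]}\sum_i a_{it}\,W_{it}(b^0)'\Delta_i$ uniformly in $b$: the only control available is the \emph{global} prediction-risk bound of Lemma \ref{theorem:risk-consistency}, which does not localize to the narrow interval $(b^0,b]$. The Cauchy--Schwarz bound against this global quantity must be sharp enough to remain $o_p(1)$ even in the hardest case $|b-b^0| = 1$, where the signal scale is merely $N$. Closing the argument at that scale requires a delicate interplay between the compatibility/minimum-eigenvalue condition in Assumption \ref{as-min-eigenvalue} and the rate constraint $s D_{NT} N^c \to 0$ in Assumption \ref{as-penalty-order}, and is what drives the need for the lower-bound part of that rate assumption rather than merely $s D_{NT} = o(1)$.
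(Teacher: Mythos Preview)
Your decomposition of $\tilde Q(b)-\tilde Q(b^0)$ into signal, cross, and noise pieces is the same as the paper's, and your identification of the key difficulty (localization to the narrow window $|b-b^0|=1$) is correct. However, the argument as written does not close, for two reasons.

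\textbf{The noise bound is too crude.} You control the oracle noise by a uniform bound
\[
\max_{b\neq b^0}\Bigl|\sum_{t\in(b^0,b]}\sum_i u_{it}X_{it}'(\gamma_{i,B}^0-\gamma_{i,A}^0)\Bigr|=O_p(\sqrt{NT\log T}),
\]
and then compare it to the signal $c_1 N|b-b^0|$. At $|b-b^0|=1$ the ratio is $\sqrt{T\log T/N}$, which \emph{diverges} under the paper's rate regime (Assumption~\ref{a: penalty}(i) forces $N\lesssim T$, not the reverse). The fix is to apply H\'ajek--R\'enyi in its self-normalized form---precisely Lemma~A.6 of \citet{bai&perron98} for $L_2$ mixingales---to the \emph{weighted} maxima
\[
\max_{b<b^0}\frac{1}{N(b^0-b)}\Bigl|\sum_{t=b+1}^{b^0}\sum_i u_{it}X_{it}'(\gamma_{i,A}^0-\gamma_{i,B}^0)\Bigr|.
\]
Since the cross-sectional variance at each $t$ is $O(N)$ by Assumption~\ref{a: basic}(i), the H\'ajek--R\'enyi bound gives probability $\leq C\lambda^{-2}\sum_{k\geq 1} N/(Nk)^2=O(1/N)\to 0$. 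You name H\'ajek--R\'enyi but state the conclusion of an unweighted maximal inequality; the distinction is exactly what makes $|b-b^0|=1$ work.

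\textbf{The estimation-error terms need an $L_1$ bound on $\hat\beta-\beta^0$, not just the risk bound.} For the cross term and for the noise piece involving $\Delta\gamma$, the paper uses
\[
\frac{1}{N}\sum_i|\hat\beta_i-\beta_i^0|_1=O_p\bigl((sD_{NT})^{1/2}\bigr),
\]
which is a separate lemma (Lemma~\ref{th-beta-initial-rate}) requiring its own case analysis via compatibility. With this $L_1$ control one bounds, e.g., $\frac{1}{NT}\sum_i\sum_{t}(X_{it}'\Delta\gamma_i)^2\leq \frac{1}{N}\sum_i|\Delta\gamma_i|_1^2\cdot\frac{1}{T}\sum_t\max_j x_{jt}^2\leq C\,sD_{NT}\,\frac{b^0-b}{T}N^c$, and the factor $sD_{NT}N^c\to 0$ in Assumption~\ref{a: penalty}(i) then kills the term after dividing by $(b^0-b)/T$. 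Your sketch invokes compatibility but stops short of deriving this $L_1$ rate; Lemma~\ref{theorem:risk-consistency} alone does not localize, exactly as you suspected in your final paragraph.
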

	This theorem demonstrates the super-consistency of the refined breakpoint estimator, meaning that the estimator converges to the true breakpoint with a probability approaching one. As a result, the DML procedure and the spillover effect estimator in the following stages can be analyzed as if the breakpoint were known.
	
	\subsection{Inference of DML estimator of private effect}
	\label{sec:dml-theory}
	
	While the super-consistency property allows us to disregard the error in breakpoint estimation when conducting inference for the DML estimator of the private effect $\tilde{\delta}$, the analysis becomes non-standard if we aim to achieve $\sqrt{NT}$-consistency. More specifically, although $NT$ observations can be utilized to estimate the private effect, $\tilde{\delta}$ is influenced by the estimation uncertainty of the spillover effects, which diminishes at a rate no faster than $\sqrt{T}$. This limitation constrains the convergence rate of 
	$\tilde{\delta}$ under standard DML inference theory. In this section, we establish conditions under which $\tilde{\delta}$ attains $\sqrt{NT}$-consistency. For clarity and to emphasize the key ideas without overcomplicating the proofs, we impose the following assumptions, some of which are stronger than necessary and can be relaxed.
	
		\begin{asu}\ \\
		\label{a:DML}
		\vspace*{-0.7cm}
		\begin{enumerate}
			\setcounter{subassumption}{0}
			\item[] \subasu \label{assume-ue-iid}
					$(u_{it}, e_{it})$ is i.i.d. over time and units, and independent of $X_{it} (b)$.  $E(e_{it})=0$. 
			
			\item[]\subasu \label{assume-sparse-dml}
				Let the sets $\mathbf{s}_i^{v0}, \mathbf{s}_i^{g0} \subset \{1,\dots, 2N\}$ collect the indices of nonzero elements in the true coefficients $\nu_i^0$ and $\gamma_i^0$ of the two regression models~\eqref{eq:double-lasso}, respectively. The corresponding cardinalities of these two sets, $s_i^{v0}$ and $s_i^{g0}$, satisfy $\sup_i s_i^{v0} < M^v$ and $\sup_i s_i^{g0} < M^g$ for some constants $M^v, M^g >0$.
		
			\item[] \subasu \label{assume-varcon}
			There exists $\mathbf{s}_i^{v*}$ and $\mathbf{s}_{i}^{g*}$, $i=1, \dots, N$, such that $\Pr (\hat{\mathbf{s}}_i^{v*}=\mathbf{s}_i^{v*},  \hat{\mathbf{s}}_i^{g*}=\mathbf{s}_i^{g*}, \forall i) \to 1$ as $T$ increases, and $\mathbf{s}_i^{v0}\subseteq \mathbf{s}_i^{v*}$ and $\mathbf{s}_i^{g0}\subseteq \mathbf{s}_i^{g*}$. 
			
			\item[] \subasu \label{assume-xx-stable}
				Let $\mathbf{s}_i^0:=  \mathbf{s}_i^{v0} \cup \mathbf{s}_i^{g0}$, and denote $||\cdot ||_2$ as the spectral norm. 
				$$\max_i \left\Vert
				\left(\sum_{t\in \mathcal{T}_a} X_{it, \mathbf{s}_i^0} (b)  X_{it, \mathbf{s}_i^0} (b)'/T\right)^{-1} - \left( \sum_{t\in \mathcal{T}_m} X_{it, \mathbf{s}_i^0} (b)  X_{it, \mathbf{s}_i^0} (b)'/T\right)^{-1}
				\right\Vert_2 
				= o_p \left( \sqrt{\frac{T}{N}} \right),$$ 
			
	\end{enumerate}
	\end{asu}

	Assumption~\ref{assume-ue-iid} imposes restrictions on the distribution of the error terms in~\eqref{eq:double-lasso}, and ensures there is no endogeneity. Assumption~\ref{assume-sparse-dml} requires the coefficients of the double Lasso models, $\gamma_i$ and $\nu_i$, to be sparse, with the numbers of nonzero elements bounded by a finite constant. Note that this sparsity assumption holds for all $i$. Even under this assumption, the set of non-zero parameters in the model is still high-dimensional. Each $i$ contains a finite number of non-zero parameters, and there are $N$ units. Thus, the number of non-zero parameters is of order $O(N)$. Assumption~\ref{assume-varcon} concerns the sets of Lasso selected covariates, $\hat{\mathbf{s}}_i^{v*}$ and $\hat{\mathbf{s}}_i^{g*}$, ensuring that these sets exhibit no  random variation in the limit under sufficiently large time-series sample size. It also requires that the selected sets include all covariates with nonzero true coefficients. Notably, we do not impose variable selection consistency; in other words, the selected sets need not exactly match the true set of nonzero coefficients and may contain redundant variables. But, of course, this condition is more easily satisfied by using methods with variable selection consistency (i.e., oracle property), such as the adaptive Lasso \citep{zou2006adaptive} and SCAD \citep{fan2001variable}. Thus, it can also be replaced by lower-level assumptions on the tuning parameters and adaptive weights.
	Assumption~\ref{assume-xx-stable} states that the main and auxiliary samples produce comparable averages of the outer products of the double Lasso selected covariates. These averages converge to their expected values, which, under the stationarity assumption, remain constant over time. It is worth noting that the rate of the left-hand side of Assumption~\ref{assume-xx-stable} is typically $O_p (\log N / \sqrt{T}) $. Hence, to achieve the rate stated in the assumption, namely $o_p( \sqrt{T/N} )$, we would need $\sqrt{N} \log N / T \to 0$, which is satisified under Assumption \ref{a: penalty}.

	\begin{theorem}
		\label{thm-dml}
		Suppose that Assumptions~\ref{a: basic}--\ref{a:DML} hold. 
		Then, as $N,T \to \infty$ and $|\mathcal{T}_m| / T \to 1/2$, 
		\begin{align*}
			\sqrt{NT} (\tilde \delta - \delta^0)
			\to_d N\left( \left[E( e_{it}^2) \right]^{-1} E(u_{it}^2 e_{it}^2 ) \left[E( e_{it}^2) \right]^{-1} \right).
		\end{align*}
	\end{theorem}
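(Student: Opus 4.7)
The plan is to (i) reduce to a setting in which both the breakpoint and the variable-selection indices are treated as deterministic, (ii) derive an asymptotically linear expansion of $\tilde\delta^m$ (and symmetrically of $\tilde\delta^a$), (iii) apply a triangular-array CLT to the leading score, and (iv) bound the nuisance-induced remainders using cross-fitting together with the post-double-Lasso step. Theorem~\ref{th-b-superconsistency} allows us to work on the event $\{\tilde b = b^0\}$, whose probability tends to one, and to replace $\tilde b$ by $b^0$ everywhere. Assumption~\ref{assume-varcon} then freezes the selected sets: with probability approaching one $\hat{\mathbf{s}}_i^{v*}$ and $\hat{\mathbf{s}}_i^{g*}$ equal deterministic sets $\mathbf{s}_i^{v*},\mathbf{s}_i^{g*}$ containing the true supports, so by Assumption~\ref{assume-sparse-dml} the post-Lasso estimators $\tilde\alpha_i^a,\tilde\gamma_i^a,\tilde\eta_i^a,\tilde\nu_i^a$ become classical OLS estimators on a uniformly bounded number of regressors per unit, and standard OLS theory under Assumption~\ref{a: basic} delivers $T^{-1/2}$ rates uniformly in $i$.

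Next, write $\hat\epsilon_{it}^a := z_{it} - \tilde\eta_i^a - X_{it,\hat{\mathbf{s}}_i}(b^0)'\tilde\nu_i^a = e_{it} + r_{it}^{z,a}$ and $\hat\xi_{it}^a := y_{it} - \tilde\alpha_i^a - X_{it,\hat{\mathbf{s}}_i}(b^0)'\tilde\gamma_i^a = \tilde u_{it} + r_{it}^{y,a}$, where the nuisance residuals are affine in $X_{it,\mathbf{s}_i^*}$ with coefficients of order $T^{-1/2}$. Using $\tilde u_{it} = u_{it} + e_{it}\delta^0$, the defining equation for $\tilde\delta^m$ rearranges as
\begin{equation*}
\Big(\sum_{i}\sum_{t\in\mathcal{T}_m}\hat\epsilon_{it}^a\hat\epsilon_{it}^{a\prime}\Big)(\tilde\delta^m - \delta^0) = \sum_{i}\sum_{t\in\mathcal{T}_m}(e_{it} + r_{it}^{z,a})(u_{it} + r_{it}^{y,a} - r_{it}^{z,a}\delta^0),
\end{equation*}
whose right-hand side decomposes into the leading score $S_{NT} := \sum e_{it}u_{it}$ plus three remainders $R_1 := \sum e_{it}(r_{it}^{y,a} - r_{it}^{z,a}\delta^0)$, $R_2 := \sum r_{it}^{z,a} u_{it}$, and $R_3 := \sum r_{it}^{z,a}(r_{it}^{y,a} - r_{it}^{z,a}\delta^0)$, all sums over $i$ and $t \in \mathcal{T}_m$. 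A triangular-array CLT under Assumption~\ref{assume-ue-iid} (noting $E[e_{it}u_{it}] = 0$ from Assumption~\ref{as-exogeneity} applied to the $z$-equation) gives $(NT/2)^{-1/2}S_{NT} \to_d N(0, E(u_{it}^2 e_{it}^2))$, and the denominator divided by $NT/2$ converges to $E(e_{it}^2)$ in probability by the LLN, with the outer-product stability in Assumption~\ref{assume-xx-stable} controlling the remaining second-moment discrepancies between $\mathcal{T}_m$ and $\mathcal{T}_a$.

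The decisive step is bounding the three remainders. By cross-fitting, $(\tilde\eta_i^a,\tilde\nu_i^a,\tilde\gamma_i^a)$ depend only on the auxiliary sample and, by the temporal independence in Assumption~\ref{assume-ue-iid}, are independent of $\{(u_{it},e_{it}) : t \in \mathcal{T}_m\}$; hence $E[R_1 \mid \text{aux}] = E[R_2 \mid \text{aux}] = 0$. Their conditional second moments, bounded using $\sum_{t \in \mathcal{T}_m}(r_{it}^{z,a})^2 = O_p(1)$ and $\sum_{t \in \mathcal{T}_m}(r_{it}^{y,a})^2 = O_p(1)$ per unit (each following from the $T^{-1/2}$ OLS rate under Assumption~\ref{assume-sparse-dml}), combined with the cross-sectional i.i.d.~structure of $(u_{it},e_{it})$, yield $R_1, R_2 = O_p(\sqrt{N}) = o_p(\sqrt{NT})$. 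For the double-nuisance term $R_3$, Cauchy--Schwarz gives $|R_3| = O_p(N)$, so its contribution to $\sqrt{NT}(\tilde\delta^m - \delta^0)$ is of order $R_3/\sqrt{NT} = O_p(\sqrt{N/T}) = o_p(1)$ under Assumption~\ref{as-penalty-order}. Combining via Slutsky yields $\sqrt{NT/2}(\tilde\delta^m - \delta^0) \to_d N(0,[E(e_{it}^2)]^{-1}E(u_{it}^2 e_{it}^2)[E(e_{it}^2)]^{-1})$; the symmetric argument delivers the same limit for $\tilde\delta^a$, and because the two leading scores involve disjoint time indices they are asymptotically independent, so averaging and rescaling produce the $\sqrt{NT}$ limit in the theorem.

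The hard part will be the double-nuisance cross-product $R_3$, which is where Neyman orthogonality, the post-Lasso refinement, and the implicit dimensional restriction $\sqrt{N/T} \to 0$ (coming from Assumption~\ref{as-penalty-order}) must all combine: without the post-Lasso step the random selected set would block the clean deterministic rate bound; without cross-fitting, $R_1$ and $R_2$ would carry an $O_p(\sqrt{NT/T})$ contribution rather than $O_p(\sqrt{N})$; and without $N/T \to 0$ the aggregated product of two nuisance errors would overwhelm the target $(NT)^{-1/2}$ rate.
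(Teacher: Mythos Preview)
Your proposal is correct and follows the same architecture as the paper's proof: freeze $\tilde b$ and the selected sets via Theorem~\ref{th-b-superconsistency} and Assumption~\ref{assume-varcon}, expand the numerator into the score $\sum e_{it}u_{it}$ plus three nuisance remainders, kill the two single-nuisance terms by cross-fitting, bound the double-nuisance term, and apply a CLT. The decomposition you write is exactly the paper's equation for the numerator, term for term.

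The one substantive difference is your treatment of $R_3$. You bound it by Cauchy--Schwarz, using $\sum_{t\in\mathcal{T}_m}(r_{it}^{z,a})^2=O_p(1)$ per unit, to get $R_3=O_p(N)$ and hence $R_3/\sqrt{NT}=O_p(\sqrt{N/T})=o_p(1)$. The paper instead substitutes the explicit OLS formulas for $\tilde\nu_i^a$ and $\tilde\gamma_i^a$, writes $R_3$ as a quadratic form with middle matrix $\sum_{t\in\mathcal{T}_m}X_{it,\mathbf{s}_i^0}X_{it,\mathbf{s}_i^0}'$, and invokes Assumption~\ref{assume-xx-stable} to replace this by the auxiliary-sample analogue; the resulting simplified term then has mean zero (via $E[u_{it}e_{it}]=0$) and variance $O(s/(NT))$. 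Your route is shorter and in fact does not need Assumption~\ref{assume-xx-stable} at all, so your remark that this assumption ``controls the remaining second-moment discrepancies'' for the denominator is misplaced: the denominator only needs a law of large numbers, and in the paper Assumption~\ref{assume-xx-stable} is used precisely for $R_3$, where your argument bypasses it. Both approaches ultimately require $\sqrt{N/T}\to 0$, which you correctly extract from Assumption~\ref{as-penalty-order}. One small caveat: your claim of uniform-in-$i$ $T^{-1/2}$ OLS rates implicitly assumes the limiting selected sets $\mathbf{s}_i^*$ have uniformly bounded cardinality, which is not literally stated in Assumption~\ref{assume-varcon} (only the true supports are bounded by Assumption~\ref{assume-sparse-dml}); the paper's proof makes the same implicit assumption.
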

	Importantly, the theorem establishes $\sqrt{NT}$-consistency of $\tilde \delta$, despite the nuisance parameters, such as $\gamma_i$, being at most $\sqrt{T}$-consistent. This result is made possible by leveraging the sparsity assumptions in the regressions of $y_{it}$ and $z_{it}$ on $X_{it} (b)$, combined with the use of estimators that possess variable selection consistency properties. The post-Lasso OLS in Step 4 precisely serves to eliminate the uncertainty of variable selection, which is crucial to achieve $\sqrt{NT}$-consistency of $\tilde \delta$. 

	\begin{remark}
	If the private effect is individual-specific, i.e., $\delta_i$, we can utilize the results from \citet{CHHW2021} to derive the asymptotic distribution of its estimator. In this case, it is not necessary to rely on the post-Lasso estimators in Step 4. However, in contrast to \citet{CHHW2021} or other results in the DML literature, our result is not uniform because it depends on the oracle property (Assumptions \ref{assume-sparse-dml} and \ref{assume-varcon}) of the selection procedure, which does not typically hold uniformly across the entire parameter space. Specifically, the result does not hold when the parameter is close to but not precisely zero. Despite the loss of uniformity, we leverage the oracle property to establish the $\sqrt{NT}$-consistency of $\tilde \delta$. In conjunction with the cross-fitting procedure, the oracle property allows us to ignore the estimation errors in $\tilde \mu_i^a$ and $\tilde \gamma_i^a$, which are only $\sqrt{T}$ consistent. Thus, it turns the theoretical analysis into a low-dimensional problem.
	\end{remark}

	\section{Specification of breaks and heterogeneity}\label{sec:extension}
	When a structural break occurs, both the spillover and private effects may change. Moreover, considering cross-country heterogeneity in education and economic development, the private effect of human capital is also likely to vary across countries. This section discusses the estimation of time-varying private effects, the specification of break types, and the estimation of heterogeneous private effects.
	
	\subsection{Time-varying private effect}\label{sec:tv-common-effect}
	If a structural break is present, it is reasonable to consider that private effects may also change. Thus, a natural extension is to model structural breaks in both spillover and private effects as
	\begin{equation}\label{eq:model-tv-common-effect}
		y_{it} = \alpha_i^0 + \left(\sum_{j=1}^N x_{jt} \gamma_{ij,B}^0 + z_{it}\delta^0_B\right) \mathbf{1} (t \leq b^0 )+ \left(\sum_{j=1}^N x_{jt} \gamma_{ij,A}^0 + z_{it} \delta^0_A\right)\mathbf{1} (t > b^0 ) + u_{it},
	\end{equation}
	where $\delta_{B}^0$ and $\delta_{A}^0$ are the private effects before and after the break, respectively.
	To estimate this model, we can adapt Algorithm~\ref{alg:main} by redefining $z_{it}(b):= (z_{it}'\mathbf{1} (t \leq b ) , z_{it}' \mathbf{1}(t > b))'$, $W_{it} (b):= (1, X_{it}(b)', z_{it}(b) ')'$, and $\beta_i:= (\alpha_i, \gamma_{i}^\prime, \delta_B', \delta_A')'$. Using these modified variables, the breakpoint estimate can be obtained in two steps as in~\eqref{eq:obj-break-point-pls} and~\eqref{eq:obj-break-point-refined}. With the estimated breakpoint, we implement DML (Steps 3--6) separately for each regime to estimate time-varying private effects. The spillover effect can then be re-estimated by regressing $y_{it}-z_{it}^\prime\mathbf{1} (t \leq b )\tilde{\delta}_B-z_{it}^\prime\mathbf{1} (t > b )\tilde{\delta}_A$ on $X_{it}(\tilde{b})$ using post Lasso. 
	
	\subsection{Diagnosing the presence and types of structural breaks}\label{sec:test} 
	In practice, researchers often lack prior knowledge about the number of breaks or the parameters affected by them. Therefore, it is essential to detect the presence of breaks and determine which parameters are impacted.
	One approach is to use an information criterion (IC), defined as 
	\begin{equation}\label{eq:IC}
		IC =\log\widehat{Q}+n_pf(N,T),
	\end{equation}
	where $\widehat{Q}$ is the average sum of squared errors, $n_p$ is the total number of parameters, and $f(N, T)$ serves as a tuning parameter; for instance, setting $f(N, T)=\ln(NT)/(NT)$ corresponds to the Bayesian information criterion, which we employ in simulations and applications. Both $\widehat{Q}$ and $n_p$ depend on the number of breaks and the specification which parameters have a break. For example, a break in spillover effects alone implies $n_p=2N^2+N+1$, whereas breaks in both spillover and private effects yield $n_p=2(N^2+1)+N$. We evaluate the finite sample performance of the IC in determining the number and type of breaks in Section~\ref{sec:simulation}.

	\subsection{Heterogeneous private effects}
	Private effects may vary across units due to individual heterogeneity. To capture this heterogeneity, we employ a latent group approach, assuming that units within a group share the same private effect:
	\begin{equation}\label{eq:model-group-delta}
		y_{it} = \alpha_i + \sum_{j=1}^N X_{jt}(b)^\prime\gamma_{ij}+ z_{it}\delta_{g_i} + u_{it},
	\end{equation} 
	where $\delta_{g_i}$ represents the private effect for group $g_i$, and $g_i$ is the \emph{unknown} group membership of unit $i$ that takes values from $\{1,\ldots, G\}$ for some pre-specified number of groups $G$. \citet{hahn&moon2010} provide sound foundations for group structure in game theoretic models. \citet{Bonhomme&lamadon&manresa2017} argue that a group pattern can be a good approximation even in the presence of individual heterogeneity. 
	
	In our empirical application, we employ a state-of-the-art clustering algorithm, namely the Sequential Binary Segmentation Algorithm (SBSA) developed by \citet{Wang2021}, which classifies units based on preliminary estimation of individual-specific private effects. Specifically, we redefine $\beta_i:=(\alpha_i,\gamma_i^\prime,\delta_i)$ and estimate the breakpoint using Steps 1 and 2 outlined in Algorithm~\ref{alg:main}. We then split the sample at the estimated breakpoint and estimate the individual-specific private effect with DML as $\tilde\delta_i=(\tilde{\delta}_i^m+\tilde{\delta}_i^a)/2$, where $\tilde{\delta}_i^m$ and $\tilde{\delta}_i^a$ are the estimators from the main and auxiliary samples, e.g., $\tilde{\delta}_i^a$ solves
	$$
	\sum_{t\in \mathcal{T}_m} \left( y_{it} - \tilde{\alpha}_i^a - X_{it} (\tilde b)' \tilde{\gamma}_i^a - (z_{it}  - \tilde{\eta}_i^a - X_{it} (\tilde b)'\tilde{\nu}_i^a )' \delta_i \right) 
	\left(z_{it} - \tilde{\eta}_i^a - X_{it} (\tilde b)'\tilde{\nu}_i^a \right) =0.
	$$
	With individual estimators at hand, SBSA then clusters units by minimizing within-group variation, iteratively segmenting groups until $G$ clusters are formed. Specifically, for a specified number of groups $G$, SBSA begins by sorting countries based on $\tilde{\delta}_i$, dividing them into two groups to minimize within-group variation. This process continues iteratively, further segmenting existing groups until $G$ groups are formed. See SBSA 1 in \citet{Wang2021} for further details.
	Once group membership estimates $\tilde{g}_i$ are obtained, we re-estimate the group-specific private effects using DML for each group, yielding $\tilde{\delta}_{\tilde{g}_i}$. Finally, the spillover effect is estimated by regressing
	$y_{it}-z_{it}^\prime\tilde{\delta}_{\tilde{g}_i}$ on $X_{it}(\tilde{b})$ using post Lasso.
	
	The latent group approach requires specifying the number of groups $G$. To determine this, we adopt a commonly used method of minimizing an information criterion (IC), constructed analogously to~\eqref{eq:IC} \citep[see][]{bonhomme&manresa2015,okui&wang2018}.

	\section{Monte Carlo simulation}\label{sec:simulation}
	This section assesses the finite-sample performance of the proposed method. We focus on the accuracy of the estimated breakpoint. We also evaluate the estimators of spillover and private effects, as well as the performance of the IC in diagnosing break types.

	\subsection{Data generating process}
	We generate data according to \eqref{eq:model}. The spillover covariate $x_{it}$ is sampled from a standard normal distribution, while the private effect covariate $z_{it}$ is correlated with $x_{it}$ and defined as $z_{it}=1/T\sum_{t=1}^Tx_{it}+\varepsilon_{it}$, where $\varepsilon_{it}$ follows a standard normal distribution. A structural break is introduced at $b_1^0=\lfloor T/3\rfloor$, resulting in different values for $\gamma_{ij,t}$ and $\delta_t$ across the two regimes. We consider two types of networks. The first is a discrete network generated using the Erd\"{o}s-R\'{e}nyi model. The second is a continuous network in which a random subset of the spillover parameters are set to zero and the remaining parameters are drawn from a normal distribution. 

	We examine six distinct data-generating processes (DGPs), incorporating two error processes and three types of breaks. In DGP 1.X, the error term is generated as i.i.d. standard normal. In DGP 2.X, the error follows an autoregressive process, $u_{it}=0.6u_{it-1}+\epsilon_{it}$, where $\epsilon_{it}$ is i.i.d. standard normal.
	For each of the two error processes, we consider three types of breaks: 
	\begin{itemize}
		\item[]DGP X.1: A structural break occurs only in the spillover structure. For discrete networks, we generate an Erd\"{o}s-R\'{e}nyi network with a connection probability of 0.25, independently for each regime. For continuous networks, a random subset of elements are set to zero, and the remaining elements are drawn from a normal distribution with mean 1 and variance 0.5 in each regime. The sparsity level, defined as the proportion of zero elements, is set to 0.7 in the pre-break regime and 0.5 in the post-break regime. The private effect parameter is fixed at $\delta=1.5$ in both regimes.  
		
		\item[]DGP X.2: A structural break occurs only in the non-spillover effect. Here, we set $\delta_B^0=1.5$ and $\delta_A^0=-1.5$, while the network structure remains unchanged between the two regimes.
		
		\item[]DGP X.3: A structural break occurs in both the spillover and non-spillover effects. In this scenario, networks are generated as in DGP X.1, and the private effect parameters are specified as in DGP X.2.
		
	\end{itemize} 
	
	We consider cross-sectional dimensions $N=(15, 30)$ and time series dimensions $T=(50,100)$. The simulations are conducted with 1,000 replications.

	\subsection{Breakpoint, spillover, and private effect estimators}
	We first examine the estimated breakpoint. The accuracy of the breakpoint estimator is measured using the Hausdorff distance (HD), which simplifies to the absolute distance between the true and estimated breakpoints in the case of a single break, specifically $\textrm{HD}(\widehat{k},k^0)\equiv|k-k^0|.$
	Table~\ref{tab:sim-hd} reports the HD ratio relative to the length of time dimension, i.e., $100\times \textrm{HD}(\widehat{k},k^0)/T$, averaged across the replications. 
	The proposed method demonstrates a high level of accuracy in detecting breakpoints across all designs. Increasing the length of the time periods $T$ directly enhances the convergence of the breakpoint estimator and indirectly improves estimation accuracy by yielding more precise estimates of spillover effects. Consequently, the HD ratio decreases rapidly as $T$ grows.
	While an increase in cross-sectional sample size $N$ theoretically facilitates convergence, it may sometimes reduce the accuracy of breakpoint estimates when $T$ is small, particularly under autoregressive error terms. This is because a larger 
	$N$ introduces a higher number of spillover effect parameters, complicating their estimation and, consequently, lowers the accuracy of breakpoint estimates. This high-dimensionality challenge explains the increasing HD ratio observed when 
	$N$ grows from 15 to 30 and $T=50$ in some cases of DGP 2. However, when $T$ is sufficiently large (e.g., $T=100$), increasing $N$ reduces the HD ratio, confirming the consistency result in Theorem~\ref{th-b-superconsistency}.
	
	\begin{table}[htp]
		\begin{center}
			\caption{Hausdorff distance ratio of the estimated breakpoint}\label{tab:sim-hd}
			\begin{tabular}{lccccccccccc}
				\hline\hline
				
				$N$& 15&15 & 30&30  && 15&15 & 30&30\\
				$T$& 50 & 100 & 50 & 100  && 50  & 100 & 50 & 100 \\
				\hline
				& \multicolumn{9}{c}{Erd\"{o}s-R\'{e}nyi network} \\
				&\multicolumn{4}{c}{DGP 1 (i.i.d. error)} && \multicolumn{4}{c}{DGP 2 (autoregressive error)}\\
				\cline{2-5} \cline{7-10}
				DGP X.1 & 0.0077	&0.0002	&0.0082	&0.0000		&&0.0125	&0.0003	&0.0107	&0.0001    \\
				DGP X.2 & 0.0032	&0.0000	&0.0018	&0.0000		&&0.0006	&0.0001	&0.0020	&0.0000    \\
				DGP X.3 & 0.0002	&0.0000	&0.0005	&0.0000		&&0.0004	&0.0000	&0.0001	&0.0000    \\
				\hline
				
				& \multicolumn{9}{c}{Continuous network} \\
				&\multicolumn{4}{c}{DGP 1  (i.i.d. error)} && \multicolumn{4}{c}{DGP 2 (autoregressive error)}\\
				\cline{2-5} \cline{7-10}
				DGP X.1 & 0.0029	&0.0000	&0.0057	&0.0000		&&0.0056	&0.0001	&0.0064	&0.0000    \\
				DGP X.2 & 0.0065	&0.0001 &0.0028	&0.0002		&&0.0065	&0.0004	&0.0263	&0.0003    \\
				DGP X.3 & 0.0003	&0.0000	&0.0014	&0.0000		&&0.0006	&0.0000	&0.0013	&0.0000    \\
				\hline
			\end{tabular}
		\end{center}
	\end{table} 
	
	Next, we assess the accuracy of estimated spillover effects. Following \citet{Paula:2020},  we report three measures: the percentage of zero entries in the true adjacency matrix that are accurately estimated as zeros (``proportion of zeros to zeros''), the percentage of nonzero entries accurately estimated as nonzeros (``proportion of nonzeros to nonzeros''), and the overall root mean squared error (RMSE) of the adjacency matrix for each regime. The RMSE is defined as $\textrm{RMSE}= 1/N(N-1)\sum_{i\neq j}\left(\tilde{\gamma}_{ij,t}-\gamma_{ij,t}^0\right)^2$.	Table~\ref{tab:sim-network-ER} presents the accuracy of adjacency matrix estimation for Erd\"{o}s-R\'{e}nyi networks, and Table~\ref{tab:sim-network-cont} displays the result for continuous networks. The adjacency matrix is estimated with high accuracy for Erd\"{o}s-R\'{e}nyi networks, where the proportion of correctly estimated nonzero elements approaches or exceeds 90\% in most cases. Although the pre-break estimates of nonzero elements are less accurate when $N=30$ and $T=50$, this proportion improves greatly as $T$ increases. Due to uneven sample sizes before and after the break, the RMSE of pre-break spillover effect estimates is generally higher than that of post-break estimates, as expected. When the adjacency matrix is generated in a continuous structure, estimation becomes more challenging, yet the proportion of accurately identified (non)zeros remains high and improves as $T$ increases. Conversely, increasing $N$ reduces estimation accuracy, as reflected in both the proportions of (non)zeros to (non)zeros and the overall magnitude.

	\begin{table}[htp]
		\begin{center}\caption{Accuracy of network estimation: Erd\"{o}s-R\'{e}nyi network}\label{tab:sim-network-ER}
			\begin{tabular}{lccccccccccccccccccc}
				\hline\hline
				& \multicolumn{2}{c}{$N=15,T=50$} && \multicolumn{2}{c}{$N=15,T=100$} &&   \multicolumn{2}{c}{$N=30,T=50$} && \multicolumn{2}{c}{$N=30,T=100$} \\
				& Before & After && Before & After && Before & After && Before & After \\
				\hline
				&\multicolumn{10}{c}{Proportion of nonzeros to nonzeros}\\    	      	       	       	      	      	       	      	      	
				
				DGP 1.1 &0.882	&0.997		&&0.997	&1.000		&&0.670	&0.968		&&0.975	&1.000	   \\
				DGP 1.2 &0.995	&0.995		&&0.999	&0.999		&&0.994	&0.994		&&1.000	&1.000	   \\
				DGP 1.3 &0.894	&0.998		&&0.996	&1.000		&&0.676	&0.962		&&0.968	&1.000	   \\
				
				DGP 2.1 &0.863	&0.994		&&0.992	&1.000		&&0.655	&0.944		&&0.956	&0.999	  \\
				DGP 2.2 &0.998	&0.998		&&1.000	&1.000		&&0.991	&0.991		&&1.000	&1.000	   \\
				DGP 2.3 &0.870	&0.994		&&0.993	&1.000		&&0.663	&0.950		&&0.955	&0.990	   \\

				\hline		
				&\multicolumn{10}{c}{Proportion of zeros to zeros}\\
				DGP 1.1 &0.814	&0.676		&&0.819	&0.661		&&0.825	&0.711		&&0.829	&0.706   \\
				DGP 1.2 &0.922	&0.922		&&0.932	&0.932		&&0.914	&0.914		&&0.939	&0.939	   \\
				DGP 1.3 &0.816	&0.680		&&0.818	&0.663		&&0.828	&0.715		&&0.829	&0.707	   \\
				
				DGP 2.1 &0.785	&0.619		&&0.762	&0.581		&&0.813	&0.674		&&0.792	&0.636   \\
				DGP 2.2 &0.871	&0.871		&&0.871	&0.871		&&0.877	&0.877		&&0.889	&0.889	   \\
				DGP 2.3 &0.789	&0.618		&&0.763	&0.579		&&0.817	&0.678		&&0.792	&0.635	   \\

				\hline		
				&\multicolumn{10}{c}{Root mean squared error}\\

				DGP 1.1 &0.283	&0.191		&&0.157	&0.122		&&0.386	&0.254		&&0.187	&0.126   \\
				DGP 1.2 &0.102	&0.102		&&0.069	&0.069		&&0.112	&0.112		&&0.067	&0.067    \\
				DGP 1.3 &0.275	&0.189		&&0.156	&0.122		&&0.382	&0.246		&&0.186	&0.125	   \\
				
				DGP 2.1 &0.319	&0.240		&&0.200	&0.155		&&0.406	&0.302		&&0.228	&0.162   \\
				DGP 2.2 &0.137	&0.137		&&0.096	&0.096		&&0.146	&0.146		&&0.094	&0.094   \\
				DGP 2.3 &0.313	&0.237		&&0.198	&0.156		&&0.399	&0.294		&&0.228	&0.162	  \\
				\hline
			\end{tabular}
		\end{center}
	\end{table}

	\begin{table}[htp] 		\begin{center}\caption{Accuracy of network estimation: Continuous network}\label{tab:sim-network-cont} 			\begin{tabular}{lccccccccccccccccccc} 				\hline\hline 				& \multicolumn{2}{c}{$N=15,T=50$} && \multicolumn{2}{c}{$N=15,T=100$} &&   \multicolumn{2}{c}{$N=30,T=50$} && \multicolumn{2}{c}{$N=30,T=100$} \\
				
				& Before & After && Before & After && Before & After && Before & After \\
				\hline	&\multicolumn{10}{c}{Proportion of nonzeros to nonzeros}\\    	      	       	       	      	      	       	      	      					                                                                				                                                                				                                                                				
				DGP 1.1 &0.765	&0.923	&&0.908	&0.968		&&0.566	&0.832			&&0.841	&0.954	   \\ 				
				DGP 1.2 &0.902	&0.902	&&0.949	&0.949		&&0.841	&0.841			&&0.935	&0.935	   \\ 				
				DGP 1.3 &0.767	&0.926	&&0.908	&0.967		&&0.570	&0.834			&&0.843	&0.954	   \\ 				                                                                				
				DGP 2.1 &0.757	&0.918	&&0.899	&0.964		&&0.563	&0.823			&&0.833	&0.951	   \\ 				
				DGP 2.2 &0.900	&0.900	&&0.946	&0.946		&&0.836	&0.836			&&0.932	&0.932	   \\ 				
				DGP 2.3 &0.760	&0.920	&&0.901	&0.966		&&0.565	&0.825			&&0.832	&0.951	   \\ 	
				
				\hline                                   				&\multicolumn{10}{c}{Proportion of zeros to zeros}\\   				
				DGP 1.1 &0.795	&0.663	&&0.807	&0.653		&&0.800	&0.676			&&0.805	&0.690	   \\ 				
				DGP 1.2 &0.882	&0.882	&&0.915	&0.915		&&0.828	&0.828			&&0.897	&0.897	   \\ 				
				DGP 1.3 &0.798	&0.666	&&0.805	&0.656		&&0.802	&0.674			&&0.806	&0.690	   \\ 				                                                                				
				DGP 2.1 &0.772	&0.606	&&0.753	&0.575		&&0.790	&0.647			&&0.774	&0.623	   \\ 				
				DGP 2.2 &0.839	&0.839	&&0.857	&0.857		&&0.806	&0.806			&&0.854	&0.854	   \\ 				
				DGP 2.3 &0.774	&0.609	&&0.755	&0.575		&&0.795	&0.648			&&0.774	&0.625	   \\ 				                                                                							                                                                				                                                                				\hline	
				&\multicolumn{10}{c}{Root mean squared error}\\   							      	      	      	       	      	      	       	      	      					                                                                				                                                                				
				DGP 1.1  &0.329	&0.217	&&0.174	&0.129		&&	0.500	&0.347		&&0.231	&0.141	  \\ 				
				DGP 1.2  &0.155	&0.155	&&0.091	&0.091		&&	0.236	&0.236		&&0.102	&0.102    \\ 				
				DGP 1.3  &0.325	&0.215	&&0.175	&0.129		&&	0.495	&0.344		&&0.230	&0.142	  \\ 				                                                                 				
				DGP 2.1  &0.363	&0.263	&&0.216	&0.161		&&	0.515	&0.388		&&0.267	&0.178	  \\ 				
				DGP 2.2  &0.186	&0.186	&&0.115	&0.115		&&	0.259	&0.259		&&0.126	&0.126	  \\ 				
				DGP 2.3  &0.361	&0.260	&&0.216	&0.162		&&	0.510	&0.386		&&0.269	&0.178	  \\ 				
				\hline 			
			\end{tabular} 		
		\end{center} 	
	\end{table}

	Finally, we evaluate the accuracy of DML estimates of private effects. Table~\ref{tab:sim-rmse} presents the bias and RMSE of $\tilde{\delta}$ in each regime, averaged across replications. The private effect is estimated with high accuracy, and its precision depends on the accuracy of the breakpoint and adjacency matrix estimators, as discussed above. Both bias and RMSE generally decrease as $N$ and $T$ increase, supporting the validity of DML and our theoretical results.
	
	\begin{table}[htp]
		\begin{center}
			\caption{Bias and root mean squared error of estimated private effect}\label{tab:sim-rmse}
			\begin{tabular}{llrccccccccccc}
				\hline\hline
				
				& & & \multicolumn{2}{c}{$N=15,T=50$} & \multicolumn{2}{c}{$N=15,T=100$} &   \multicolumn{2}{c}{$N=30,T=50$} & \multicolumn{2}{c}{$N=30,T=100$} \\
				DGP& & $\delta^0$ & Bias & RMSE & Bias & RMSE & Bias & RMSE & Bias & RMSE \\
				\hline
				&&  & \multicolumn{8}{c}{Erd\"{o}s-R\'{e}nyi network} \\
				1.1 &B  & $ 1.5$	&0.057	&0.078		&0.032	&0.051		&0.058	&0.079		&0.028	&0.048 \\
				&A  & $ 1.5$	&0.057	&0.078		&0.032	&0.051		&0.058	&0.079		&0.028	&0.048 \\											      	       	        	       	        	       	
				1.2 &B  & $ 1.5$	&0.063	&0.115		&0.031	&0.031		&0.061	&0.109		&0.031	&0.073 \\
				&A  & $-1.5$	&0.063	&0.088		&0.030	&0.030		&0.061	&0.085		&0.030	&0.052 \\											      	       	        	       	        	       	
				1.3 &B  & $ 1.5$	&0.050	&0.107		&0.027	&0.072		&0.059	&0.107		&0.027	&0.087 \\
				&A  & $-1.5$	&0.056	&0.095		&0.028	&0.064		&0.057	&0.093		&0.035	&0.085 \\ \\
				
				2.1 &B  & $ 1.5$	&0.060	&0.082		&0.032	&0.055		&0.060	&0.081		&0.027	&0.051 \\
				&A  & $ 1.5$	&0.060	&0.082		&0.032	&0.055		&0.060	&0.081		&0.027	&0.051 \\											      	       	        	       	        	       	
				2.2 &B  & $ 1.5$	&0.058	&0.116		&0.030	&0.080		&0.049	&0.109		&0.028	&0.071 \\
				&A  & $-1.5$	&0.055	&0.089		&0.029	&0.057		&0.068	&0.092		&0.029	&0.053 \\											      	       	        	       	        	       	
				2.3 &B  & $ 1.5$	&0.058	&0.112		&0.035	&0.081		&0.062	&0.115		&0.031	&0.073 \\
				&A  & $-1.5$	&0.058	&0.103		&0.032	&0.068		&0.060	&0.097		&0.032	&0.064 \\
				\hline
				&&& \multicolumn{8}{c}{Continuous network} \\   											       	       	        	       	        	       	
				
				1.1 &B  & $ 1.5$	&0.072	&0.108		&0.041	&0.071		&0.068	&0.103		&0.036	&0.067 \\
				&A  & $ 1.5$	&0.072	&0.108		&0.041	&0.071		&0.068	&0.103		&0.036	&0.067 \\											      	       	        	       	        	       	
				1.2 &B  & $ 1.5$	&0.071	&0.173		&0.028	&0.124		&0.005	&0.199		&0.031	&0.110 \\
				&A  & $-1.5$	&0.083	&0.136		&0.035	&0.087		&0.120	&0.163		&0.051	&0.094 \\											      	       	        	       	        	       	
				1.3 &B  & $ 1.5$	&0.066	&0.137		&0.035	&0.095		&0.066	&0.135		&0.027	&0.087 \\
				&A  & $-1.5$	&0.067	&0.130		&0.034	&0.088		&0.074	&0.128		&0.035	&0.085 \\ \\
				
				2.1 &B  & $ 1.5$	&0.070	&0.106		&0.036	&0.069		&0.064	&0.099		&0.029	&0.067 \\
				&A  & $ 1.5$	&0.070	&0.106		&0.036	&0.069		&0.064	&0.099		&0.029	&0.067 \\			
				2.2 &B  & $ 1.5$	&0.067	&0.180		&0.036	&0.122		&0.003	&0.195		&0.042	&0.119 \\   
				&A  & $-1.5$	&0.088	&0.143		&0.030	&0.084		&0.123	&0.165		&0.055	&0.093 \\			
				2.3 &B  & $ 1.5$	&0.055	&0.138		&0.036	&0.103		&0.060	&0.135		&0.026	&0.097 \\   
				&A  & $-1.5$	&0.070	&0.131		&0.032	&0.088		&0.071	&0.125		&0.031	&0.085 \\   
				
				\hline
			\end{tabular}
		\end{center}
		\emph{Notes:} B stands for ``Before,'' and A stands for ``After.''
	\end{table}

	\subsection{Determining the type of break}
	We also evaluate the performance of the IC introduced in Section \ref{sec:test} for identifying the type of breaks. Table~\ref{tab:sim-IC} shows the empirical probability of selecting each break type, with the bolded values indicating the highest probability in each row for a given DGP. Overall, the IC performs effectively. Specifically, in DGP X.1 (break in spillover effects only) and DGP X.2 (break in the private effect only), the IC correctly identifies the type of break with a probability close to or equal to 1. In DGP X.3 (break in both), the IC occasionally favors a specification with fewer parameters when $T=50$. However, as $T$ increases, the empirical probability of selecting the correct specification becomes dominant.
	
	\begin{table}[htp]
		\begin{center}
			\caption{Empirical probability of selecting a specific type of break}\label{tab:sim-IC}
			\small
			\begin{tabular}{lccccccccccccc}
				\hline\hline
				& & \multicolumn{3}{c}{break in} && \multicolumn{3}{c}{break in} && \multicolumn{3}{c}{break in} \\
				$N$ & $T$& $\gamma$ & $\delta$ & both  && $\gamma$ & $\delta$ & both  && $\gamma$ & $\delta$ & both\\
				\hline
				&&  \multicolumn{11}{c}{Erd\"{o}s-R\'{e}nyi network} \\
				&&\multicolumn{3}{c}{DGP 1.1} && \multicolumn{3}{c}{DGP 1.2} && \multicolumn{3}{c}{DGP 1.3}\\
				\cline{3-5} \cline{7-9}  \cline{11-13}
				15 & 50  & \textbf{0.874}	&0.000	&0.126		&&0.000	&\textbf{1.000}	&0.000		&&0.000	&0.468	&\textbf{0.532}    \\
				15 & 100 & \textbf{0.972}	&0.000	&0.028		&&0.000	&\textbf{1.000}	&0.000		&&0.000	&0.320	&\textbf{0.680}    \\
				30 & 50  & \textbf{0.760}	&0.000	&0.240		&&0.000	&\textbf{1.000}	&0.000		&&\textbf{0.728}&0.014	&0.258    \\
				30 & 100 & \textbf{0.842}	&0.000	&0.158		&&0.000	&\textbf{1.000}	&0.000		&&0.000	&0.000	&\textbf{1.000}    \\

				&&\multicolumn{3}{c}{DGP 2.1} && \multicolumn{3}{c}{DGP 2.2} && \multicolumn{3}{c}{DGP 2.3}\\
				\cline{3-5} \cline{7-9}  \cline{11-13}   
				15 & 50   & \textbf{0.884}	&0.000	&0.116		&&0.000	&\textbf{1.000}	&0.000	&&0.002	&0.492	&\textbf{0.506}   \\
				15 & 100  & \textbf{0.942}	&0.000	&0.058		&&0.000	&\textbf{1.000}	&0.000	&&0.000	&0.256	&\textbf{0.744}   \\
				30 & 50   & \textbf{0.780}	&0.000	&0.220		&&0.000	&\textbf{1.000}	&0.000	&&\textbf{0.772}& 0.000	&0.228   \\
				30 & 100  & \textbf{0.858}	&0.000	&0.142		&&0.000	&\textbf{1.000}	&0.000	&&0.000	&0.000	&\textbf{1.000}   \\
				
				\hline
				&& \multicolumn{11}{c}{Continuous network} \\
				& &\multicolumn{3}{c}{DGP 1.1} && \multicolumn{3}{c}{DGP 1.2} && \multicolumn{3}{c}{DGP 1.3}\\
				\cline{3-5} \cline{7-9}  \cline{11-13}
				15 & 50  & \textbf{0.888}	&0.000	&0.112		&&0.000	&\textbf{1.000}	&0.000		&&0.048	&0.014	&\textbf{0.938}    \\
				15 & 100 & \textbf{0.932}	&0.000	&0.068		&&0.000	&\textbf{1.000}	&0.000		&&0.000	&0.000	&\textbf{1.000}    \\
				30 & 50  & \textbf{0.736}	&0.000	&0.264		&&0.006	&\textbf{0.992}	&0.002		&&\textbf{0.656}&0.000	&0.344   \\
				30 & 100 & \textbf{0.812}	&0.000	&0.188		&&0.000	&\textbf{1.000}	&0.000		&&0.002	&0.000	&\textbf{0.998}    \\

				&&\multicolumn{3}{c}{DGP 2.1} && \multicolumn{3}{c}{DGP 2.2} && \multicolumn{3}{c}{DGP 2.3}\\
				\cline{3-5} \cline{7-9}  \cline{11-13}
				15 & 50   & \textbf{0.874}	&0.000	&0.126		&&0.000	&\textbf{1.000}	&0.000		&&0.094	&0.002	&\textbf{0.904}  \\
				15 & 100  & \textbf{0.912}	&0.000	&0.088		&&0.000	&\textbf{1.000}	&0.000		&&0.000	&0.000	&\textbf{1.000}  \\
				30 & 50   & \textbf{0.760}	&0.000	&0.240		&&0.012	&\textbf{0.988}	&0.000		&&\textbf{0.692}&0.000	&0.308  \\
				30 & 100  & \textbf{0.786}	&0.000	&0.214		&&0.000	&\textbf{1.000}	&0.000		&&0.000	&0.000	&\textbf{1.000}  \\
				\hline
			\end{tabular}
		\end{center}

	\end{table}

	\section{Empirical analysis of cross-country R\&D spillover}\label{sec:application}
	
	The role of innovation in driving economic growth has garnered substantial attention from economists and policymakers. Innovation, fueled by knowledge gained from research and development (R\&D), expands the overall knowledge base, enabling more efficient resource use and boosting productivity within a country. Consequently, innovation is viewed as a crucial driver of an economy’s productivity growth  \citep{Romer1990}. Moreover, the knowledge embedded in exported goods can generate spillover effects, indirectly stimulating productivity growth in partner countries. Foreign R\&D also offers direct benefits, such as facilitating the acquisition of new skills and technologies, including production processes, organizational methods, and materials \citep{Coe1995}.
	Although the theory of cross-country R\&D spillovers is widely recognized, empirical estimates of R\&D spillover effects remain elusive, partly due to the ambiguity surrounding the specific channels through which these spillovers occur. 
	
	Existing studies on R\&D spillovers typically assume a predefined structure for spillover channels based on economic or geographic measures. While each of these measures can be justified by a specific theory, it is likely that spillover channels are shaped by multiple factors—both observable and unobservable—that may interact in complex, nonlinear ways. Consequently, using an adjacency matrix based on a single measure (e.g., an exponential or linear function) risks misspecifying the spillover structure. Furthermore, existing literature often assumes that spillover effects are constant over time. This assumption is also restrictive given the economic and political shocks that impact international relations and corporate strategies. Ignoring temporal variation may obscure essential information about network dynamics and lead to biased estimates of spillover effects.
	
	Motivated by these challenges, we revisit the study of cross-country R\&D spillovers to identify the latent spillover structure without imposing a specific network formation mechanism. In particular, we extend \eqref{eq:model} to enable spillover and/or private effects to vary over time, i.e.,
	\begin{equation}\label{eq:empirical-model}
		y_{it} = \alpha_i + \sum_{j=1}^N x_{jt} \gamma_{ij,t}+ z_{it}\delta_t + u_{it},
	\end{equation}
	where $y_{it}=\Delta\log f_{it}$, $x_{it}=\Delta\log S_{jt}^d$, and $z_{it}=\Delta\log H_{it}$. 
	We model the time variation in parameters using structural breaks, with the presence and type of breaks determined through diagnostic tests. Later, we will extend model~\eqref{eq:empirical-model} to incorporate potential heterogeneity in the private effect of human capital.  
	
	Our analysis focuses on 24 OECD countries, a sample widely examined in recent literature \citep[see, e.g.,][]{Potterie2001, Coe2009, Ertur2017}. We use the most up-to-date data from 1981 to 2019, sourced from the OECD and the Penn World Table version 10.0. Total factor productivity (TFP) is measured at a constant national price level, normalized to 1 for 2017. R\&D expenditure is calculated by multiplying the percentage of R\&D expenditure in GDP by the real GDP at constant 2017 national prices (in millions).\footnote{We construct R\&D expenditure in this way because the OECD's R\&D expenditure data is not normalized in the same manner as other data from the Penn World Table, while the latter does not provide expenditure level data.} To create a balanced panel, we interpolate missing R\&D expenditure values using data from \citet{Ertur2017} where available, or apply linear interpolation for values outside the sample period of \citet{Ertur2017}. Human capital is measured by the average years of schooling for the population aged 25 and older, consistent with the existing literature.

	\subsubsection*{Diagnostic analysis}

	Effective model estimation requires the knowledge of the number and type of structural breaks. We range the number of breaks from 0 (i.e., no breaks) to 2, and detect them sequentially. For each break, we consider three scenarios: breaks only in the spillover effect, only in the private effect, and in both.  Figure~\ref{fig:empirical-IC}(a) plots the IC results for different numbers and types of breaks. The results indicate that the best model features a single break in the spillover effects, while the private effects remain constant. However, the difference between this model and one that includes breaks in both the spillover and private effects is minor.
	
	\begin{figure}[htp]\caption{Information criterion to determine the number of breaks and groups}\label{fig:empirical-IC} 
		\begin{center}
			\begin{tabular}{cc}
				\includegraphics[scale=0.5]{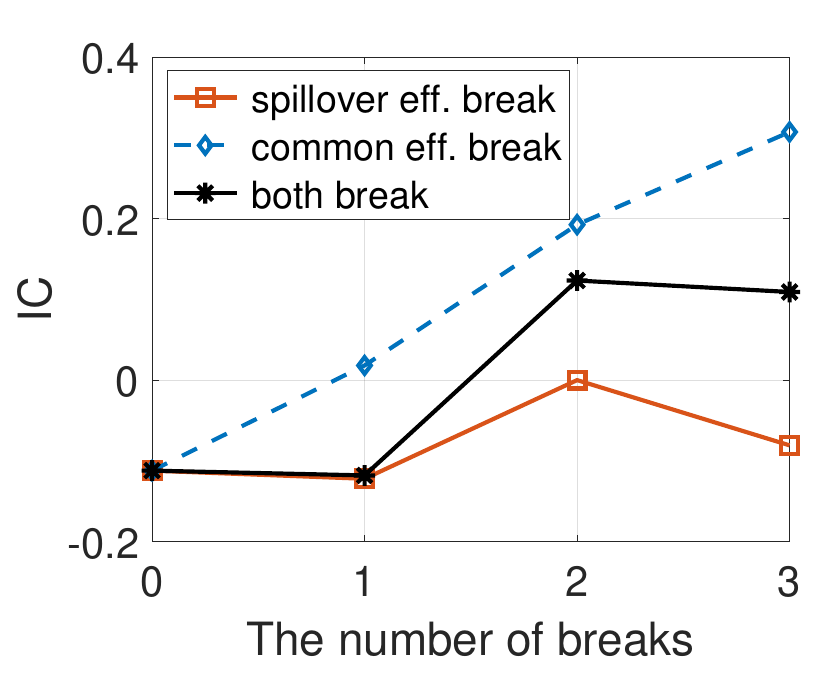}&
				\includegraphics[scale=0.5]{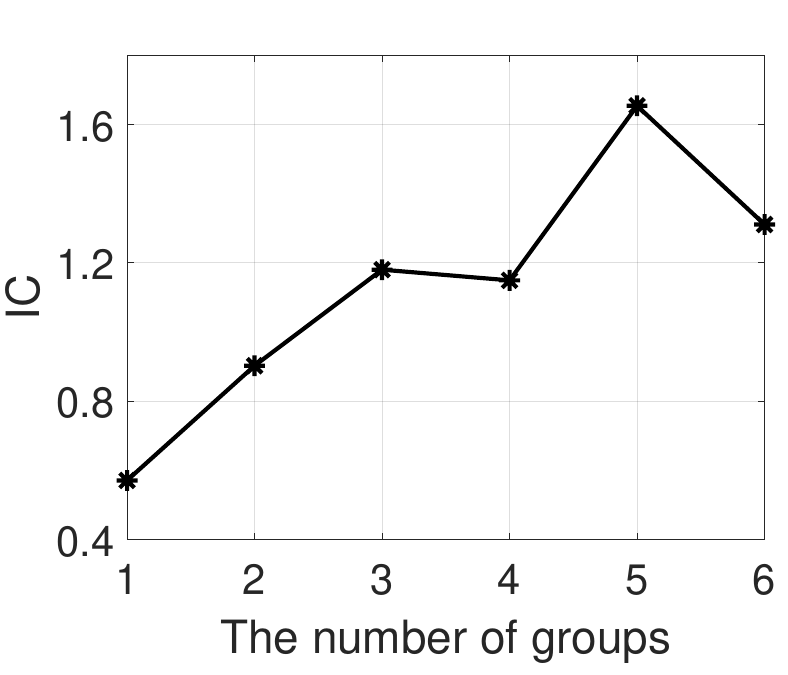}\\
				\\ (a) Choosing the number and type of breaks & (b) Choosing the number of groups
			\end{tabular}
		\end{center}
	\end{figure}
	
	\subsubsection*{Breakpoint and spillover estimates}
	We proceed to estimate the single break. Both models, whether including or excluding a break in the private effect, identify 2009 as the estimated breakpoint. This estimated breakpoint precisely matches the significant event of the global financial crisis and great recession. The crisis, which began developing in the U.S. in 2007 and was subsequently amplified by the European debt crisis from 2009 onward, impacted most of the countries in our sample. The estimated breakpoint aligns with documented evidence showing that innovation activities declined remarkably following the financial crisis of the late 2000s. For instance, \citet{OECD2009} reports that domestic and foreign companies listed on U.S. stock markets reduced their R\&D expenditures by 6.6\% in the first quarter of 2009.
	The reduced R\&D investment can be explained by tighter financial constraints caused by the crises \citep{campello2010, peia2022}. It is therefore unsurprising that R\&D expenditure and its correlation with total factor productivity (TFP) changed after the crisis. Additionally, the crisis likely altered international relationships through various channels, such as foreign direct investment (FDI) and trade, which, in turn, shifted R\&D spillover channels over time. Further explanations of the estimated breakpoint will be provided after discussing the spillover effects before and after the break.

	\begin{figure}[htp]\caption{Heatmap of R\&D spillover in the two regimes}\label{fig:empirical-network} 
		\begin{center}
			\begin{tabular}{cc}
				\includegraphics[scale=0.55]{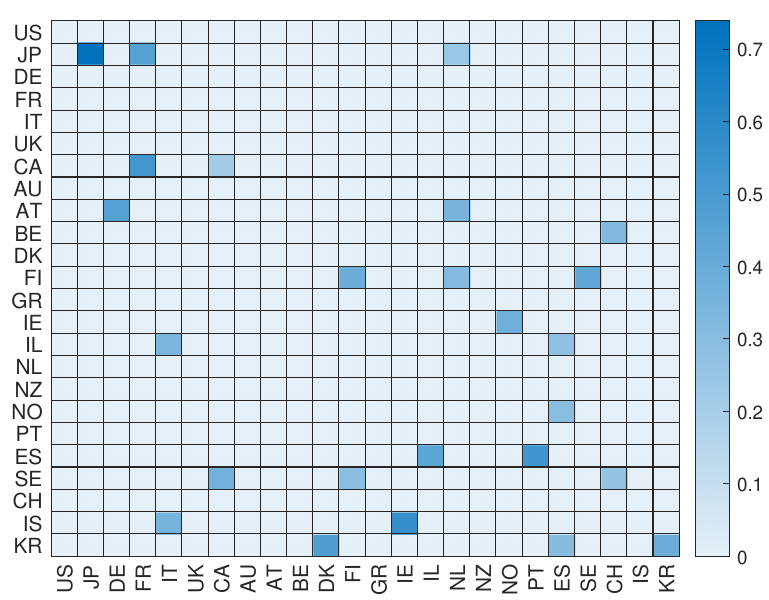} &
				\includegraphics[scale=0.55]{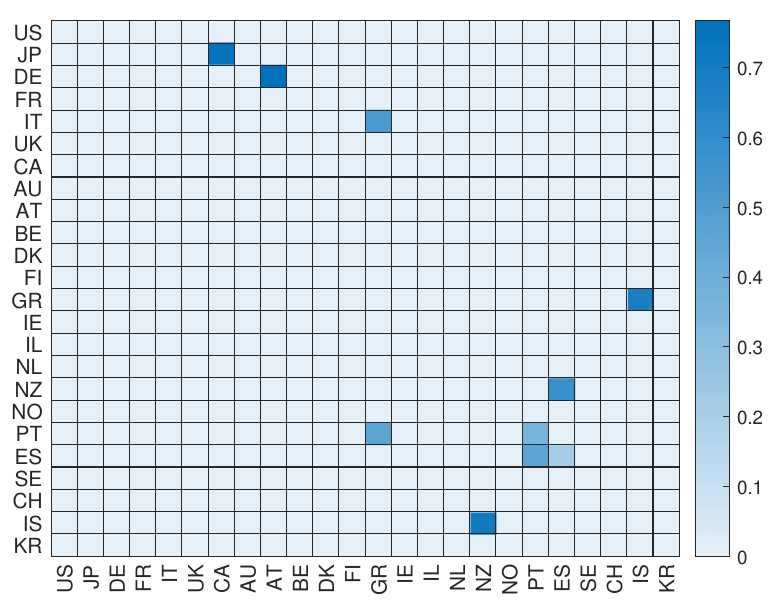}\\
				Before the break & After the break			
		\end{tabular}\end{center}
	\end{figure}

	Figure~\ref{fig:empirical-network} presents the heatmap of the estimated adjacency matrix across two distinct regimes. The results reveal that the short-run spillover structure is predominantly sparse, with only a limited number of countries exhibiting connections with others. Notably, this network becomes even more sparse following the structural break, with the network density, defined as the ratio of nonzero entries to the total possible edges, decreasing by around 51\% after the break. We discuss the spillover effects in the two regimes in turn.
	
	Before the structural break, approximately 2/3 of the countries generated R\&D spillovers, with six countries, namely France, Italy, Canada, the Netherlands, and Spain, emerging as key sources, transmitting their R\&D benefits to multiple destinations. Around half of the countries received spillovers from various sources, with Japan, Korean, and Sweden standing out as the largest beneficiary. Japan gained not only from its domestic R\&D efforts but also from those of Western European nations, e.g., France and the Netherlands. Similarly, Korea profited from its own R\&D activities and those of several European partners including Denmark and Spain. These findings underscore the strong connections between these European countries and their significant Asian allies. The identified spillover channels align with the substantial imports of machinery and transport equipment—goods that embody the highest level of technology—by Japan and Korea from Europe. Sweden receives the most R\&D spillover from Switzerland, Finland, and Canada. Notably, while the growth literature widely recognizes the private R\&D effect on a country's economic growth, our analyses indicate substantial variability across nations. For instance, strong private R\&D effects are evident in Japan, Finland, and Korea; moderate effects are observed in Canada and Spain, while for some countries, such effects remain too weak to be detected.

	Following the structural break identified in 2009, both the network structure and the magnitude of R\&D spillover effects exhibit remarkable changes. Notably, the post-break period is characterized by a more sparse linkage structure. Only about half of the countries in our sample continue to generate and receive R\&D spillovers, and most of these countries are connected to only a single partner, either as an influencer or recipient. This stands in sharp contrast to the pre-break period, which featured several major generators and recipients that interact with multiple countries. 
	In fact, a considerable number of pre-break R\&D spillover channels originated from key European countries, such as Germany, France, the Netherlands, and Italy, dissipate after the break. 
	
	\subsubsection*{Understanding the change of R\&D spillover structure}
	Further examination reveals that the R\&D expenditure greatly declined for many European countries, particularly in the years immediately following the structural break. This trend aligns with various institutional reports documenting a sharp drop in business R\&D and innovation in most OECD countries around 2009. The fact that the breakpoint year of 2009 coincides with the onset of the European debt crisis lends strong support to the theory and much evidence that financial crises can have profound and potentially long-lasting negative impacts on innovation. The negative impact of crises on R\&D expenditure can find its micro foundation, primarily through the channel of tighter financial constraints. Specifically, financial crises are known to create tighter financial constraints for firms. For instance, 
	\citet{campello2010}, using survey data from a broad range of firms in North America, Europe, and Asia collected at the end of 2008, show that firms worldwide planned significant budget reductions across nearly all policy areas for 2009 in response to the crisis. Especially, the survey reveals a plan of cut in technology spending by over 10\% in American and European firms during 2009.
	Furthermore, financial constraints are widely recognized as a critical factor influencing innovation. For example, \citet{Himmelberg1994} identify a strong, significant relationship between internal financing and R\&D activities for small high-tech firms in the U.S. 
	Considering the exacerbation of these constraints during financial crises, \citet{peia2022} argue that firms facing higher risks of financial strain and greater reliance on external financing tend to cut R\&D investments disproportionately during crisis periods. Smaller and private firms with weaker balance sheets are also particularly prone to reducing R\&D expenditures during financial disruptions.
	Historical evidence further supports this view. \citet{Babina2022}, using a century of U.S. patent data, document significant and sustained declines in patenting following the Great Depression due to disruptions in financial access, especially among young and inexperienced inventors. \citet{hardy-sever2021} reach similar conclusions, showing persistent reductions in R\&D investment during banking crises using industry-level data from developed countries.
	This body of evidence aligns with our observation of reduced R\&D expenditure in the years following the structural break. It also helps explain our finding that the primary R\&D spillover generators in the post-break period were largely countries less affected by the financial crisis, such as Canada and New Zealand.

	While there is substantial empirical evidence on how financial crises impact innovation, studies specifically examining the effects of crises on R\&D spillovers are scarce. Our analysis shows that the structure of R\&D spillovers became more sparse following the global financial crisis. This finding can be attributed partly to the decline in innovation during the crisis, as discussed earlier, and partly to the adverse effects of the crisis on cross-border economic interactions. For instance, numerous studies and reports have highlighted a sharp decline in international trade starting in late 2008. This decline was driven by multiple factors, including tighter credit conditions, disproportionate fall in domestic demand, and decreased output and trade of capital goods \citep{OECD2010outlook}. Also, severely impacted by the crisis is the FDI, as the liquidity constraints triggered by the crisis affected both firm owners and potential foreign buyers, leading to a marked decrease in cross-border mergers and acquisitions \citep{OECD2010outlook}.
	Given that cross-border economic activities are important channels for R\&D spillovers, it is not surprising that the spillover structure became more sparse due to the reduced bilateral interactions prompted by the financial crisis.

	To better understand the estimated latent spillover structure, we regress the estimated adjacency matrix on several bilateral variables commonly used to construct the adjacency matrix. These include the geographic distance (Dist), a dummy variable indicating whether two countries share a border (Contg),  a dummy variable for countries with a common official language (Lang), the trade flow (Trade) and foreign direct investment (FDI) from CEPII.\footnote{The trade flow and FDI are averaged over time in the pre- and post-break regimes, respectively, and are expressed in units of one million dollars. Distance is expressed in units of 1,000 km. Data source: http://www.cepii.fr/CEPII/en/bdd\_modele/bdd\_modele.asp}  
	Given that the adjacency matrix is bounded below by zero, we estimate a Tobit regression for each regime and present the results in Table~\ref{emp:tobit}. Also reported are the OLS estimates. Overall, we find that geographic and demographic variables partially explain the estimated adjacency matrix, but substantial variation remains unexplained. The effects of these variables also vary remarkably across regimes. In the pre-break regime, the Tobit model produces weakly significant estimates (at 15\%) of geographic distance, contiguous and common language dummies, suggesting R\&D spillover occurs more likely between countries that are closely located, share the common border and language. The bilateral economic activities explain little variation of the adjacency matrix. In the post-break regime, the significance of all proxies decreases except the contiguous dummy, partly there is less variation in the adjacency matrix due to more sparsity. This result shows that the financial crisis disrupt R\&D spillover with spillover remaining only among the highly close countries. Furthermore, the model fit indicates that, in both regimes, the short-term spillover structure cannot be fully explained by geographic, cultural, and economic variables alone, let alone by a single observable. Therefore, it is valuable to consider data-driven methods to uncover latent R\&D spillover structures.
	\begin{table}[htp]
		\begin{center}
			\caption{Regression of estimated adjacency matrix on covariates}\label{emp:tobit}\small
			\begin{tabular}{lrrrrrrrrrrrr}
				\hline\hline
				& \multicolumn{5}{c}{Tobit} &&  \multicolumn{5}{c}{OLS}\\
				
				& \multicolumn{2}{c}{Before} && \multicolumn{2}{c}{After} &&  \multicolumn{2}{c}{Before} && \multicolumn{2}{c}{After}\\
				&Coef.  & $p$-val.   && Coef.  & $p$-val.  &&Coef.  & $p$-val.   && Coef.  & $p$-val.  \\
				\cline{2-6}\cline{8-12}
				Dist  &  $-0.035$  &  0.117  &&  $ 0.013$  &  0.734    &&     $-0.001$   &  0.565    &&  $-0.001$  &  0.604 \\
				Contg &  $ 0.501$  &  0.144  &&  $ 1.288$  &  0.134    &&     $ 0.024$   &  0.164    &&  $ 0.010$  &  0.633 \\
				Lang  &  $ 0.385$  &  0.133  &&  $-0.511$  &  0.520    &&     $ 0.015$   &  0.223    &&  $ 0.008$  &  0.588 \\
				Trade &  $-0.007$  &  0.629  &&  $-0.000$  &  0.965    &&     $-0.000$   &  0.588    &&  $ 0.000$  &  0.994 \\
				FDI   &  $-19.11$  &  0.283  &&  $-3.846$  &  0.895    &&     $-0.124$   &  0.578    &&  $ 0.076$  &  0.875 \\
				Const &  $-1.217$  &  0.000  &&  $-3.449$  &  0.005    &&     $ 0.008$   &  0.158    &&  $ 0.001$  &  0.832 \\
				\hline
			\end{tabular}
		\end{center}
	\end{table}

	\subsubsection*{Estimating (heterogeneous) private effect of human capital}
	The role of human capital in productivity has garnered considerable attention in the growth literature. While human capital is generally regarded as beneficial for productivity, the magnitude and significance of the effect often vary across studies, depending on the specific samples analyzed, the quality of data, and the measures of educational outcomes used. For example, \citet{pritchett2001} finds that increases in educational attainment did not lead to corresponding growth in productivity in many developing countries, attributing this to poor institutional and educational quality. 
	In a recent review, \citet{psacharopoulos2018}, drawing on a meta-analysis of 1,120 estimates in 139 countries from 1950 to 2014, highlight that returns of education can be low or insignificant, especially in contexts where education does not translate into relevant skills or productive employment. 
	
	We revisit the role of human capital using Model~\eqref{eq:empirical-model}, controlling for time-varying R\&D spillovers. Our diagnostic analysis with IC indicates that the private effect of human capital remains largely constant after the structural break (see Figure~\ref{fig:empirical-IC}(a)). Thus, we estimate a time-invariant effect of human capital using the DML, while allowing R\&D spillovers to shift at the estimated breakpoint. We begin by estimating a homogeneous effect of human capital across countries, and will later explore potential heterogeneity. The DML estimate for the human capital effect is approximately 0.134 but not statistically significant. This result is in line with 
	\citet{Ertur2017}, who, using the same set of OECD countries (but over a shorter time horizon) and variables, also report an insignificant human capital effect and of the similar magnitude\footnote{Their estimates range from $-0.45$ to 0.23 across different subsamples and specifications.} after accounting for cross-sectional dependence and R\&D spillovers.
	As we measure human capital by years of schooling following the literature, our finding also aligns with the broader insight from the growth empirics that the quantity of education alone does not necessarily drive growth; rather, the quality of education plays a crucial role. For example, \citet{Islam1995} found an insignificant effect of education on growth using panel data from various samples, including OECD countries, attributing this to differences in the quality of education. The author also points out the disparity between panel data and cross-sectional analyses, with the former focusing on temporal relationships and often reporting an insignificant link between human capital and growth, whereas the latter often showing a positive relationship. \citet{Hanushek&Kimko2000}  also underscore that educational quality is more important for growth than educational quantity. In the similar spirit, \citet{Vandenbussche2006}, using panel data from OECD countries, argue that only skilled human capital contributes to growth, rather than the total human capital. They also find that the impact of human capital on productivity is weaker for the countries further from technology frontier. These arguments partly explain the insignificant estimate of human capital effect in our context after controlling for R\&D effects.

	Next, we examine potential heterogeneity in the human capital effect. While allowing for country-specific effects seems conceptually appealing, it suffers from the incidental parameter issue and estimation instability. 
	Traditional approaches often involve interacting human capital with an observable, such as a dummy indicating a country's G7 membership \citep{Coe2009, Ertur2017}, openness, physical capital, or labor force \citep{Miller2000}. However, the cross-country heterogeneity is likely to be driven by multiple factors jointly, both observed and unobserved, and the postulated group structure (i.e., which countries belong to which group) may not coincide with the true heterogeneity pattern. 
	To flexibly model cross-country heterogeneity and minimize potential misspecification, we adopt the latent group structure model described in \eqref{eq:model-group-delta}, assuming that the human capital effect is homogeneous within each group. Importantly, we do not impose a predefined group structure; instead, we estimate it based on the data. The number of groups $G$ is determined using the Bayesian IC (BIC) similar to \eqref{eq:IC} \citep[see, e.g.,][]{bonhomme&manresa2015,okui&wang2018}. By applying the SBSA method from \citet{Wang2021} with $G$ ranging from 1 to 6, we plot the IC in Figure~\ref{fig:empirical-IC}(b). It shows that $G=1$ is the best specification, confirming the reliability of the estimate obtained under the homogeneity assumption.
	
	To assess the robustness of our results, we re-estimate the spillover and human capital effects using $G=2$, the second-best specification indicated by the IC, which allows for some degree of heterogeneity. The resulting human capital effects are positive in both groups, but of distinct magnitudes. However, these estimates remain insignificant, which supports the IC result that favors the homogeneity assumption.
	Compared to the estimates under $G=1$, a slightly higher number of spillover links are detected in both the pre- and post-break periods. Nonetheless, the overall pattern of R\&D spillovers becoming more sparse after the break remains salient. More detailed results are available in Appendix \ref{app:two-groups}.

	\section{Conclusion}\label{sec:conclusion}
	
	This paper investigates the impact of financial crises on R\&D spillovers, introducing a novel model and estimation method for analyzing time-varying spillover effects without pre-specifying a linkage structure. Our approach recovers the latent spillover network and identifies the unknown breakpoint at which spillover and/or private effects experience a shift. We establish the super-consistency of the estimated breakpoint, which guarantees that the estimators of spillover and private effects under unknown breakpoints are asymptotically equivalent to those obtained under true breakpoints. We also establish the $\sqrt{NT}$-consistent DML estimator of the private effect, despite the fact that the spillover effects are estimated at a slower rate. 
	An analysis of data from 24 OECD countries over a 38-year period reveals that the cross-country R\&D spillover network became sparser following the 2009 financial crisis, suggesting a decline in innovation and disruption to its spillover mechanisms. 
	The ubiquity of spillovers in various economic and social contexts, such as corporate performance, crime, and educational outcomes, extends the applicability of our approach beyond the domain of technology diffusion.

	There are numerous avenues for future research, and we highlight two here. First, while we have demonstrated the super-consistency of our breakpoint estimator, this result does not enable statistical inference for the true breakpoint, as the asymptotic distribution of the estimator is not provided. According to our findings, the singleton set $\{\tilde{b}\}$ serves as a valid confidence set for $b^0$ at any confidence level, since $\Pr(\tilde{b} = b^0) > 1 - \alpha$ asymptotically for any $\alpha > 0$. However, such a confidence set fails to capture the statistical uncertainty associated with $\tilde{b}$. Developing methods for statistical inference on breakpoints in our framework would require entirely different approaches from those used for estimation.
	Second, our model assumes that the random variables in the system ($x_{it}$, $z_{it}$, and $u_{it}$) are only weakly dependent over time. However, in some applications, variables may exhibit strong serial dependence, such as in the case of random walk processes. Exploring spillover effects involving variables with unit roots or those within cointegrated systems containing multiple integrated variables would be both theoretically insightful and practically valuable.

	\section*{Declaration of generative AI and AI-assisted technologies in the writing process}

	During the preparation of this work the authors used Grammarly in order to improve the readability and language of the manuscript. After using this tool/service, the authors reviewed and edited the content as needed and takes full responsibility for the content of the published article.

	\bibliography{reference}
	\bibliographystyle{abbrvnat}

	\clearpage

	\clearpage
	\appendix

	\begin{center}
		{{\Large Supplementary appendix to\\} }	
		{\large Recovering latent linkage structures and spillover effects with structural breaks in panel data models}
	\end{center}

	
	This appendix presents the mathematical proofs of the asymptotic results and additional empirical results. Throughout the proofs, $C$ represents a generic constant, the value of which may vary in different contexts. We abbreviate ``with probability approaching one'' as ``w.p.a.1''. To reduce the number of parentheses in the notation, we denote $ E((\cdot)^2) $ as $ E(\cdot)^2$.

	\section{Properties of the preliminary estimator}
	
	This section presents the proof of the property of the preliminary estimators derived in Step 1.
	Recall that we define the excess risk as 
	\begin{align*}
		R (\beta, b):= \frac{1}{NT}\sum_{i=1}^N \sum_{t=1}^T E \left(y_{it} - W_{it} (b)' \beta_i \right)^2   - \frac{1}{NT}\sum_{i=1}^N \sum_{t=1}^T E\left(y_{it} - W_{it} (b^0)' \beta_i^0  \right)^2.
	\end{align*}

	\begin{proof}[Proof of Lemma \ref{theorem:risk-consistency}]
		Note that by Assumption \ref{as-exogeneity},
		\begin{align*}
			R (\beta, b) = \frac{1}{NT}\sum_{i=1}^N \sum_{t=1}^T E  \left( W_{it} (b^0)' \beta_i^0 - W_{it} (b)' \beta_i \right)^2 .
		\end{align*}	
		Let
		\begin{align*}
			\nu_{NT} ( \beta, b) := \frac{1}{NT} \sum_{i=1}^N \sum_{t=1}^T \left[ \left(y_{it} - W_{it} (b)' \beta_i \right)^2 - E \left(y_{it} - W_{it} (b)' \beta_i \right)^2 \right].
		\end{align*}
		By the definition of $(\hat \beta, \hat b)$ in \eqref{eq:obj-break-point-pls}, it holds that $V_{NT} (\hat \beta, \hat b) \leq V_{NT} (\beta^0, b^0)$. Thus, we have 
		\begin{align}
			R(\hat \beta, \hat b)\label{eq:risk-inequality}
			&\leq  \nu_{NT} ( \beta^0, b^0) - \nu_{NT} ( \hat \beta, \hat b) + D(\beta^0, b^0) -  D(\hat \beta, \hat b)  \nonumber\\
			&=\left( \nu_{NT} ( \beta^0, \hat b) - \nu_{NT} ( \hat \beta, \hat b) \right) + \left(\nu_{NT} ( \beta^0, b^0) - \nu_{NT} (  \beta^0, \hat b) \right) \\
			&\quad + D(\beta^0, b^0) -  D(\hat \beta , \hat b) .\nonumber
		\end{align}
		We examine the three parts in~\eqref{eq:risk-inequality} separately. First, we show that the second part $\nu_{NT} ( \beta^0, b^0) - \nu_{NT} (  \beta^0, \hat b)$ is small.
		We observe that for $b \in \mathcal{T}$ and $b > b^0$, 
		\begin{align*}
			& \nu_{NT} ( \beta^0, b^0) - \nu_{NT} (  \beta^0, b) \\
			= &  \frac{1}{NT} \sum_{i=1}^N \sum_{t=1}^T \left(u_{it}^2 - E(u_{it})^2\right) \\
			& -  \frac{1}{NT} \sum_{i=1}^N \sum_{t=1}^T \left( \left[\big(W_{it}(b^0) - W_{it} (b)\big)' \beta_i^0  + u_{it} \right]^2 - E \left[\big(W_{it} (b^0) - W_{it} (b)\big)' \beta_i^0 + u_{it} \right]^2 \right) \\
			=& -  \frac{1}{NT} \sum_{i=1}^N \sum_{t=b^0+1}^{b} \left( \left[\big(W_{it}(b^0) - W_{it} (b)\big)' \beta_i^0 \right]^2 - E \left[\big(W_{it} (b^0) - W_{it} (b)\big)' \beta_i^0 \right]^2 \right) \\
			& - \frac{2}{NT} \sum_{i=1}^N \sum_{t=b^0+1}^{b} u_{it} \left[W_{it}(b^0) - W_{it} (b)\right]' \beta_i^0 \\
			= &-  \frac{1}{NT} \sum_{i=1}^N \sum_{t=b^0+1}^{b} \left( \left[ X_{it}' (\gamma_{i,A}^0 - \gamma_{i,B}^0)\right]^2 - E \left[X_{it}' (\gamma_{i,A}^0 - \gamma_{i,B}^0) \right]^2\right) \\
			& - \frac{2}{NT} \sum_{i=1}^N \sum_{t=b^0+1}^{b} u_{it}X_{it}' (\gamma_{i,A}^0 - \gamma_{i,B}^0) \\
			=&  O_p\left( \frac{1}{\sqrt{NT}}\right),
		\end{align*}
		where the last equality follows by Chebyshev inequality under Assumptions \ref{as-exogeneity}, \ref{as-tail}, \ref{as-mixing}, and \ref{as-compactness} in combination with Lemmas \ref{lem:exp-tail} and \ref{lem:mixing-g} in the Online Supplement.
		We can obtain the same results for $b< b^0$. Thus, we have 
		\begin{align*}
			\nu_{NT} ( \beta^0, b^0) - \nu_{NT} (  \beta^0, \hat b)  = O_p \left( \frac{1}{\sqrt{NT}}\right).
		\end{align*}
		
		Next, we examine the first part of~\eqref{eq:risk-inequality}. Note that for any $b\in\{1,\ldots,T\}$, we have
		\begin{align}
			& \nu_{NT} ( \beta^0,   b) - \nu_{NT} ( \beta, b) \nonumber \\
			=&   \frac{1}{NT} \sum_{i=1}^N \sum_{t=1}^T \left( \left[\big(W_{it}(b^0) - W_{it} (b)\big)' \beta_i^0  + u_{it} \right]^2 - E \left[\big(W_{it} (b^0) - W_{it} (b)\big)' \beta_i^0 + u_{it} \right]^2 \right) \nonumber \\
			& -  \frac{1}{NT} \sum_{i=1}^N \sum_{t=1}^T \left( \left[W_{it}(b^0)'\beta_i^0  - W_{it} (b)'\beta_i  + u_{it} \right]^2 - E \left[W_{it} (b^0)'\beta_i^0  - W_{it} (b)' \beta_i + u_{it} \right]^2 \right) \nonumber  \\
			=& - \frac{2}{NT} \sum_{i=1}^N \sum_{t=1}^T \Big( W_{it}(b^0)'\beta_i^0 W_{it} (b)' \left(\beta_i^0 - \beta_i\right)  - E \left[W_{it}(b^0)'\beta_i^0 W_{it} (b)' \left(\beta_i^0 - \beta_i\right) \right] \Big)\nonumber \\
			& +  \frac{1}{NT} \sum_{i=1}^N \sum_{t=1}^T \left( \left[W_{it}(b)'\beta_i^0\right]^2  - \left[W_{it} (b)'\beta_i \right]^2 - E\left[(W_{it} (b)'\beta_i^0)^2  - (W_{it} (b)' \beta_i )^2 \right]\right) \nonumber \\
			& + \frac{2}{NT} \sum_{i=1}^N \sum_{t=1}^T u_{it} W_{it} (b)' (\beta_i - \beta_i^0).\label{eq:nu-diff-1}
		\end{align}
		The right-hand side of \eqref{eq:nu-diff-1} further consists of three parts. For the first part, note that 
		\begin{align*}
			& \left|	\frac{1}{NT} \sum_{i=1}^N \sum_{t=1}^T \Big( W_{it}(b^0)'\beta_i^0 W_{it} (b)' (\beta_i^0 - \beta_i)  - E \left[W_{it}(b^0)'\beta_i^0 W_{it} (b)' (\beta_i^0 - \beta_i) \right] \Big) \right|  \\
			=&\left| \frac{1}{NT} \sum_{i=1}^N \sum_{t=1}^T \sum_{k=1}^p  \Big( W_{it}(b^0)'\beta_i^0 W_{itk} (b) - E\left[  W_{it}(b^0)'\beta_i^0 W_{itk} (b)\right]\Big) (\beta_{ik}^0 - \beta_{ik})\right|   \\
			\leq & \frac{1}{N} \sum_{i=1}^N \sum_{k=1}^p \left| \frac{1}{T} \sum_{t=1}^T  \Big( W_{it}(b^0)'\beta_i^0 W_{itk} (b) - E\left[  W_{it}(b^0)'\beta_i^0 W_{itk} (b) \right]\Big) \right| \cdot \left|  \beta_{ik} - \beta_{ik}^0  \right| ,
		\end{align*}
		where $W_{itk} (b)$ is the $k$-th element of $W_{it} (b)$.
		Applying Lemma \ref{lem:fuk-nagaev-dependent-ij-b}, whose conditions are satisfied under Assumptions  \ref{as-exogeneity}--\ref{as-compactness}, as established by Lemmas \ref{lem:exp-tail} and \ref{lem:mixing-g}, w.p.a.1, we have
		\begin{align*}
			\max_b \max_i \max_k  \left| \frac{1}{T} \sum_{t=1}^T \Big( W_{it}(b^0)'\beta_i^0 W_{itk} (b) - E\left[  W_{it}(b^0)'\beta_i^0 W_{itk} (b)\right] \Big)\right| \leq C \frac{\log (Np)}{\sqrt{T}},
		\end{align*}
		and thus
		\begin{align*}
			& \left| \frac{1}{NT} \sum_{i=1}^N \sum_{t=1}^T \Big( W_{it}(b^0)'\beta_i^0 W_{it} (b)' (\beta_i^0 - \beta_i)  - E \left[W_{it}(b^0)'\beta_i^0 W_{it} (b)' (\beta_i^0 - \beta_i) \right] \Big) \right| \\ 
			\leq & C \frac{\log (Np)}{\sqrt{T}} \frac{1}{N} \sum_{i=1}^N \sum_{k=1}^p \left|  \beta_{ik} - \beta_{ik}^0  \right|.
		\end{align*}
		A similar argument can be used to bound the third part of \eqref{eq:nu-diff-1}. Specifically, we have 
		\begin{align*}
			\left| \frac{1}{NT} \sum_{i=1}^N \sum_{t=1}^T u_{it} W_{it} (b)' (\beta_i - \beta_i^0) \right| 
			\leq  \frac{1}{N} \sum_{i=1}^N \sum_{k=1}^p \left| \frac{1}{T} \sum_{t=1}^T u_{it} W_{itk} (b) \right| \cdot \left|  \beta_{ik} - \beta_{ik}^0  \right| .
		\end{align*}
		By Lemma \ref{lem:fuk-nagaev-dependent-ij-b}, w.p.a.1, 	
		$
			\max_b \max_i \max_k  \left|1/T \sum_{t=1}^T  u_{it} W_{itk} (b) \right| \leq C \log (Np)/\sqrt{T},
		$
		and thus
		\begin{align*}
			\left| \frac{1}{NT} \sum_{i=1}^N \sum_{t=1}^T u_{it} W_{it} (b)' (\beta_i - \beta_i^0) \right| \leq C \frac{\log (Np)}{\sqrt{T}} \frac{1}{N} \sum_{i=1}^N \sum_{k=1}^p \left|  \beta_{ik} - \beta_{ik}^0  \right|.
		\end{align*}
		
		Then for the second part of~\eqref{eq:nu-diff-1}, we first note that 
		\begin{align*}
			& \left|  \frac{1}{NT} \sum_{i=1}^N \sum_{t=1}^T \Big( [W_{it}(b)'\beta_i^0]^2  - [W_{it} (b)'\beta_i]^2 - E \left[(W_{it} (b)'\beta_i^0)^2  - (W_{it} (b)' \beta_i )^2 \right]\Big)\right| \\
			= & \left|  \frac{1}{NT} \sum_{i=1}^N \sum_{t=1}^T \Big(\beta_i^0 + \beta_i \Big)' \Big(W_{it} (b) W_{it}(b)' - E\left[W_{it} (b) W_{it}(b)'\right]\Big) \Big(\beta_i^0 - \beta_i\Big)\right| \\
			=&  \left|  \frac{1}{NT} \sum_{i=1}^N \sum_{t=1}^T \sum_{k=1}^p \sum_{k'=1}^p \Big(\beta_{ik'}^0 + \beta_{ik'} \Big) \Big(  W_{itk'} (b) W_{itk}(b) - E\left[ W_{itk'} (b) W_{itk}(b)\right]\Big) \Big(\beta_{ik}^0 - \beta_{ik}\Big) \right| \\
			\leq &\frac{C}{N} \sum_{i=1}^N \sum_{k=1}^p \sum_{k'=1}^p \left| \frac{1}{T}\sum_{t=1}^T \Big(  W_{itk'} (b) W_{itk}(b) - E\left[W_{itk'} (b) W_{itk}(b)\right]\Big) \right| \cdot | \beta_{ik}^0 - \beta_{ik} |.
		\end{align*}
		Further noting that  Lemma~\ref{lem:fuk-nagaev-dependent-b} implies 
		\begin{align*}
			\max_b \max_i \max_k \max_{k'} \left|\frac{1}{T}\sum_{t=1}^T \Big(  W_{itk'} (b) W_{itk}(b) - E[W_{itk'} (b) W_{itk}(b)]\Big) \right| \leq C \frac{\log (Np^2)}{\sqrt{T}},\quad \textrm{w.p.a.1},
		\end{align*}
	 	we thus have 
		\begin{align*}
			& \left|  \frac{1}{NT} \sum_{i=1}^N \sum_{t=1}^T \Big( [W_{it}(b)'\beta_i^0]^2  - [W_{it} (b)'\beta_i ]^2 - E \left[(W_{it} (b)'\beta_i^0)^2  - (W_{it} (b)' \beta_i )^2 \right]\Big)\right|  \\
			\leq & C \frac{\log (Np^2)}{\sqrt{T}} \frac{p}{N} \sum_{i=1}^N \sum_{k=1}^p \left|  \beta_{ik} - \beta_{ik}^0  \right|.
		\end{align*}
		Combining the results of the three parts of~\eqref{eq:nu-diff-1}, we can obtain that 
		\begin{align*}
			\nu_{NT} ( \beta^0, \hat b) - \nu_{NT} ( \hat \beta, \hat b) \leq  C\frac{\log (Np^2)}{\sqrt{T}} \frac{p}{N} \sum_{i=1}^N \sum_{k=1}^p \left|  \beta_{ik} - \beta_{ik}^0  \right|, \quad \textrm{w.p.a.1}.
		\end{align*}

		Depending on how close $\hat \beta$ is to $\beta^0$, we consider two cases, namely $|\hat \beta - \beta^0|_1 \leq |\beta^0|_1$ and $|\hat \beta - \beta^0|_1> |\beta^0|_1$.
		In the first case where $|\hat \beta - \beta^0|_1 \leq |\beta^0|_1$, w.p.a.1, 
		\begin{align*}
			\nu_{NT} ( \beta^0, \hat b) - \nu_{NT} ( \hat \beta, \hat b) \leq  C\frac{\log (Np^2)}{\sqrt{T}} \frac{p}{N} \sum_{i=1}^N \sum_{k=1}^p \left|  \beta_{ik} - \beta_{ik}^0  \right| \leq  C\frac{\log (Np^2)}{\sqrt{T}} \frac{p}{N}|\beta^0|_1.
		\end{align*}
		Moreover, if $|\hat \beta - \beta^0|_1 \leq |\beta^0|_1$, we have $|\hat \beta |_1 \leq |\hat \beta - \beta^0|_1 + |\beta^0|_1 \leq 2 |\beta^0|_1 $. This implies that, for the third part of~\eqref{eq:risk-inequality}, we have
		$
			| D(\beta^0, b^0) - D(\hat \beta, \hat b)| \leq 3 D_{NT} |\beta^0|_1.
		$
		Thus, we can obtain the excess risk associated with $( \hat \beta, \hat b)$ as
		\begin{align}\label{eq:risk-case1}
			R( \hat \beta, \hat b) \leq & O_p\left( \frac{1}{\sqrt{T}} \right) + \left( C \frac{\log (Np^2)}{\sqrt{T}} \frac{p}{N} + 3 D_{NT} \right) |\beta^0|_1 \nonumber \\
			\leq &  O_p\left( \frac{1}{\sqrt{T}} \right) + \left( C \frac{\log (Np^2)}{\sqrt{T}} \frac{p}{N} + 3 D_{NT} \right) s M \nonumber \\
			=& O_p \left( s D_{NT}  \right),
		\end{align}
		by Assumptions \ref{as-compactness} and \ref{as-penalty-order}, and by noting that $p=O(N)$ and $s \log (Np^2) / \sqrt{T} $ is of larger order than $1/\sqrt{NT}$.
		
		In the second case where $|\hat \beta - \beta^0|_1 > |\beta^0|_1$, we partition a nonzero parameter $\beta$ as $\beta=(\beta_J,\beta_{J^c})$, such that $\beta_J=\{\beta_j: j\in J(\beta^0)\}$ and $\beta_{J^c}=\{\beta_j: j\notin J(\beta^0)\}$. We then have
		\begin{align*}
			R( \hat \beta, \hat b) \leq &O_p\left( \frac{1}{\sqrt{NT}} \right) + \left( C \frac{\log (Np^2)}{\sqrt{T}} \right) \frac{p}{N} |\hat\beta - \beta^0|_1 +  D(\beta^0, b^0) - D(\hat \beta, \hat b)\\
			\leq & O_p\left( \frac{1}{\sqrt{NT}} \right) + D_{NT} |\hat\beta - \beta^0|_1 +  D(\beta^0 , b^0) - D(\hat \beta, \hat b) \\
			\leq & O_p\left( \frac{1}{\sqrt{NT}} \right) + D_{NT} |(\hat\beta - \beta^0)_J|_1 +  D(\beta^0, b^0) + D_{NT} |\hat \beta_{J^c}|_1 - D(\hat \beta_{J^c}, \hat b),
		\end{align*}
		where the second inequality follows from Assumption \ref{as-penalty-order} and $p=O(N)$, and the last inequality uses the decomposition $ \hat\beta - \beta^0 = (\hat \beta - \beta^0 )_J + \hat \beta_{J^c}$. By Assumption \ref{as-compactness}, it holds that $|(\hat\beta - \beta^0)_J|_1 \leq 2 s M$ and $D(\beta^0, b^0) \leq D_{NT} s M$. Clearly, $ D_{NT} |\hat \beta_{J^c}|_1 - D(\hat \beta_{J^c}, \hat b)\geq 0$. Thus, we have 
		\begin{align}\label{eq:risk-case2}
			R( \hat \beta, \hat b) \leq O_p\left( \frac{1}{\sqrt{NT}} \right) + 3 s M D_{NT} = O_p (sD_{NT}),
		\end{align}
		where the last equality follows by noting that $ s D_{NT} \geq s \log (Np^2) / \sqrt{T} $ is of larger order than $1/\sqrt{NT}$ and also from Assumptions \ref{as-compactness} and \ref{as-penalty-order}.

		Combining the results in~\eqref{eq:risk-case1} and~\eqref{eq:risk-case2}, and using Assumption \ref{as-penalty-order}, we have
		\begin{align*}
			R( \hat \beta, \hat b) = O_p \left( s D_{NT} \right).
		\end{align*}
		
	\end{proof}

	Next, we show the rate of convergence of the preliminary breakpoint estimator.
	
	\begin{lemma}\label{th-b-initial-rate}
		Suppose that Assumptions \ref{as-exogeneity}--\ref{as-break-date} are satisfied. Then
		\begin{align*}
			|\hat b - b^0 | = O_p ( s T D_{NT} ).
		\end{align*}
	\end{lemma}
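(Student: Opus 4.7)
\section*{Proof proposal for Lemma \ref{th-b-initial-rate}}

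The plan is to translate the excess-risk bound $R(\hat\beta,\hat b)=O_p(sD_{NT})$ established in Lemma \ref{theorem:risk-consistency} into a rate for the breakpoint by lower-bounding the excess risk in terms of $|\hat b - b^0|/T$. By symmetry I focus on the event $\{\hat b > b^0\}$; the case $\hat b < b^0$ follows analogously with the roles of the pre- and post-break regimes swapped.

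Since Assumption \ref{as-exogeneity} gives $R(\hat\beta,\hat b)=\frac{1}{NT}\sum_i\sum_t E\bigl[W_{it}(b^0)'\beta_i^0-W_{it}(\hat b)'\hat\beta_i\bigr]^2$, I restrict attention to the contribution from $t\in(b^0,\hat b]$, where the true model evaluates at $\gamma_{i,A}^0$ while the fitted model evaluates at $\hat\gamma_{i,B}$. A direct computation shows that for such $t$ the integrand equals $A_{it}+B_{it}$, with the ``break signal''
\begin{equation*}
A_{it}:=X_{it}'(\gamma_{i,A}^0-\gamma_{i,B}^0)
\end{equation*}
and the ``pre-regime estimation error'' $B_{it}:=W_{it}'(\beta_{i,B}^0-\hat\beta_{i,B})$. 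Using the elementary inequality $(a+b)^2\ge \tfrac12 a^2-2b^2$ and summing, I obtain
\begin{equation*}
NT\cdot R(\hat\beta,\hat b)\ \ge\ \tfrac12\sum_{i=1}^N\sum_{t=b^0+1}^{\hat b}E[A_{it}^2]\ -\ 2\sum_{i=1}^N\sum_{t=b^0+1}^{\hat b}E[B_{it}^2].
\end{equation*}

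The first (signal) sum is lower-bounded using the break-identification condition: Assumption \ref{as-break-identification} implies $\tfrac1N\sum_i E[A_{it}^2]\ge\underline m$ for every $t$, so it is at least $\tfrac12 N(\hat b-b^0)\underline m$. For the error sum, Assumption \ref{as-semi-stationary} gives $\max_{b^0<t\le\hat b}E[B_{it}^2]\le (C/b^0)\sum_{s=1}^{b^0}E[B_{is}^2]$ for each $i$, hence $\sum_{t=b^0+1}^{\hat b}E[B_{it}^2]\le C(\hat b-b^0)/b^0\cdot\sum_{s=1}^{b^0}E[B_{is}^2]$. Crucially, $\sum_i\sum_{s=1}^{b^0}E[B_{is}^2]$ is exactly the pre-break piece of $NT\cdot R(\hat\beta,\hat b)$ (since for $s\le b^0$ the integrand in $R$ equals $B_{is}$ under $\hat b>b^0$), so it is bounded by $NT\cdot R(\hat\beta,\hat b)$. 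Using Assumption \ref{as-break-date} to replace $b^0$ by $\tau_{\min}T$, I conclude
\begin{equation*}
R(\hat\beta,\hat b)\Bigl[1+\tfrac{2C(\hat b-b^0)}{\tau_{\min}T}\Bigr]\ \ge\ \frac{\underline m(\hat b-b^0)}{2T}.
\end{equation*}

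Because $\hat b-b^0\le T$, the bracket is bounded by a constant $1+2C/\tau_{\min}$, giving $(\hat b-b^0)/T \le C'\cdot R(\hat\beta,\hat b)/\underline m$ on the event $\hat b>b^0$. Plugging in $R(\hat\beta,\hat b)=O_p(sD_{NT})$ from Lemma \ref{theorem:risk-consistency} yields $\hat b-b^0=O_p(sTD_{NT})$, and the mirror argument using the second half of Assumption \ref{as-semi-stationary} (together with the post-break slice of $R$) handles $\hat b<b^0$. The main obstacle, in my view, is the bookkeeping in steps 2--5: one must verify that the ``error term'' $B_{it}$ appearing in the window $(b^0,\hat b]$ is exactly the integrand of the pre-break slice of $R(\hat\beta,\hat b)$, so that Assumption \ref{as-semi-stationary} can be invoked to recycle that slice as an upper bound. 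Everything else is algebraic manipulation and an appeal to the previous lemma.
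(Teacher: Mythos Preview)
Your proof is correct and follows essentially the same approach as the paper: both arguments isolate the middle window $t\in(b^0,\hat b]$, split its contribution into the break signal $X_{it}'(\gamma_{i,A}^0-\gamma_{i,B}^0)$ and the pre-regime estimation error $W_{it}'(\beta_{i,B}^0-\hat\beta_{i,B})$, invoke Assumption~\ref{as-break-identification} for the signal lower bound, invoke Assumption~\ref{as-semi-stationary} to control the error term via the pre-break slice of $R(\hat\beta,\hat b)$, and then appeal to Lemma~\ref{theorem:risk-consistency}. The only difference is algebraic packaging: the paper sandwiches the pure signal $\sum E[A_{it}^2]$ between $\underline{m}(\hat b-b^0)/T$ and a multiple of $\epsilon$ (obtaining $\hat b-b^0<\epsilon(\underline{m}/T-C\epsilon/b^0)^{-1}$, which requires the extra observation $\epsilon=o(1)$ to resolve), whereas you apply $(a+b)^2\ge\tfrac12 a^2-2b^2$ directly to the middle slice of $R$ and use the crude bound $\hat b-b^0\le T$ to absorb the bracket into a constant---arguably a cleaner finish.
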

	
	\begin{proof}
		We consider two cases, $b > b^0$ and $b < b^0$. In each case, we can decompose the risk as follows. When $b > b^0$, we have
		\begin{align}\label{eq:R:b>b0}
			R (\beta, b) =& \frac{1}{NT}\sum_{i=1}^N \sum_{t=1}^{b^0} E ( W_{it}' \beta_{i,B}^0 - W_{it}' \beta_{i,B} )^2   + \frac{1}{NT}\sum_{i=1}^N \sum_{t=b^0+1}^{b} E ( W_{it} ' \beta_{i,A}^0  - W_{it}' \beta_{i,B} )^2 \nonumber\\
			& + \frac{1}{NT}\sum_{i=1}^N \sum_{t=b+1}^T E ( W_{it}' \beta_{i,A}^0  - W_{it}' \beta_{i,A})^2 .
		\end{align}	
		When $b < b^0$, we have
		\begin{align}\label{eq:R:b<b0}
			R (\beta, b) =& \frac{1}{NT}\sum_{i=1}^N \sum_{t=1}^{b} E  ( W_{it}' \beta_{i,B}^0 - W_{it}' \beta_{i,B} )^2   + \frac{1}{NT}\sum_{i=1}^N \sum_{t=b+1}^{b^0} E  ( W_{it} ' \beta_{i,B}^0  - W_{it}' \beta_{i,A} )^2  \nonumber\\
			& + \frac{1}{NT}\sum_{i=1}^N \sum_{t=b^0+1}^T E (  W_{it}' \beta_{i,A}^0 - W_{it}' \beta_{i,A})^2.
		\end{align}	
		We note that all terms on the right-hand side of~\eqref{eq:R:b<b0} and~\eqref{eq:R:b>b0} are non-negative. Let $\epsilon = C s D_{NT}$ for some $C>0$. Then, Lemma \ref{theorem:risk-consistency} implies that we can make $\Pr \left(R(\hat \beta, \hat b) < \epsilon\right)$ arbitrarily close to 1 by taking a sufficiently large $C$. Hence, in the following, we will focus on the analysis in the case where $R(\hat \beta, \hat b) < \epsilon$ holds. In this case, all terms on the right-hand sides of \eqref{eq:R:b>b0} and \eqref{eq:R:b<b0} evaluated at $(\hat \beta, \hat b)$ are less than $\epsilon$. Below, we show that when $R( \beta, b) < \epsilon$, we have $|b-b^0| < C'\epsilon T$ for some $C'$.

		We first consider the case where $b> b^0$. We aim to bound 
		$
			1/(NT)\sum_{i=1}^N \sum_{t=b^0+1}^{b} E  ( W_{it} ' \beta_{i,A}^0  - W_{it}' \beta_{i,B}^0 )^2
		$
		from both sides. Note that
		\begin{align}\label{eq:diff-coef-pre-post-1}
			&\frac{1}{NT}\sum_{i=1}^N \sum_{t=b^0+1}^{b} E  ( W_{it} ' \beta_{i,A}^0  - W_{it}' \beta_{i,B}^0 )^2  \nonumber\\
			\leq &  
			\frac{1}{NT}\sum_{i=1}^N \sum_{t=b^0+1}^{b} E  ( W_{it} ' \beta_{i,A}^0 - W_{it}' \beta_{i,B} )^2   
			+	\frac{1}{NT}\sum_{i=1}^N \sum_{t=b^0+1}^{b} E ( W_{it} ' \beta_{i,B}^0 - W_{it} '\beta_{i,B} )^2.
		\end{align}
		The first term on the right-hand side of~\eqref{eq:diff-coef-pre-post-1} is less than $ \epsilon$, since $R(\beta,b)<\epsilon$.
		For the second term of~\eqref{eq:diff-coef-pre-post-1}, we have
		\begin{align*}
			\frac{1}{NT}\sum_{i=1}^N \sum_{t=b^0+1}^{b} E ( W_{it} ' \beta_{i,B}^0 - W_{it} '\beta_{i,B} )^2 
			\leq  \frac{C}{NT}\frac{b-b^0}{b^0}\sum_{i=1}^N \sum_{t=1}^{b^0} E (  W_{it}'\beta_{i,B}^0 - W_{it} '\beta_{i,B} )^2    \leq C \epsilon \frac{b-b^0}{b^0},
		\end{align*}
		where the first inequality follows from Assumption \ref{as-semi-stationary}, and the last inequality follows from 
		$
		1/(NT)\sum_{i=1}^N \sum_{t=1}^{b^0} E  ( W_{it}' \beta_{i,B}^0 - W_{it}' \beta_{i,B} )^2 < \epsilon,	
		$
		which is further due to $R(\beta,b)<\epsilon$. 
		Thus, we can bound~\eqref{eq:diff-coef-pre-post-1} from above as
		\begin{align}\label{eq:upper-bound}
			\frac{1}{NT}\sum_{i=1}^N \sum_{t=b^0+1}^{b} E  ( W_{it} ' \beta_{i,A}^0 - W_{it}' \beta_{i,B}^0 )^2  < \left(1 + C\frac{b-b^0}{b^0}\right) \epsilon.
		\end{align}
		On the other hand, Assumption \ref{as-break-identification} implies that 
		\begin{align}\label{eq:lower-bound}
			&\frac{1}{NT}\sum_{i=1}^N \sum_{t=b^0+1}^{b} E  ( W_{it} ' \beta_{i,A}^0 - W_{it}' \beta_{i,B}^0 )^2  \nonumber\\
			= & 	\frac{1}{NT}\sum_{i=1}^N \sum_{t=b^0+1}^{b} E  \left( X_{it} '  \gamma_{i,A}^0 - X_{it}'\gamma_{i,B}^0\right)^2  \geq \frac{b-b^0}{T} \underline{m}.
		\end{align}
		Combining~\eqref{eq:upper-bound} and~\eqref{eq:lower-bound}, we have
		\begin{align*}
			\frac{b-b^0}{T} \underline{m} < \left(1 + C\frac{b-b^0}{b^0}\right) \epsilon,
		\end{align*}
		and therefore
		\begin{align*}
			b- b^0 < \epsilon \left( \frac{\underline{m}}{T} - \frac{C}{b^0} \epsilon \right)^{-1}. 
		\end{align*}
		
		Next, we consider the case of $b< b^0$. Similarly, we aim to bound $1/(NT)\sum_{i=1}^N \sum_{t=b+1}^{b^0} E  ( W_{it} ' \beta_{i,B}^0  - W_{it}' \beta_{i,A}^0)^2 $ from both sides. Note that
		\begin{align}\label{eq:diff-coef-pre-post-2}
			& \frac{1}{NT}\sum_{i=1}^N \sum_{t=b+1}^{b^0} E  ( W_{it} ' \beta_{i,B}^0  - W_{it}' \beta_{i,A}^0 )^2  \nonumber\\
			\leq &
			\frac{1}{NT}\sum_{i=1}^N \sum_{t=b+1}^{b^0} E  ( W_{it} ' \beta_{i,B}^0  - W_{it}' \beta_{i,A} )^2 
			+\frac{1}{NT}\sum_{i=1}^N \sum_{t=b+1}^{b^0} E  ( W_{it} ' \beta_{i,A}^0  - W_{it}' \beta_{i,A} )^2.	
		\end{align}
		The first term on the right-hand side of~\eqref{eq:diff-coef-pre-post-2} is less than $\epsilon$ due to $R(\beta ,b ) < \epsilon$.
		For the second term, we have
		\begin{align*}
			& \frac{1}{NT}\sum_{i=1}^N \sum_{t=b+1}^{b^0} E  ( W_{it} ' \beta_{i,A}^0  - W_{it}' \beta_{i,A})^2  \\
			\leq & 	\frac{C}{NT} \frac{b^0-b}{T-b^0}\sum_{i=1}^N \sum_{t=b^0+1}^{T} E  ( W_{it} ' \beta_{i,A}^0 - W_{it}' \beta_{i,A})^2  
			\leq  C   \frac{b^0-b}{T-b^0} \epsilon,
		\end{align*}
		where the first inequality follows from Assumption \ref{as-semi-stationary}, and the last inequality follows from 
		$1/(NT)\sum_{i=1}^N \sum_{t=b^0+1}^T E ( W_{it}' \beta_{i,A}^0  - W_{it}' \beta_{i,A})^2  < \epsilon$, which is further due to $R(\beta ,b ) < \epsilon$.
		Thus, we can bound~\eqref{eq:diff-coef-pre-post-2} from above as
		\begin{align}\label{eq:upper-bound-2}
			\frac{1}{NT}\sum_{i=1}^N \sum_{t=b+1}^{b^0} E  ( W_{it} ' \beta_{i,B}^0  - W_{it}' \beta_{i,A}^0)^2 < \left( 1+  C   \frac{b^0-b}{T-b^0}\right) \epsilon.
		\end{align}
		Using similar arguments as~\eqref{eq:lower-bound}, we can show that 
		\begin{align}\label{eq:lower-bound-2}
			 \frac{1}{NT}\sum_{i=1}^N \sum_{t=b+1}^{b^0} E  ( W_{it} ' \beta_{i,B}^0  - W_{it}' \beta_{i,A}^0 )^2  
			\geq \frac{b^0-b}{T}\underline{m}.
		\end{align}
		Combining~\eqref{eq:upper-bound-2} and~\eqref{eq:lower-bound-2}, we have 
		\begin{align*}
			\frac{b^0-b}{T} \underline{m} < \left( 1+  C   \frac{b^0-b}{T-b^0}\right) \epsilon,
		\end{align*}
		which further implies that
		\begin{align*}
			b^0 - b <  \epsilon \left( \frac{ \underline{m} }{T} -  \frac{C}{T-b^0} \epsilon\right)^{-1} .
		\end{align*}
		
		Thus, when $R(\beta ,b ) < \epsilon$ and $\epsilon = o(1)$, by Assumption \ref{as-break-date}, we have that, for some $C'$,
		\begin{align*}
			| b- b^0 | < C' \epsilon T 
		\end{align*}

	\end{proof}

	We then study the convergence rate of the cross-sectional averaged squared errors for the preliminary coefficient estimators obtained from Step 1.
	
	\begin{lemma}\label{th-beta-initial-rate}
		Suppose that Assumptions \ref{as-exogeneity}--\ref{as-min-eigenvalue} are satisfied. Then,
		\begin{align*}
			\frac{1}{N}\sum_{i=1}^N \vert \beta_i^0 - \hat \beta_i \vert_1 = O_p (s^{1/2} D_{NT}^{1/2}).
		\end{align*}	
		
	\end{lemma}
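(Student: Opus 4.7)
The plan is to follow the canonical Lasso argument: convert the excess-risk bound $R(\hat\beta, \hat b) = O_p(sD_{NT})$ from Lemma \ref{theorem:risk-consistency} into a quadratic form on the coefficient errors, establish a cone condition from the basic penalised inequality, invoke the compatibility condition of Assumption \ref{as-min-eigenvalue} to turn that quadratic form into an $L_1$ bound on the support, and finally apply Cauchy--Schwarz to produce the claimed $s^{1/2}D_{NT}^{1/2}$ rate.

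For the first step, I would expand
\[
R(\hat\beta, \hat b) = \frac{1}{NT}\sum_{i=1}^N \sum_{t=1}^T E\bigl[W_{it}(b^0)'\beta_i^0 - W_{it}(\hat b)'\hat\beta_i\bigr]^2
\]
by splitting the time index at $\min(\hat b, b^0)$ and $\max(\hat b, b^0)$. On the two non-overlap pieces, the squared prediction differences reduce, after taking expectations, to $\frac{1}{N}\sum_i (\hat\beta_{i,R} - \beta_{i,R}^0)' E(W_{it}W_{it}') (\hat\beta_{i,R} - \beta_{i,R}^0)$ for $R\in\{B,A\}$, averaged over the corresponding time range. The overlap piece of length $|\hat b - b^0|$ couples $\beta_{i,A}^0$ with $\hat\beta_{i,B}$ (or vice versa), but by Assumption \ref{as-semi-stationary} it is bounded by a constant times the full pre- or post-break average; together with the preliminary breakpoint rate $|\hat b - b^0|/T = O_p(sD_{NT}) = o_p(1)$ from Lemma \ref{th-b-initial-rate} (Assumption \ref{as-penalty-order}), this contribution is absorbed, giving
\[
\frac{1}{N}\sum_{i=1}^N (\hat\beta_{i,R} - \beta_{i,R}^0)' E(W_{it}W_{it}')(\hat\beta_{i,R} - \beta_{i,R}^0) = O_p(sD_{NT}), \quad R \in \{B,A\}.
\]

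For the cone condition, I would start from the basic inequality $V_{NT}(\hat\beta, \hat b) \leq V_{NT}(\beta^0, \hat b)$, expand the squared losses, and move the penalty terms to the right-hand side. The resulting linear stochastic term $\frac{1}{NT}\sum_{i,t} u_{it} W_{it,k}(\hat b)(\hat\beta_{ik} - \beta_{ik}^0)$ is controlled uniformly in $(i,k,b)$ by Lemma \ref{lem:fuk-nagaev-dependent-ij-b}, at rate $O_p(\log(Np)/\sqrt{T})$, which is dominated by $D_{\min}$ under Assumption \ref{as-penalty-order}. The standard rearrangement then yields $|(\hat\beta - \beta^0)_{J^c}|_1 \leq 5 D_R |(\hat\beta - \beta^0)_J|_1$, where $D_R$ is the ratio bound in Assumption \ref{as-min-eigenvalue}. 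With the cone condition in hand, Assumption \ref{as-min-eigenvalue} gives $\rho (1/N)\sum_i |\hat\beta_{i,R} - \beta_{i,R}^0|_1^2 \leq O_p(sD_{NT})$, and Cauchy--Schwarz finishes the argument:
\[
\frac{1}{N}\sum_i |\hat\beta_i - \beta_i^0|_1 \leq \left(\frac{1}{N}\sum_i |\hat\beta_i - \beta_i^0|_1^2\right)^{1/2} = O_p(s^{1/2}D_{NT}^{1/2}).
\]

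The hard part is the first step: controlling the "overlap" contribution to the excess risk, which pairs the wrong pre/post-break parameter with the observations over a random window of length $|\hat b - b^0|$. Doing this rigorously requires combining the preliminary breakpoint rate of Lemma \ref{th-b-initial-rate} with Assumption \ref{as-semi-stationary}, and this is precisely why the semi-stationarity constraint is imposed. A secondary nuisance is that the adaptive weights $\phi_{ij,R}$ themselves depend on $\hat b$, so the basic inequality, the maximal inequality for the noise term, and the cone-condition derivation must each be carried out uniformly in $b$ before being specialised to $b = \hat b$.
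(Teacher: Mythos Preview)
Your overall architecture is right, and the first step is actually simpler than you describe: in the decomposition \eqref{eq:R:b>b0}--\eqref{eq:R:b<b0} the ``overlap'' block is nonnegative and can simply be dropped, so Assumption~\ref{as-semi-stationary} is not needed there. The remaining two blocks are already the pure pre- and post-break quadratic forms, each bounded by $\epsilon = CsD_{NT}$.

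The genuine gap is in your derivation of the cone condition. When you start from $V_{NT}(\hat\beta,\hat b)\le V_{NT}(\beta^0,\hat b)$ and expand, the right-hand side acquires not only the linear noise term $\frac{1}{NT}\sum_{i,t}u_{it}W_{it}(\hat b)'(\hat\beta_i-\beta_i^0)$ but also a ``misspecification'' contribution $R(\beta^0,\hat b)$, coming from the fact that $y_{it}=W_{it}(b^0)'\beta_i^0+u_{it}$ rather than $W_{it}(\hat b)'\beta_i^0+u_{it}$. By Lemma~\ref{th-b-initial-rate} this term is $O_p(|\hat b-b^0|/T)=O_p(sD_{NT})$, which is the \emph{same} order as the target excess risk and hence cannot be absorbed into the penalty rearrangement. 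If $R(\beta^0,\hat b)$ happens to dominate $D(\hat\beta_J-\beta_J^0,\hat b)$, the inequality $D(\hat\beta_{J^c},\hat b)\le 3D(\hat\beta_J-\beta_J^0,\hat b)+2R(\beta^0,\hat b)$ no longer yields the cone constraint $|\hat\beta_{J^c}|_1\le 5D_R|\hat\beta_J-\beta_J^0|_1$, and Assumption~\ref{as-min-eigenvalue} cannot be invoked.

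The paper resolves this by a case split. In Case~1 ($R(\beta^0,\hat b)\le D(\hat\beta_J-\beta_J^0,\hat b)$) the cone condition does hold and your compatibility-plus-Cauchy--Schwarz argument goes through. In Case~2 ($R(\beta^0,\hat b)> D(\hat\beta_J-\beta_J^0,\hat b)$) one observes instead that $D(\hat\beta-\beta^0,\hat b)\le 6R(\beta^0,\hat b)=O_p(sD_{NT})$, and then divides by $D_{\min}$ (using $D_{NT}/D_{\min}<D_R$ from Assumption~\ref{as-min-eigenvalue}) to bound $|\hat\beta-\beta^0|_1$ directly, without any compatibility argument. Both cases deliver $O_p(\epsilon^{1/2})$. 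Your proposal is missing this dichotomy; as written, the ``standard rearrangement'' step would fail whenever the breakpoint-mismatch term is the dominant one.
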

	
	\begin{proof}
		
		Let
		\begin{align*}
			\nu_{NT} ( \beta, b) = \frac{1}{NT} \sum_{i=1}^N \sum_{t=1}^T \left[ \left(y_{it} - W_{it} (b)' \beta_i \right)^2 - E \left(y_{it} - W_{it} (b)' \beta_i \right)^2 \right].
		\end{align*}
		Since $V_{NT} (\hat \beta, \hat b) \leq V_{NT} (\beta^0, \hat b)$, and using the result of \eqref{eq:risk-inequality} in the proof of Lemma \ref{theorem:risk-consistency}, we have
		\begin{align*}
			R(\hat \beta, \hat b)
			&\leq \left( \nu_{NT} ( \beta^0, \hat b) - \nu_{NT} ( \hat \beta, \hat b) \right) + D(\beta^0, \hat b) -  D(\hat \beta , \hat b ) + R(\beta^0, \hat b) .
		\end{align*}
		The proof of Lemma \ref{theorem:risk-consistency} shows that w.p.a.1,
		\begin{align*}
			\left| \nu_{NT} ( \beta^0, \hat b) - \nu_{NT} ( \hat \beta, \hat b) \right| \leq &   C\frac{\log (Np^2)}{\sqrt{T}}  \sum_{i=1}^N \sum_{k=1}^p \left| \hat \beta_{ik} - \beta_{ik}^0  \right| 
			=   C\frac{\log (Np^2)}{\sqrt{T}} \left| \hat \beta - \beta^0 \right|_1 \\
			\leq & \frac{1}{2} D_{NT} \left| \hat \beta - \beta^0 \right|_1,
		\end{align*}
		where the second inequality follows from Assumption \ref{as-penalty-order}.
		Observe that $\beta^0 = \beta_{J}^0$ and $\hat \beta = \hat \beta_J + \hat \beta_{J^c}$. By the triangle inequality, we have 
		$
			|\hat \beta_{J,i} |_1 \geq |\beta_{J,i}^0|_1 - |\hat \beta_{J,i} - \beta_{J,i}^0|_1,
		$
		where $\beta_{J,i}$ and $\hat \beta_{J,i}$ are the $i$-th element of $\beta_{J}$ and $\hat \beta_{J}$, respectively.
		We thus have 
		\begin{align*}
			D(\beta^0, \hat b) -  D(\hat \beta , \hat b ) =& D(\beta_J^0 , \hat b) - D(\hat \beta_{J} , \hat b) - D(\hat \beta_{J^c} , \hat b) \\
			\leq & D(\beta_J^0, \hat b) - D( \beta_{J}^0, \hat b )  + D( \hat \beta_{J} - \beta_{J}^0 , \hat b) - D(\hat \beta_{J^c} , \hat b) \\
			=&  D( \hat \beta_{J} - \beta_{J}^0, \hat b ) - D(\hat \beta_{J^c}, \hat b ).
		\end{align*}
	This further implies that
		\begin{align*}
			R(\hat \beta, \hat b) - \frac{1}{2} D_{NT} \left| \hat \beta - \beta^0 \right|_1
			&\leq   D( \hat \beta_{J} - \beta_{J}^0, \hat b) - D(\hat \beta_{J^c}, \hat b) + R(\beta^0, \hat b).
		\end{align*}
		Noting that $|\hat \beta - \beta^0|_1 = |\hat \beta_J - \beta_J^0|_1 + |\hat \beta_{J^c}|_1 $, 
		\begin{align*}
			R(\hat \beta, \hat b) - \frac{1}{2} D_{NT}|\hat \beta_J - \beta_J^0|_1 - \frac{1}{2} D_{NT} |\hat \beta_{J^c}|_1
			&\leq   D( \hat \beta_{J} - \beta_{J}^0, \hat b) - D(\hat \beta_{J^c}, \hat b) + R(\beta^0, \hat b).
		\end{align*}
		By the definition of $D_{NT}$, we have 
		\begin{align*}
			R(\hat \beta, \hat b) - \frac{1}{2} D (\hat \beta_J - \beta_J^0, \hat b) - \frac{1}{2} D (\hat \beta_{J^c}, \hat b)
			&\leq   D ( \hat \beta_{J} - \beta_{J}^0, \hat b) - D(\hat \beta_{J^c}, \hat b) + R(\beta^0, \hat b).	
		\end{align*}
		Adding $D(\hat  \beta_{J^c})$ to both sides and rearranging the terms, we have 
		\begin{align*}
			R(\hat \beta, \hat b) + \frac{1}{2} D (\hat \beta_{J^c}, \hat b)
			&\leq  \frac{3}{2} D ( \hat \beta_{J} - \beta_{J}^0, \hat b) + R(\beta^0, \hat b).		
		\end{align*}
		Because $R(\hat \beta, \hat b)\geq 0$, the following inequality holds
		\begin{align}
			D (\hat \beta_{J^c}, \hat b)
			\leq 3 D ( \hat \beta_{J} - \beta_{J}^0, \hat b) + 2 R(\beta^0, \hat b). \label{eq:basic-inequality}
		\end{align}
		
		To show the convergence rate in Lemma~\ref{th-beta-initial-rate}, we consider two cases. First, we consider that $R(\beta^0, \hat b) \leq D ( \hat \beta_{J} - \beta_{J}^0, \hat b) $, which implies that
		$			
		D (\hat \beta_{J^c}, \hat b)
		\leq 5 D ( \hat \beta_{J} - \beta_{J}^0, \hat b)
		$, and furthermore
		\begin{align}
			D_{\min} |\hat \beta_{J^c}|_1
			&\leq 5 D_{NT} | \hat \beta_{J} - \beta_{J}^0|_1.	\label{eq:condition-for-a3ii}
		\end{align}
		The inequality \eqref{eq:condition-for-a3ii} demonstrates that $\hat{\beta} - \beta_0$ meets the criteria for applying Assumption \ref{as-min-eigenvalue}.
		We now consider two subcases, namely $\hat b> b^0$ and $\hat b< b^0$, and then apply Assumption \ref{as-min-eigenvalue}. In the first subcase where $\hat b> b^0$,
		if $R (\hat \beta , \hat b ) < \epsilon$, then from~\eqref{eq:R:b>b0} we have
		\begin{align*}
			\frac{1}{NT}\sum_{i=1}^N \sum_{t=1}^{b^0} E ( W_{it}' \beta_{i,B}^0 - W_{it}'  \hat \beta_{i,B} )^2 < \epsilon
			\quad\textrm{and}\quad
			\frac{1}{NT}\sum_{i=1}^N \sum_{t=\hat b+1}^T E  ( W_{it}' \beta_{i,A}^0  - W_{it}' \hat \beta_{i,A})^2  < \epsilon,
		\end{align*}
		where the expectations are taken for $W_{it}$, and $\hat \beta$ is regarded non-random in the expectation.
		By Assumption~\ref{as-min-eigenvalue} whose implication can be used because of \eqref{eq:condition-for-a3ii}, we have 
		\begin{align*}
			\frac{1}{NT}\sum_{i=1}^N \sum_{t=1}^{b^0} E  ( W_{it}' \beta_{i,B}^0 - W_{it}'  \hat \beta_{i,B} )^2  \geq \rho \frac{b^0}{T} \frac{1}{N}\sum_{i=1}^N \left\vert \beta_{i,B}^0 - \hat \beta_{i,B} \right\vert_1^2,
		\end{align*}
		and 
		\begin{align*}
			\frac{1}{NT}\sum_{i=1}^N \sum_{t=\hat b+1}^T E  ( W_{it}' \beta_{i,A}^0 - W_{it}' \hat \beta_{i,A})^2
			\geq  \rho \frac{T- \hat b}{T} \frac{1}{N}\sum_{i=1}^N \left\vert \beta_{i,A}^0 - \hat \beta_{i,A} \right\vert_1^2.
		\end{align*}
		Thus, we can obtain that 
		\begin{align*}
			\frac{1}{N}\sum_{i=1}^N \left\vert \beta_{i,B}^0 - \hat \beta_{i,B} \right\vert_1^2 < \epsilon \frac{T}{\rho b^0}
			\quad\textrm{and}\quad
			\frac{1}{N}\sum_{i=1}^N \left\vert \beta_{i,A}^0 - \hat \beta_{i,A} \right\vert_1^2 < \epsilon \frac{T}{\rho (T-\hat b)}.
		\end{align*}
		In the second subcase where $\hat b < b^0$, we can use similar arguments and obtain that
		\begin{align*}
			\frac{1}{N}\sum_{i=1}^N \left\vert \beta_{i,B}^0 - \hat \beta_{i,B} \right\vert_1^2 < \epsilon \frac{T}{\rho \hat b}
			\quad\textrm{and}\quad
			\frac{1}{N}\sum_{i=1}^N \left\vert \beta_{i,A}^0 - \hat \beta_{i,A} \right\vert_1^2 < \epsilon \frac{T}{\rho (T- b^0)}.
		\end{align*}
		We now take $\epsilon = C sD_{NT}$ for some constant $C$. By Lemma \ref{theorem:risk-consistency}, $R(\hat \beta, \hat b) < \epsilon$ with probability arbitrarily close to one if $C$ is set sufficiently large. Noting that $| \hat b - b^0| = O_p (\epsilon T)$ from Lemma~\ref{th-b-initial-rate}, and $\epsilon = o(1)$, we can obtain that 
		\begin{align*}
			\frac{1}{N}\sum_{i=1}^N \left\vert \beta_{i,B}^0 - \hat \beta_{i,B} \right\vert_1^2 = O_p (\epsilon)
			\quad\textrm{and}\quad
			\frac{1}{N}\sum_{i=1}^N \left\vert \beta_{i,A}^0 - \hat \beta_{i,A} \right\vert_1^2 = O_p (\epsilon).
		\end{align*}
		Thus, we have 
		$
			1/N\sum_{i=1}^N \left\vert \beta_i^0 - \hat \beta_i \right\vert_1^2 = O_p (\epsilon).
		$
		By the Minkowski inequality, we have 
		\begin{align*}
			\frac{1}{N}\sum_{i=1}^N \left\vert \beta_i^0 - \hat \beta_i \right\vert_1 
			\leq \left( \frac{1}{N}\sum_{i=1}^N \left\vert \beta_i^0 - \hat \beta_i \right\vert_1^2 \right)^{1/2} 
			= O_p (\epsilon^{1/2}).
		\end{align*}
		
		Next, we consider the case where $R(\beta^0, \hat b) > D ( \hat \beta_{J} - \beta_{J}^0, \hat b) $.
		Noting that $D (\hat \beta_{J^c}, \hat b) + D ( \hat \beta_{J} - \beta_{J}^0, \hat b) = D(\hat \beta - \beta^0, \hat b)$ and from \eqref{eq:basic-inequality}, we have 
		\begin{align*}
			D(\hat \beta - \beta^0, \hat b) \leq 4 D (\hat \beta_J - \beta_J^0, \hat b) + 2R(\beta^0, \hat b) \leq 6 R(\beta^0, \hat b).
		\end{align*}
		Since the bound of $D(\hat \beta - \beta^0, \hat b)$ depends on $R(\beta^0, \hat b)$, we examine $R(\beta_0, \hat b)$. Again, we consider two subcases, depending on the relationship between $b$ and $b^0$. First, for $b>b^0$, we note that
		\begin{align*}
			R(\beta^0, b) =& \frac{1}{NT} \sum_{i=1}^N \sum_{t=1}^T E \left[W_{it} (b^0) '\beta_i^0 - W_{it}(b)'\beta_i^0\right]^2 
			= \frac{1}{NT} \sum_{i=1}^N \sum_{t=b^0+1}^b E \left[W_{it} (b^0) '\beta_i^0 - W_{it}(b)'\beta_i^0\right]^2  \\
			= &\frac{1}{NT} \sum_{i=1}^N \sum_{t=b^0+1}^b E [(X_{it} ' (\gamma_{i,A}^0 - \gamma_{i,B}^0 ))^2] 
			\leq  C \frac{b-b^0}{T}. 
		\end{align*}
		Using similar arguments, we can show that in the second subcase where $b<b^0$, 
		\begin{align*}
			R(\beta^0, b) \leq C\frac{b^0 -b}{T}. 
		\end{align*}
		By Lemma \ref{th-b-initial-rate}, we have
		$
			R(\beta^0, \hat b) =O_p( \epsilon ) .
		$
		We thus have 
		$
			D(\hat \beta - \beta^0, \hat b) = O_p ( \epsilon ).
		$
		Note that 
		$
			D(\hat \beta - \beta^0, \hat b) \geq D_{\min} |\hat \beta - \beta^0|_1. 
		$
		Thus, by Assumption \ref{as-penalty-order}, we have 
		$
			1/N\sum_{i=1}^N \left\vert \beta_i^0 - \hat \beta_i \right\vert_1 = O_p ( \epsilon ). 
		$
		This leads to the result that
		\begin{align*}
			\frac{1}{N}\sum_{i=1}^N \left\vert \beta_i^0 - \hat \beta_i \right\vert_1 = O_p (\epsilon^{1/2} ),
		\end{align*}	
		because $\epsilon \to 0$ implies that $\epsilon < \epsilon^{1/2}$ for $N$ and $T$ sufficiently large.

	\end{proof}

	\section{Proof of Theorem \ref{th-b-superconsistency}}

	Let $\mathcal{B} (\beta^0, \epsilon)$ be a set of $\beta$ which satisfies
	$
	1/N \sum_{i=1}^N \left\vert \beta_i - \beta_i^0 \right\vert_1 < \epsilon^{1/2}. 
	$
	Note that, for $\epsilon = C s D_{NT}$ and sufficiently large $C$, we have $\hat \beta \in \mathcal{B} (\beta^0,\epsilon)$ with probability arbitrarily close to one asymptotically.
	To simplify the notation and make the rate of convergence explicit, we omit the first argument of $\mathcal{B} (\beta^0, \epsilon)$ and write it as $\mathcal{B} (  \epsilon )$ in the following. 
	Note that $\tilde b \neq b^0$ is equivalent to
	\begin{align*}
		&	\frac{1}{NT}\sum_{i=1}^N \sum_{t=1}^T \left(y_{it} - W_{it} (b^0)' \hat \beta_i \right)^2 \\
		>&  \min_{b \in \{1,\dots, T\} \backslash b^0 } 	\frac{1}{NT}\sum_{i=1}^N \sum_{t=1}^T \left(y_{it} - W_{it} (b)' \hat \beta_i \right)^2 \\
		=&  \min \left( \min_{b < b^0} \frac{1}{NT}\sum_{i=1}^N \sum_{t=1}^T (y_{it} - W_{it} (b)' \hat \beta_i )^2, \min_{b >b^0 } \frac{1}{NT}\sum_{i=1}^N \sum_{t=1}^T (y_{it} - W_{it} (b)' \hat \beta_i )^2\right).
	\end{align*}
	
	We first consider the minimum over $b<b^0$. We observe 
	\begin{align*}
		& \Pr \left( 	\frac{1}{NT}\sum_{i=1}^N \sum_{t=1}^T (y_{it} - W_{it} (b^0)' \hat \beta_i )^2 > \min_{b < b^0} \frac{1}{NT}\sum_{i=1}^N \sum_{t=1}^T (y_{it} - W_{it} (b)' \hat \beta_i )^2 \right) \\
		\leq & 	\Pr \biggl(	\frac{1}{NT}\sum_{i=1}^N \sum_{t=1}^T (y_{it} - W_{it} (b^0)' \beta_i )^2  > \min_{b < b^0} \frac{1}{NT}\sum_{i=1}^N \sum_{t=1}^T (y_{it} - W_{it} (b)'  \beta_i )^2  \text{ for some }\beta \in \mathcal{B}(\epsilon)\biggr) + o(1)\\
		=& \Pr \biggl( \sup_{\beta \in \mathcal{B}(\epsilon) } \max_{b < b^0}	\frac{1}{NT}\sum_{i=1}^N \sum_{t=1}^T \left[(y_{it} - W_{it} (b^0)' \beta_i )^2 - (y_{it} - W_{it} (b)'  \beta_i )^2 \right] >0 \biggr) + o(1).
	\end{align*} 
	Our goal is to show that the probability on the right-hand side of the equation above converges to zero. Thus, we examine the term inside the last probability. For $b < b^0$, we can derive that
	\begin{align}\label{eq:diff-ssr}
		&	\frac{1}{NT}\sum_{i=1}^N \sum_{t=1}^T \left[ \left(y_{it} - W_{it} (b^0)' \beta_i \right)^2 - \left(y_{it} - W_{it} (b)'  \beta_i \right)^2 \right]\nonumber\\
		= & 	\frac{1}{NT}\sum_{i=1}^N \sum_{t=b+1}^{b^0} \left[ \left(y_{it} - W_{it}' \beta_{i,B} \right)^2 - \left(y_{it} - W_{it}' \beta_{i,A} \right)^2 \right]\nonumber\\
		=&  \frac{1}{NT}\sum_{i=1}^N \sum_{t=b+1}^{b^0} \left[ \left(W_{it} ' (\beta_{i,B}^0 - \beta_{i,B} ) + u_{it} \right)^2 - \left(W_{it}'\beta_{i,B}^0 - W_{it} ' \beta_{i,A} + u_{it} \right)^2 \right] \nonumber\\
		=& 2	 \frac{1}{NT}\sum_{i=1}^N \sum_{t=b+1}^{b^0}  W_{it} '(\beta_{i,A} - \beta_{i,B})  W_{it} ' (\beta_{i,B}^0 - \beta_{i,B} ) - \frac{1}{NT}\sum_{i=1}^N \sum_{t=b+1}^{b^0}   [W_{it} ' (\beta_{i,A} - \beta_{i,B})]^2  \nonumber\\
		& + 2 \frac{1}{NT}\sum_{i=1}^N  \sum_{t=b+1}^{b^0} u_{it} W_{it}'(\beta_{i,A} - \beta_{i,B}).
	\end{align}
	We examine each of the three terms on the right-hand side of~\eqref{eq:diff-ssr} in turn.
	
	For the first term, by the Cauchy-Schwarz inequality, we have
	\begin{align}\label{eq:Wbeta-pre-post-Wbeta0}
		& \left| \frac{1}{NT}\sum_{i=1}^N \sum_{t=b+1}^{b^0}  W_{it} ' (\beta_{i,A} - \beta_{i,B})  W_{it} ' (\beta_{i,B}^0 - \beta_{i,B} )\right|  \nonumber\\
		\leq  & \left( \frac{1}{NT}\sum_{i=1}^N \sum_{t=b+1}^{b^0} [ W_{it} ' (\beta_{i,A} - \beta_{i,B})]^2\right)^{1/2} 
		\left( \frac{1}{NT}\sum_{i=1}^N \sum_{t=b+1}^{b^0} [ W_{it} ' (\beta_{i,B}^0 - \beta_{i,B} )]^2\right)^{1/2}.
	\end{align}
	We observe that 
	\begin{align}\label{eq:Wbeta}
		& \frac{1}{NT}\sum_{i=1}^N \sum_{t=b+1}^{b^0} \left[ W_{it} ' (\beta_{i,A} - \beta_{i,B})\right]^2  \nonumber\\
		=& \frac{1}{NT}\sum_{i=1}^N \sum_{t=b+1}^{b^0} \left[ X_{it} ' (\gamma_{i,A} - \gamma_{i,B})-X_{it} ' (\gamma_{i,A}^0 - \gamma_{i,B}^0)\right]^2  \nonumber\\
		& + 2 \frac{1}{NT}\sum_{i=1}^N \sum_{t=b+1}^{b^0} \left[ X_{it} ' (\gamma_{i,A}^0 - \gamma_{i,B}^0)\right] \left[ X_{it} ' (\gamma_{i,A} - \gamma_{i,B})- X_{it} '(\gamma_{i,A}^0 - \gamma_{i,B}^0)\right] \nonumber\\
		& + \frac{1}{NT}\sum_{i=1}^N \sum_{t=b+1}^{b^0} \left[ X_{it} ' (\gamma_{i,A}^0 - \gamma_{i,B}^0)\right]^2.
	\end{align}
	For the first term in~\eqref{eq:Wbeta}, we have
	\begin{align*}
		& \left(\frac{1}{NT}\sum_{i=1}^N \sum_{t=b+1}^{b^0} \left[ X_{it} ' (\gamma_{i,A} - \gamma_{i,B})-X_{it}' (\gamma_{i,A}^0 - \gamma_{i,B}^0)\right]^2  \right)^{1/2} \\
		\leq &\left(	\frac{1}{NT}\sum_{i=1}^N \sum_{t=b+1}^{b^0} \left[ X_{it} ' (\gamma_{i,A} - \gamma_{i,A}^0)\right]^2  \right)^{1/2} 
		+\left(	\frac{1}{NT}\sum_{i=1}^N \sum_{t=b+1}^{b^0} \left[ X_{it} ' ( \gamma_{i,B}-  \gamma_{i,B}^0)\right]^2  \right)^{1/2} 
	\end{align*}
	For each term on the right-hand side of the inequality above, by Lemma \ref{lem:fuk-nagaev-dependent-ij-b} and $\beta \in \mathcal{B} (\epsilon)$ from Theorem~\ref{th-beta-initial-rate}, we have, w.p.a.1,
	\begin{align*}
		&\frac{1}{NT}\sum_{i=1}^N \sum_{t=b+1}^{b^0} \left[  X_{it} ' (\gamma_{i,A}^0 - \gamma_{i,A} )\right]^2  \\
		=& \frac{1}{NT}\sum_{i=1}^N \sum_{t=b+1}^{b^0} \left[  \sum_{j=1}^N x_{jt} ' (\gamma_{ij,A}^0 - \gamma_{ij,A} )\right]^2 
		\leq  \frac{1}{N}\sum_{i=1}^N  |\gamma_{i,A}^0 - \gamma_{i,A} |_1^2 \frac{1}{T} \sum_{t=b+1}^{b^0} \max_{j} x_{jt} ^2
		\leq   C \epsilon \frac{b^0-b}{T} N^c,
	\end{align*}
	and similarly,
	\begin{align*}
		\frac{1}{NT}\sum_{i=1}^N \sum_{t=b+1}^{b^0} \left[  X_{it} ' (\gamma_{i,B}^0 - \gamma_{i,B} )\right]^2 
		\leq C \epsilon \frac{b^0-b}{T} N^c.
	\end{align*}
	We bound $E( \max_{j} x_{jt}^2 )$ by using the fact that $E(x_{jt}^2)$ is bounded uniformly over $j$ and $t$ by Lemma \ref{lem:moment} and use the fact that $E(\max_{j} x_{jt}^2) \leq N^{c} \max (E(x_{jt}^{2/c}))^c$ for any $c\leq 1$.
	Thus, by Lemma~\ref{th-b-initial-rate} it holds that, w.p.a.1,
	\begin{align*}
		\frac{1}{NT}\sum_{i=1}^N \sum_{t=b+1}^{b^0} \left[ X_{it} ' (\gamma_{i,A} - \gamma_{i,B})- X_{it} '(\gamma_{i,A}^0 - \gamma_{i,B}^0)\right]^2 \leq C \epsilon \frac{b^0-b}{T} \log N.
	\end{align*}
	Next, we consider the second term on the right-hand side of~\eqref{eq:Wbeta}, and rewrite it as
	\begin{align*}
		& \left|\frac{1}{NT}\sum_{i=1}^N \sum_{t=b+1}^{b^0}  (\gamma_{i,A}^0 - \gamma_{i,B}^0)' X_{it} X_{it} ' (\gamma_{i,A} - \gamma_{i,B}- (\gamma_{i,A}^0 - \gamma_{i,B}^0))\right| \\
		\leq & \frac{1}{N}\sum_{i=1}^N | \gamma_{i,A} - \gamma_{i,B}- (\gamma_{i,A}^0 - \gamma_{i,B}^0)|_1 \max_i \max_j \left| \frac{1}{T}\sum_{t=b+1}^{b^0}  (\gamma_{i,A}^0 - \gamma_{i,B}^0)' X_{it} x_{jt} \right|\\ 
		\leq & C  \epsilon^{1/2} \frac{b^0 - b}{T} \log N,
	\end{align*}
	where the second inequality follows from $\beta \in \mathcal{B} (\epsilon)$, Assumption \ref{as-exogeneity}, and Lemma \ref{lem:fuk-nagaev-dependent-ij-b}.

	Finally, for the third term on the right-hand side of~\eqref{eq:Wbeta}, we have
	\begin{align*}
		&\frac{1}{NT}\sum_{i=1}^N \sum_{t=b+1}^{b^0} \left[ X_{it} ' (\gamma_{i,A}^0 - \gamma_{i,B}^0)\right]^2 \\
		=& 	\frac{1}{NT}\sum_{i=1}^N \sum_{t=b+1}^{b^0} E \left[ X_{it} ' (\gamma_{i,A}^0 - \gamma_{i,B}^0)\right]^2 
		 +\frac{1}{NT}\sum_{i=1}^N \sum_{t=b+1}^{b^0} \Big(\left[  X_{it} ' (\gamma_{i,A}^0 - \gamma_{i,B}^0)\right]^2 - E\left[ X_{it} ' (\gamma_{i,A}^0 - \gamma_{i,B}^0)\right]^2\Big).
	\end{align*} 
	By the Chebyshev inequality under Assumptions \ref{as-tail} and \ref{as-mixing}, noting that $\log N \to \infty$, it holds that 
	\begin{align*}
		\left| \frac{1}{NT}\sum_{i=1}^N \sum_{t=b+1}^{b^0} \Big( \left[ X_{it} ' (\gamma_{i,A}^0 - \gamma_{i,B}^0)\right]^2 - E (\left[ X_{it} ' (\gamma_{i,A}^0 - \gamma_{i,B}^0)\right]^2)\Big)	\right| \leq C \frac{\sqrt{b^0-b}\log N}{\sqrt{N} T}. 
	\end{align*}
	Thus, by Lemma \ref{lem:moment} and Assumption \ref{as-penalty-order}, we have 
	\begin{align*}
		\frac{1}{NT}\sum_{i=1}^N \sum_{t=b+1}^{b^0} \left[ X_{it} ' (\gamma_{i,A}^0 - \gamma_{i,B}^0)\right]^2 
		\leq C \left(\frac{\sqrt{b^0-b}\log N}{\sqrt{N} T} + \frac{b^0-b}{T}\right) . 
	\end{align*}
	
	We have evaluated all terms in \eqref{eq:Wbeta}, and obtain:
	\begin{align*}
		&\frac{1}{NT}\sum_{i=1}^N \sum_{t=b+1}^{b^0} \left[ W_{it} ' (\beta_{i,A} - \beta_{i,B})\right]^2 \\
		\leq & C \left( \frac{\sqrt{b^0-b}\log N}{\sqrt{N} T} + \frac{b^0-b}{T} + \epsilon^{1/2} \frac{b^0-b}{T}  \log N \right) \\ 
		\leq & C \frac{b^0-b}{T}, \quad \textrm{w.p.a.1},
	\end{align*}
	for $N,T$ sufficiently large with $\log N /\sqrt{T} \to 0$ and $\epsilon \to 0$.
	Moreover, a similar argument to that for $ \sum_{i=1}^N \sum_{t=b+1}^{b^0} [ X_{it} ' (\gamma_{i,B}^0 - \gamma_{i,B} )]^2/ (NT)$ shows that the second term in \eqref{eq:Wbeta-pre-post-Wbeta0} is bounded as 
	$
		1/(NT)\sum_{i=1}^N \sum_{t=b+1}^{b^0} [ W_{it} ' (\beta_{i,B}^0 - \beta_{i,B} )]^2 \leq C \epsilon(b^0-b)/T.
	$
	
	The above paragraph provides the bounds for the two terms in \eqref{eq:Wbeta-pre-post-Wbeta0} and implies that:
	\begin{align*}
		\left| \frac{1}{NT}\sum_{i=1}^N \sum_{t=b+1}^{b^0}  W_{it} ' (\beta_{i,A} - \beta_{i,B})  W_{it} ' (\beta_{i,B}^0 - \beta_{i,B} )\right| \leq C \epsilon \frac{b^0 -b}{T}.
	\end{align*}

	We now consider the second term in \eqref{eq:diff-ssr}. 
	It follows that
	\begin{align*}
		& \frac{1}{NT}\sum_{i=1}^N \sum_{t=b+1}^{b^0}   [W_{it} ' (\beta_{i,A} - \beta_{i,B})]^2 \\
		= &	\frac{1}{NT}\sum_{i=1}^N \sum_{t=b+1}^{b^0}   [X_{it} ' (\gamma_{i,A} - \gamma_{i,B})]^2 \\
		= &	\frac{1}{NT}\sum_{i=1}^N \sum_{t=b+1}^{b^0}   [X_{it} ' (\gamma_{i,A}^0 - \gamma_{i,B}^0)]^2 
		+\frac{1}{NT}\sum_{i=1}^N \sum_{t=b+1}^{b^0}  \left( [X_{it} ' (\gamma_{i,A} - \gamma_{i,B})]^2 -  [X_{it} ' (\gamma_{i,A}^0 - \gamma_{i,B}^0)]^2 \right) \\
		= &	\frac{1}{NT}\sum_{i=1}^N \sum_{t=b+1}^{b^0}  E [X_{it} ' (\gamma_{i,A}^0 - \gamma_{i,B}^0)]^2
		+ \frac{1}{NT}\sum_{i=1}^N \sum_{t=b+1}^{b^0}  \left(  [X_{it} ' (\gamma_{i,A}^0 - \gamma_{i,B}^0)]^2 - E [X_{it} ' (\gamma_{i,A}^0 - \gamma_{i,B}^0)]^2 \right) \\
		&+\frac{1}{NT}\sum_{i=1}^N \sum_{t=b+1}^{b^0}  \left( [X_{it} ' (\gamma_{i,A} - \gamma_{i,B})]^2 -  [X_{it} ' (\gamma_{i,A}^0 - \gamma_{i,B}^0)]^2 \right) .
	\end{align*}
	Assumption \ref{as-break-identification} implies that 
	\begin{align*}
		\frac{1}{NT}\sum_{i=1}^N \sum_{t=b+1}^{b^0} E \left[( X_{it} ' (\gamma_{i,A}^0 - \gamma_{i,B}^0)\right]^2 \geq \underline{m} \frac{b^0-b}{T}.
	\end{align*}
	By the Chebyshev inequality under Assumptions \ref{as-tail} and \ref{as-mixing}, we have 
	\begin{align*}
		\frac{1}{NT}\sum_{i=1}^N \sum_{t=b+1}^{b^0}  \left(  [X_{it} ' (\gamma_{i,A}^0 - \gamma_{i,B}^0)]^2 - E [X_{it} ' (\gamma_{i,A}^0 - \gamma_{i,B}^0)]^2 \right)/(NT) =  O_p \left(\frac{\sqrt{b^0-b}}{\sqrt{N}T}\right).
	\end{align*}
	Thus, because $\log N \to \infty$, it holds that 
	\begin{align*}
		\left|\frac{1}{NT}\sum_{i=1}^N \sum_{t=b+1}^{b^0}  \left(  [X_{it} ' (\gamma_{i,A}^0 - \gamma_{i,B}^0)]^2 - E [X_{it} ' (\gamma_{i,A}^0 - \gamma_{i,B}^0)]^2 \right) \right| \leq C \frac{\sqrt{b^0-b}\log N}{\sqrt{N} T}, \quad \text{w.p.a.1}.
	\end{align*}
	Lastly, we have 
	\begin{align*}
		& \left|\frac{1}{NT}\sum_{i=1}^N \sum_{t=b+1}^{b^0}  \left( [X_{it} ' (\gamma_{i,A} - \gamma_{i,B})]^2 -  [X_{it} ' (\gamma_{i,A}^0 - \gamma_{i,B}^0)]^2 \right)\right| \\
		=& \left|\frac{1}{NT}\sum_{i=1}^N \sum_{t=b+1}^{b^0}  (\gamma_{i,A} - \gamma_{i,B}+ (\gamma_{i,A}^0 - \gamma_{i,B}^0))' X_{it} X_{it} ' (\gamma_{i,A} - \gamma_{i,B}- (\gamma_{i,A}^0 - \gamma_{i,B}^0))\right| \\
		\leq &\left|\frac{1}{NT}\sum_{i=1}^N \sum_{t=b+1}^{b^0}  (\gamma_{i,A} - \gamma_{i,B}- (\gamma_{i,A}^0 - \gamma_{i,B}^0))' X_{it} X_{it} ' (\gamma_{i,A} - \gamma_{i,B}- (\gamma_{i,A}^0 - \gamma_{i,B}^0))\right| \\
		& + 2 \left|\frac{1}{NT}\sum_{i=1}^N \sum_{t=b+1}^{b^0}  (\gamma_{i,A}^0 - \gamma_{i,B}^0)' X_{it} X_{it} ' (\gamma_{i,A} - \gamma_{i,B}- (\gamma_{i,A}^0 - \gamma_{i,B}^0))\right| \\
		\leq & \frac{1}{N}\sum_{i=1}^N | \gamma_{i,A} - \gamma_{i,B}- (\gamma_{i,A}^0 - \gamma_{i,B}^0) |_1^2 \frac{1}{T} \sum_{t=b+1}^{b^0} \max_j x_{jt}^2 \\
		+ & \frac{1}{N}\sum_{i=1}^N | \gamma_{i,A} - \gamma_{i,B}- (\gamma_{i,A}^0 - \gamma_{i,B}^0) |_1  \max_{i} \max_j \left| \frac{1}{T} \sum_{t=b+1}^{b^0} (\gamma_{i,A}^0 - \gamma_{i,B}^0)' X_{it} x_{jt} \right| \\
		\leq & C \frac{b^0 - b}{T} (\epsilon N^c + \epsilon^{1/2}\log N),
	\end{align*}
	where the third inequality follows from $\beta \in \mathcal{B} (\epsilon)$, Assumptions \ref{as-exogeneity}, and Lemma \ref{lem:fuk-nagaev-dependent-ij-b}.
	Note that we bound $E( \max_{j} x_{jt}^2 )$ by using the fact that $E(x_{jt}^2)$ is bounded uniformly over $j$ and $t$ by Lemma \ref{lem:moment} and that $E(\max_{j} x_{jt}^2) \leq N^{c} \max_j (E(x_{jt}^{2/c}))^c$ for any $c\leq 1$.
	We thus have 
	\begin{align*}
		\frac{1}{NT}\sum_{i=1}^N \sum_{t=b+1}^{b^0}   [W_{it} ' (\beta_{i,A} - \beta_{i,B})]^2 
		\geq  \underline{m} \frac{b^0-b}{T} - C \left( (\epsilon N^c+ \epsilon^{1/2} \log N) \frac{b^0 - b}{T} +  \frac{\sqrt{b^0-b}\log N}{\sqrt{N} T} \right).
	\end{align*}
	
	The results for the first and the second term in \eqref{eq:diff-ssr} together with $\epsilon = o(1)$ imply that, w.p.a.1, we have
	\begin{align*}
		& 2	 \frac{1}{NT}\sum_{i=1}^N \sum_{t=b+1}^{b^0}  W_{it} ' (\beta_{i,A} - \beta_{i,B})  W_{it} ' (\beta_{i,B}^0 - \beta_{i,B} ) - \frac{1}{NT}\sum_{i=1}^N \sum_{t=b+1}^{b^0}   ( W_{it} ' (\beta_{i,A} - \beta_{i,B}))^2
		\\ < &  - \underline{m} \frac{b^0 - b}{3T}.
	\end{align*}
	
	Lastly, we consider the third term in \eqref{eq:diff-ssr}.
	Note that 
	\begin{align*}
		&2 \frac{1}{N} \frac{1}{T}\sum_{i=1}^N  \sum_{t=b+1}^{b^0} u_{it} W_{it}' (\beta_{i,A} - \beta_{i,B}  ) \\
		= &  2 \frac{1}{N} \frac{1}{T}\sum_{i=1}^N  \sum_{t=b+1}^{b^0} u_{it} X_{it}' (\gamma_{i,A} - \gamma_{i,B}  ) \\
		=& 	2 \frac{1}{N} \frac{1}{T}\sum_{i=1}^N  \sum_{t=b+1}^{b^0} u_{it} X_{it}' (\gamma_{i,A}^0 - \gamma_{i,B}^0 ) 
		 + 	2 \frac{1}{N} \frac{1}{T}\sum_{i=1}^N  \sum_{t=b+1}^{b^0} u_{it} X_{it}'\left[\gamma_{i,A} - \gamma_{i,B} - (\gamma_{i,A}^0 - \gamma_{i,B}^0)  \right].
	\end{align*}
	We further observe that
	\begin{align*}
		& \left| \frac{1}{NT} \sum_{i=1}^N  \sum_{t=b+1}^{b^0} u_{it} X_{it}'\left[\gamma_{i,A} - \gamma_{i,B} - (\gamma_{i,A}^0 - \gamma_{i,B}^0)\right] \right| \\
		= & \left| \frac{1}{NT} \sum_{i=1}^N \sum_{j=1}^{N} \sum_{t=b+1}^{b^0} u_{it} x_{jt} \left[\gamma_{ij,A} - \gamma_{ij,B} - (\gamma_{ij,A}^0 - \gamma_{ij,B}^0)\right] \right| \\
		\leq &\frac{1}{N} \sum_{i=1}^N \sum_{j=1}^N | \gamma_{ij,A} - \gamma_{ij,B} - (\gamma_{ij,A}^0 - \gamma_{ij,B}^0) | 
		\times \max_{i} \max_j \left|  \frac{1}{T} \sum_{t=b+1}^{b^0} u_{it} x_{jt} \right| \\
		\leq &  C \epsilon^{1/2} \frac{\log N}{\sqrt{T}},\quad \textrm{w.p.a.1},
	\end{align*} 
	where the second inequality is due to $\beta \in \mathcal{B}(\epsilon)$ and Lemma \ref{lem:fuk-nagaev-dependent-ij-b}.
	This result, together with Assumption \ref{as-penalty-order}, implies that 
	\begin{align*}
		\Pr \bigg( \sup_{\beta \in \mathcal{B}(\epsilon) } \max_{b < b^0} \left(	2 \frac{1}{N} \frac{1}{b^0-b}\sum_{i=1}^N  \sum_{t=b+1}^{b^0} u_{it} X_{it}'\left[\gamma_{i,A} - \gamma_{i,B} - (\gamma_{i,A}^0 - \gamma_{i,B}^0 )\right]    \right)    > \frac{\underline{m}}{6} \bigg) \to 0.
	\end{align*}

	Thus, we have 
	\begin{align*}
		& \Pr \left( \sup_{\beta \in \mathcal{B}(\epsilon) } \max_{b < b^0}	\frac{1}{NT}\sum_{i=1}^N \sum_{t=1}^T \left[ \left(y_{it} - W_{it} (b^0)' \beta_i \right)^2 - \left(y_{it} - W_{it} (b)'  \beta_i \right)^2 \right] >0 \right) \\
		\leq & \Pr \left( \sup_{\beta \in \mathcal{B}(\epsilon) } \max_{b < b^0} \left(	2 \frac{1}{NT}\sum_{i=1}^N  \sum_{t=b+1}^{b^0} u_{it} W_{it}'(\beta_{i,A} - \beta_{i,B} ) - \underline{m} \frac{b^0 - b}{3T}  \right)  >0 \right) \\
		= &  \Pr \left( \sup_{\beta \in \mathcal{B}(\epsilon) } \max_{b < b^0} \left(	2 \frac{1}{N} \frac{1}{b^0-b}\sum_{i=1}^N  \sum_{t=b+1}^{b^0} u_{it} X_{it}' (\gamma_{i,A} - \gamma_{i,B})  - \frac{\underline{m}}{3}  \right)  >0 \right) \\
		= &  \Pr \left( \sup_{\beta \in \mathcal{B}(\epsilon) } \max_{b < b^0} \left(	2 \frac{1}{N} \frac{1}{b^0-b}\sum_{i=1}^N  \sum_{t=b+1}^{b^0} u_{it} X_{it}' (\gamma_{i,A} - \gamma_{i,B} )   \right)  > \frac{\underline{m}}{3} \right) \\
		\leq & \Pr \left( \sup_{\beta \in \mathcal{B}(\epsilon) } \max_{b < b^0} \left(	2 \frac{1}{N} \frac{1}{b^0-b}\sum_{i=1}^N  \sum_{t=b+1}^{b^0} u_{it} X_{it}'(\gamma_{i,A}^0 - \gamma_{i,B}^0)    \right)  > \frac{\underline{m}}{6} \right)\\
		& +  \Pr \bigg( \sup_{\beta \in \mathcal{B}(\epsilon) } \max_{b < b^0} \left(	2 \frac{1}{N} \frac{1}{b^0-b}\sum_{i=1}^N  \sum_{t=b+1}^{b^0} u_{it} X_{it}'\left[\gamma_{i,A} - \gamma_{i,B} - (\gamma_{i,A}^0 - \gamma_{i,B}^0 )\right]    \right)    > \frac{\underline{m}}{6} \bigg) \\
		\to & 0,
	\end{align*}
	where the last equality follows by applying \citet[][Lemma A.6]{bai&perron98} which is an extension of \citet{hajek1955generalization}. Here we use the observation that an $L_r$-bounded mixing sequence is an $L_p$ mixingale sequence for $1 \leq p < r$ as discussed in (Davidson, 1994, page 248). Thus, under Assumptions \ref{as-tail} and \ref{as-mixing}, $u_{it} X_{it}'(\gamma_{i,A}^0 - \gamma_{i,B}^0)$ is an $L_2$ mixingale, and we can apply \citet[][Lemma A.6]{bai&perron98}.
	
	Hence, for $b<b^0$, we have, 
	\begin{align*}
		\Pr \left( 	\frac{1}{NT}\sum_{i=1}^N \sum_{t=1}^T \left[y_{it} - W_{it} (b^0)' \hat \beta_i \right]^2 > \min_{b < b^0} \frac{1}{NT}\sum_{i=1}^N \sum_{t=1}^T \left[y_{it} - W_{it} (b)' \hat \beta_i \right]^2 \right)= o(1).
	\end{align*}
	With a similar argument, we can show that, for $b>b^0$,
	\begin{align*}
		\Pr \left( 	\frac{1}{NT}\sum_{i=1}^N \sum_{t=1}^T (y_{it} - W_{it} (b^0)' \hat \beta_i )^2 > \min_{b > b^0} \frac{1}{NT}\sum_{i=1}^N \sum_{t=1}^T (y_{it} - W_{it} (b)' \hat \beta_i )^2 \right)= o(1).
	\end{align*}
	This leads to our conclusion that
	\begin{align*}
		\Pr ( \tilde b \neq b^0 ) \to 0.
	\end{align*}
	
	\qed
	
	\section{Proof of Theorem \ref{thm-dml}}
	
	In the proof, for simplicity, we adopt an abused notation to let $X_{it, \hat{\mathbf{s}}_i}$ include an intercept, such that $\gamma_{i,\hat{\mathbf{s}}_i}$ and $\nu_{i,\hat{\mathbf{s}}_i}$ incorporate the coefficients of the intercept, namely $\alpha_i$ and $\eta_i$, respectively. Furthermore, we assume that $z_{it}$ is one dimensional. Under this revised notation, the private-effect estimator, e.g., using the main sample, can be written as  
	\begin{align*}
		\tilde \delta^m = 
		\left( \sum_{i=1}^N \sum_{t\in \mathcal{T}_m} \left(z_{it} - X_{it, \hat{\mathbf{s}}_i} (\tilde b)'\tilde{\nu}_i^a \right)^2 \right)^{-1} \times   \sum_{i=1}^N \sum_{t\in \mathcal{T}_m} \left(z_{it}  - X_{it, \hat{\mathbf{s}}_i} (\tilde b)'\tilde{\nu}_i^a \right)  \left( y_{it} - X_{it, \hat{\mathbf{s}}_i} (\tilde b)' \tilde{\gamma}_i^a  \right).
	\end{align*}
	
	Because $\Pr( \tilde b = b^0) \to 1$ and $\Pr ( \hat{\mathbf{s}}_i =  \mathbf{s}_i^0, \forall i) \to 1$ as $N,T \to \infty$ by Assumptions \ref{assume-sparse-dml} and \ref{assume-varcon}, we can ignore the uncertainty caused by estimating the breakpoint and selecting relevant units when examining the asymptotic properties of the private-effect estimator. Specifically, let $\hat \delta (\tilde{b}, \hat{\mathbf{s}}) $ be an estimator of $\delta^0$ obtained under $\tilde b$ and $\hat{\mathbf{s}}$, where $\hat{\mathbf{s}} = (\hat{\mathbf{s}}_1, \dots, \hat{\mathbf{s}}_N)$, whose true value is denoted as $\mathbf{s}^0 = (\mathbf{s}_1^0, \dots, \mathbf{s}_N^0)$. We observe that, for any $a$,
	\begin{align*}
		& \left| \Pr\left( \sqrt{NT} (\hat \delta (\tilde{b}, \hat{\mathbf{s}}) - \delta^0) <a \right) - \Pr \left( \sqrt{NT}(\hat \delta (b^0,\mathbf{s}^0 ) - \delta^0) < a \right) \right| \\
		\leq & \left| \Pr\left( \sqrt{NT} (\hat \delta (\tilde{b}, \hat{\mathbf{s}}) - \delta^0) <a , \tilde b = b^0,  \hat{\mathbf{s}} =\mathbf{s}^0 \right)  - \Pr \left( \sqrt{NT}(\hat \delta (b^0,\mathbf{s}^0 ) - \delta^0) <a \right) \right| \\
		& +  \Pr\left( \sqrt{NT} (\hat \delta (\tilde{b}, \hat{\mathbf{s}}) - \delta^0) <a , (\tilde b \neq b^0 \text{ or } \hat{\mathbf{s}} \neq \mathbf{s}^0 ) \right) \\
		= & \left| \Pr\left( \sqrt{NT} ( \hat \delta (b^0, \mathbf{s}^0) - \delta^0 ) <a , \tilde b = b^0,  \hat{\mathbf{s}} =\mathbf{s}^0 \right)  - \Pr \left( \sqrt{NT}(\hat \delta (b^0,\mathbf{s}^0 ) - \delta^0\right) <a ) \right| \\
		& +  \Pr\left( \sqrt{NT} (\hat \delta (\tilde{b}, \hat{\mathbf{s}}) - \delta^0) <a , (\tilde b \neq b^0 \text{ or } \hat{\mathbf{s}} \neq \mathbf{s}^0 ) \right)\\
		\leq & \left| \Pr \left( \tilde b = b^0,  \hat{\mathbf{s}} =\mathbf{s}^0 \right) - \Pr\left( \sqrt{NT} (\hat \delta (b^0, \mathbf{s}^0)  - \delta^0 ) <a , \text{ or } (\tilde b = b^0,  \hat{\mathbf{s}} =\mathbf{s}^0 ) \right)  \right| \\
		& +  \Pr  \left(\tilde b \neq b^0 \text{ or } \hat{\mathbf{s}} \neq \mathbf{s}^0 \right) \\
		\to & 0,
	\end{align*}
	 if $\Pr (\tilde b = b^0,  \hat{\mathbf{s}} =\mathbf{s}^0) \to 1$. Hence, $\hat \delta (\tilde{b}, \hat{\mathbf{s}})$ and $\hat \delta (b^0, \mathbf{s}^0)$ share the same asymptotic distribution, and we can simplify the analysis to study the estimator under known $b^0$ and $\mathbf{s}_i^0$, i.e.,
	\begin{align*}
		\tilde \delta^m(b^0,\mathbf{s}^0) = 
		\left( \sum_{i=1}^N \sum_{t\in \mathcal{T}_m} \left(z_{it}  - X_{it, \mathbf{s}_i^0} (b^0)'\tilde{\nu}_i^a \right)^2 \right)^{-1}  \sum_{i=1}^N \sum_{t\in \mathcal{T}_m}  \left(z_{it} - X_{it, \mathbf{s}_i^0} (b^0)'\tilde{\nu}_i^a \right) \left( y_{it} - X_{it, \mathbf{s}_i^0} (b^0)' \tilde{\gamma}_i^a  \right). 
	\end{align*}
	
	We note that 
	\begin{align*}
y_{it} - X_{it, \mathbf{s}_i^0} (b^0)' \tilde{\gamma}_i^a  
		=  X_{it, \mathbf{s}_i^0} (b^0)' (\gamma_{i,\mathbf{s}_i^0} - \tilde{\gamma}_i^a + \tilde{\nu}_i^a\delta^0 )   + (z_{it} - X_{it, \mathbf{s}_i^0} (b^0)'\tilde{\nu}_i^a) \delta^0 + u_{it}.
	\end{align*}
	It therefore holds that 
	\begin{align}\label{eq:delta-b0-s0}
		\tilde \delta^m(b^0,\mathbf{s}^0)     =& \delta^0 + \left( \sum_{i=1}^N \sum_{t\in \mathcal{T}_m} \left(z_{it}  - X_{it, \mathbf{s}_i^0} (b^0)'\tilde{\nu}_i^a \right)^2 \right)^{-1} \nonumber\\
		&  \quad \times  \sum_{i=1}^N \sum_{t\in \mathcal{T}_m} \left(z_{it} - X_{it, \mathbf{s}_i^0} (b^0)'\tilde{\nu}_i^a \right) \left(  X_{it, \mathbf{s}_i^0} (b^0)' (\gamma_{i,\mathbf{s}_i^0} - \tilde{\gamma}_i^a + \tilde{\nu}_i^a\delta^0 )  + u_{it} \right). 	
	\end{align}
	We first consider the numerator part of the right-hand side of~\eqref{eq:delta-b0-s0}. It holds that 
	\begin{align}\label{eq:numerator}
		&\sum_{i=1}^N \sum_{t\in \mathcal{T}_m}  \left(z_{it} - X_{it, \mathbf{s}_i^0} (b^0)'\tilde{\nu}_i^a \right) \left(  X_{it, \mathbf{s}_i^0} (b^0)' (\gamma_{i,\mathbf{s}_i^0} - \tilde{\gamma}_i^a + \tilde{\nu}_i^a\delta^0 )  + u_{it} \right)  \nonumber\\
		= &  \sum_{i=1}^N \sum_{t\in \mathcal{T}_m} \left( X_{it, \mathbf{s}_i^0} (b^0)' (\nu_{i,\mathbf{s}_i^0} - \tilde{\nu}_i^a ) + e_{it} \right) \left(  X_{it, \mathbf{s}_i^0} (b^0)'  (\gamma_{i,\mathbf{s}_i^0} - \tilde{\gamma}_i^a + \tilde{\nu}_i^a\delta^0 ) + u_{it} \right)  \nonumber\\
		= &  \sum_{i=1}^N \sum_{t\in \mathcal{T}_m} \left( X_{it, \mathbf{s}_i^0} (b^0)' (\gamma_{i,\mathbf{s}_i^0} - \tilde{\gamma}_i^a + \tilde{\nu}_i^a\delta^0 )  \right) \left( X_{it, \mathbf{s}_i^0} (b^0)' (\nu_{i,\mathbf{s}_i^0} - \tilde{\nu}_i^a ) \right) \nonumber \\
		& +    \sum_{i=1}^N \sum_{t\in \mathcal{T}_m} \left( X_{it, \mathbf{s}_i^0} (b^0)' (\gamma_{i,\mathbf{s}_i^0} - \tilde{\gamma}_i^a + \tilde{\nu}_i^a\delta^0 )  \right) e_{it}   +    \sum_{i=1}^N \sum_{t\in \mathcal{T}_m}  u_{it} \left( X_{it, \mathbf{s}_i^0} (b^0)' (\nu_{i,\mathbf{s}_i^0} - \tilde{\nu}_i^a ) \right)  \\
		& +    \sum_{i=1}^N \sum_{t\in \mathcal{T}_m}  u_{it} e_{it}.
	\end{align}
	We note that $
		\nu_{i,\mathbf{s}_i^0} - \tilde{\nu}_i^a
		=  -\left( \sum_{t\in \mathcal{T}_a} X_{it, \mathbf{s}_i^0} (b^0)  X_{it, \mathbf{s}_i^0} (b^0)'  \right)^{-1}  \sum_{t\in \mathcal{T}_a} X_{it, \mathbf{s}_i^0} (b^0)  e_{it}$,
	and 
	\begin{align*}
	&\gamma_{i,\mathbf{s}_i^0} - \tilde{\gamma}_i^a + \tilde{\nu}_i^a\delta^0 \\
		= & \gamma_{i,\mathbf{s}_i^0}^* - \tilde{\gamma}_i^a + (\tilde{\nu}_i^a - \nu_{i,\mathbf{s}_i^0})\delta^0 \\
		= & - \left( \sum_{t\in \mathcal{T}_a} X_{it, \mathbf{s}_i^0} (b^0)  X_{it, \mathbf{s}_i^0} (b^0)'  \right)^{-1}  \sum_{t\in \mathcal{T}_a} X_{it, \mathbf{s}_i^0} (b^0)  \tilde u_{it} 
		 + \left( \sum_{t\in \mathcal{T}_a} X_{it, \mathbf{s}_i^0} (b^0)  X_{it, \mathbf{s}_i^0} (b^0)'  \right)^{-1}  \sum_{t\in \mathcal{T}_a} X_{it, \mathbf{s}_i^0} (b^0)  e_{it} \delta^0 \\
		= & - \left( \sum_{t\in \mathcal{T}_a} X_{it, \mathbf{s}_i^0} (b^0)  X_{it, \mathbf{s}_i^0} (b^0)'  \right)^{-1}  \sum_{t\in \mathcal{T}_a} X_{it, \mathbf{s}_i^0} (b^0)  u_{it}.
	\end{align*}
	Then, the first term of \eqref{eq:numerator} is 
	\begin{align*}
		&    \frac{1}{\sqrt{NT}}  \sum_{i=1}^N (\gamma_{i,\mathbf{s}_i^0}^* -  \tilde{\gamma}_i^a)' \left( \sum_{t\in \mathcal{T}_m} X_{it, \mathbf{s}_i^0} (b^0)  X_{it, \mathbf{s}_i^0} (b^0)'  \right) \left( \nu_{i,\mathbf{s}_i^0} - \tilde{\nu}_i^a  \right) \\
		=&  \frac{1}{\sqrt{NT}}  \sum_{i=1}^N\sum_{t\in \mathcal{T}_a}  u_{it} X_{it, \mathbf{s}_i^0} (b^0)' \left( \sum_{t\in \mathcal{T}_a} X_{it, \mathbf{s}_i^0} (b^0)  X_{it, \mathbf{s}_i^0} (b^0)'  \right)^{-1} \\
		& \times  \left( \sum_{t\in \mathcal{T}_m} X_{it, \mathbf{s}_i^0} (b^0)  X_{it, \mathbf{s}_i^0} (b^0)'  \right) \left( \sum_{t\in \mathcal{T}_a} X_{it, \mathbf{s}_i^0} (b^0)  X_{it, \mathbf{s}_i^0} (b^0)'  \right)^{-1}  \sum_{t\in \mathcal{T}_a} X_{it, \mathbf{s}_i^0} (b^0)  e_{it} \\
	  = &\frac{1}{\sqrt{NT}}  \sum_{i=1}^N\sum_{t\in \mathcal{T}_a}  u_{it} X_{it, \mathbf{s}_i^0} (b^0)' \left( \sum_{t\in \mathcal{T}_a} X_{it, \mathbf{s}_i^0} (b^0)  X_{it, \mathbf{s}_i^0} (b^0)'  \right)^{-1}  \sum_{t\in \mathcal{T}_a} X_{it, \mathbf{s}_i^0} (b^0)  e_{it}   + o_p(1),
	  \end{align*}
	where the second equality above follows from the fact that 
	\begin{align}\label{eq:second-equality}
		& \bigg| \frac{1}{\sqrt{NT}}  \sum_{i=1}^N\sum_{t\in \mathcal{T}_a}  u_{it} X_{it, \mathbf{s}_i^0} (b^0)' \left( \sum_{t\in \mathcal{T}_a} X_{it, \mathbf{s}_i^0} (b^0)  X_{it, \mathbf{s}_i^0} (b^0)'  \right)^{-1} \nonumber\\
		& \times  \left( \sum_{t\in \mathcal{T}_m} X_{it, \mathbf{s}_i^0} (b^0)  X_{it, \mathbf{s}_i^0} (b^0)'  \right) \left( \sum_{t\in \mathcal{T}_a} X_{it, \mathbf{s}_i^0} (b^0)  X_{it, \mathbf{s}_i^0} (b^0)'  \right)^{-1}  \sum_{t\in \mathcal{T}_a} X_{it, \mathbf{s}_i^0} (b^0)  e_{it} \nonumber\\
		&  -\frac{1}{\sqrt{NT}}  \sum_{i=1}^N\sum_{t\in \mathcal{T}_a}  u_{it} X_{it, \mathbf{s}_i^0} (b^0)' \left( \sum_{t\in \mathcal{T}_a} X_{it, \mathbf{s}_i^0} (b^0)  X_{it, \mathbf{s}_i^0} (b^0)'  \right)^{-1}  \sum_{t\in \mathcal{T}_a} X_{it, \mathbf{s}_i^0} (b^0)  e_{it} \bigg| \nonumber\\
		=& \bigg| \frac{1}{\sqrt{NT}}  \sum_{i=1}^N\sum_{t\in \mathcal{T}_a}  u_{it} X_{it, \mathbf{s}_i^0} (b^0)' \left( \sum_{t\in \mathcal{T}_a} X_{it, \mathbf{s}_i^0} (b^0)  X_{it, \mathbf{s}_i^0} (b^0)'  \right)^{-1} \nonumber\\
		& \times \left(  \left( \sum_{t\in \mathcal{T}_m} X_{it, \mathbf{s}_i^0} (b^0)  X_{it, \mathbf{s}_i^0} (b^0)'  \right) -  \left( \sum_{t\in \mathcal{T}_a} X_{it, \mathbf{s}_i^0} (b^0)  X_{it, \mathbf{s}_i^0} (b^0)'  \right) \right) \nonumber\\
		& \times    \left( \sum_{t\in \mathcal{T}_a} X_{it, \mathbf{s}_i^0} (b^0)  X_{it, \mathbf{s}_i^0} (b^0)'  \right)^{-1} \sum_{t\in \mathcal{T}_a} X_{it, \mathbf{s}_i^0} (b^0)  e_{it} \bigg| \nonumber\\
		\leq  &  \frac{1}{\sqrt{NT}}  \sum_{i=1}^N \left\Vert \frac{1}{\sqrt{T}} \sum_{t\in \mathcal{T}_a}  u_{it} X_{it, \mathbf{s}_i^0} (b^0)'\right\Vert \cdot \left\Vert \left( \frac{1}{T} \sum_{t\in \mathcal{T}_a} X_{it, \mathbf{s}_i^0} (b^0)  X_{it, \mathbf{s}_i^0} (b^0)'  \right)^{-1} \right\Vert_2 \nonumber\\
		& \times \left\Vert  \left( \frac{1}{T}\sum_{t\in \mathcal{T}_m} X_{it, \mathbf{s}_i^0} (b^0)  X_{it, \mathbf{s}_i^0} (b^0)'  \right) -  \left( \frac{1}{T} \sum_{t\in \mathcal{T}_a} X_{it, \mathbf{s}_i^0} (b^0)  X_{it, \mathbf{s}_i^0} (b^0)'  \right) \right\Vert_2 \nonumber\\
		& \times \left\Vert  \left( \frac{1}{T} \sum_{t\in \mathcal{T}_a} X_{it, \mathbf{s}_i^0} (b^0)  X_{it, \mathbf{s}_i^0} (b^0)'  \right)^{-1} \right\Vert_2 \left\Vert \frac{1}{\sqrt{T}}\sum_{t\in \mathcal{T}_a} X_{it, \mathbf{s}_i^0} (b^0)  e_{it} \right\Vert \\
		=& o_p \left( \frac{\sqrt{N}}{\sqrt{T}}  \right).\nonumber
	\end{align}
	Here, the last equality above holds, because the first term in~\eqref{eq:second-equality} is $o_p(1)$ if $s/ (NT) \to 0$, and the second term is $o_p(\sqrt{N/T})$ due to Assumption \ref{assume-xx-stable}. The order of the first term can be derived by noting that Assumption \ref{assume-ue-iid} implies
	\begin{align*}
		& E \left(  \frac{1}{\sqrt{NT}}  \sum_{i=1}^N\sum_{t\in \mathcal{T}_a}   u_{it} X_{it, \mathbf{s}_i^0} (b^0)' \left( \sum_{t\in \mathcal{T}_a} X_{it, \mathbf{s}_i^0} (b^0)  X_{it, \mathbf{s}_i^0} (b^0)'  \right)^{-1}  \sum_{t\in \mathcal{T}_a} X_{it, \mathbf{s}_i^0} (b^0)  e_{it}  \right) \\
		= &  \frac{1}{\sqrt{NT}}  \sum_{i=1}^N s_i E(  u_{it} e_{it} ) = 0, 
	\end{align*}
	and 
	\begin{align*}
		& E  \left( \frac{1}{\sqrt{NT}}  \sum_{i=1}^N\sum_{t\in \mathcal{T}_a}   u_{it} X_{it, \mathbf{s}_i^0} (b^0)' \left( \sum_{t\in \mathcal{T}_a} X_{it, \mathbf{s}_i^0} (b^0)  X_{it, \mathbf{s}_i^0} (b^0)'  \right)^{-1}  \sum_{t\in \mathcal{T}_a} X_{it, \mathbf{s}_i^0} (b^0)  e_{it}  \right)^2  \\
		= &  \frac{1}{NT}  \sum_{i=1}^N s_i E(  u_{it}^2 e_{it}^2 ) = O\left( \frac{s}{NT} \right).
	\end{align*}

	Now we examine the second term on the right-hand side of~\eqref{eq:numerator}. By Assumptions \ref{assume-sparse-dml} and \ref{assume-varcon}, w.p.a.1, it is equal to
	\begin{align*}
		\frac{1}{\sqrt{NT}}  \sum_{i=1}^N (\gamma_{i,\mathbf{s}_i^0} - \tilde{\gamma}_i^a + \tilde{\nu}_i^a\delta^0 )' \sum_{t\in \mathcal{T}_m} X_{it, \mathbf{s}_i^0} (b^0) e_{it},
	\end{align*}
	and using cross-fitting under serial independence condition in Assumption~\ref{assume-ue-iid}, this term has a zero mean and its variance is
	\begin{align*}
		& E  \left( \frac{1}{\sqrt{NT}}  \sum_{i=1}^N (\gamma_{i,\mathbf{s}_i^0} - \tilde{\gamma}_i^a + \tilde{\nu}_i^a\delta^0 )' \sum_{t\in \mathcal{T}_m} X_{it, \mathbf{s}_i^0} (b^0) e_{it}  \right)^2  \\
		\leq  &\frac{1}{NT} E  \sum_{i=1}^N   \left\Vert 
		\left( \sum_{t\in \mathcal{T}_a} X_{it, \mathbf{s}_i^0} (b^0)  X_{it, \mathbf{s}_i^0} (b^0)'  \right)^{-1}  \sum_{t\in \mathcal{T}_a} X_{it, \mathbf{s}_i^0} (b^0)  u_{it} \right\Vert_2^2  \\
		& \times 
		E\left( \max_{i} \lambda_{\max}  \left( \sum_{t\in \mathcal{T}_m} X_{it, \mathbf{s}_i^0} (b^0) e_{it} \right) \left( \sum_{t\in \mathcal{T}_m} X_{it, \mathbf{s}_i^0} (b^0) e_{it}   \right)'  \right)  \\
		=&  O\left( \frac{s^2}{NT}\right).
	\end{align*}
Thus, we have the second term being of $O_p (s/\sqrt{NT})$.	
The third term of~\eqref{eq:numerator} can be analyzed similarly as the second term, and it is also of the same order $O_p(s/\sqrt{NT})$.
	
Finally, the last term of~\eqref{eq:numerator} determines the asymptotic distribution:
	\begin{align*}
		\frac{1}{\sqrt{NT}}  \sum_{i=1}^N \sum_{t\in \mathcal{T}_m}  u_{it} e_{it} 
		=  \sqrt{\frac{|\mathcal{T}_m| }{T}}\frac{1}{\sqrt{N |\mathcal{T}_m|}}  \sum_{i=1}^N \sum_{t\in \mathcal{T}_m}  u_{it} e_{it} 
		\to_d  N(0, E(u_{it}^2 e_{it}^2)/2).
	\end{align*}
To sum up, it follows that the numerator follows
	\begin{align*}
		& \frac{1}{\sqrt{NT}}\sum_{i=1}^N \sum_{t\in \mathcal{T}_m} \left(  X_{it, \mathbf{s}_i^0} (b^0)' (\gamma_{i,\mathbf{s}_i^0} - \tilde{\gamma}_i^a + \tilde{\nu}_i^a\delta^0 ) + u_{it} \right) \left(z_{it} - X_{it, \mathbf{s}_i^0} (\tilde b^0)'\tilde{\nu}_i^a \right)  \\
		= & O_p \left(  \frac{s}{\sqrt{NT}}  \right) +  \frac{1}{\sqrt{NT}}  \sum_{i=1}^N \sum_{t\in \mathcal{T}_m}  u_{it} e_{it} \\ 
		\to_d &  N(0, E(u_{it}^2 e_{it}^2)/2).
	\end{align*}
	
	Next, we examine the denominator of~\eqref{eq:delta-b0-s0}. By the law of large numbers, it holds that 
	\begin{align*}
		\frac{1}{NT} \sum_{i=1}^N \sum_{t\in \mathcal{T}_m}  \left(z_{it} - X_{it, \mathbf{s}_i^0} (b^0)'\tilde{\nu}_i^a \right)^2
		\to_p \frac{1}{2}E( e_{it}^2).
	\end{align*}	
	Combining the numerator and denominator, we thus have 
	\begin{align*}
		\sqrt{NT} \left(\tilde \delta^m - \delta^0\right) \to_d N \left( 0, 2 \left( E( e_{it}^2) \right)^{-1} E(u_{it}^2 e_{it}^2) \left( E( e_{it}^2) \right)^{-1} \right).
	\end{align*}
	Then, the final estimator after cross-fitting follows  
	\begin{align*}
		\tilde \delta =& \frac{1}{2}( \tilde \delta^m + \tilde \delta^a) \\
		=& \frac{1}{2} \left(\frac{1}{2}E( e_{it}^2) \right)^{-1}\left(  \frac{1}{\sqrt{NT}}  \sum_{i=1}^N \sum_{t\in \mathcal{T}_m}  u_{it} e_{it} +  \frac{1}{\sqrt{NT}}  \sum_{i=1}^N \sum_{t\in \mathcal{T}_a}  u_{it} e_{it} \right) + o_p(1) \\
		=&  \frac{1}{\sqrt{NT}}  \sum_{i=1}^N \sum_{t=1}^T  u_{it} e_{it} + o_p(1) \\
		\to_d &  N\left( 0, \left( E( e_{it}^2) \right)^{-1} E(u_{it}^2 e_{it}^2) \left( E( e_{it}^2) \right)^{-1} \right) .
	\end{align*}
	
	\qed
	
	\section{Useful Lemmas}	
	In this section, we provide useful lemmas from \citet{DzemskiOkui24}.

	We first present the following lemma which is a slightly more general version of \citet[Lemma A.15][]{DzemskiOkui24}. The proof is almost identical to that of \citet[Lemma A.15][]{DzemskiOkui24} and is thus omitted.
	\begin{lemma}
		\label{lem:fuk-nagaev-dependent-b}
		Suppose that $X_{it}$ is a strongly mixing process with zero mean for each $i=1, \dots, N$  with tail probabilities $\sup_{i = 1, \dotsc, N} P (|X_{it}| > x) \leq \exp (1-(x/a)^{d_1})$ and with strong mixing coefficients $\sup_{i = 1, \dotsc, N} a_i [t] \leq \exp ( -bt^{d_2} ) $, where $a$, $b$, $d_1$, and $d_2$ are positive constants. Let $\mathbb{P}_N$ denote a sequence of sets of probability measures that satisfy the above conditions with given values of $a$, $b$, $d_1$, and $d_2$. 
		Let \begin{align*}
			s_T^2   = \max_{1 \leq i \leq N} \max_{1 \leq t \leq T} \left(\E (X_{it}^2) + 2 \sum_{s>t} \left| \E( X_{it} X_{is})\right| \right) .
		\end{align*}
		Assume that $s_T^2 < C_s \log^{a_s} N$ for constants $C_s$ and $0 \leq a_s \leq 1$ which do not depend on $N, T$ nor $P$.
		Then, it holds that for any constant $C>0$, as $N,T\to \infty$ with $NT^{-\delta_2} \to 0$ for some $\delta_2>0$,
		\begin{align*}
			\sup_{P \in \mathbb{P}_N} P \left( \max_{1 \leq i \leq N}  \left| \frac{1}{T}\sum_{t=1}^T X_{it} \right| \geq  C T^{-1/2} \log N  \right) \to 0 .
		\end{align*}
	\end{lemma}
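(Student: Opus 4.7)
The plan is a standard truncation plus Bernstein-type concentration argument, followed by a union bound over the $N$ units, with constants that depend only on $a, b, d_1, d_2, C_s, a_s$ so that everything is uniform over $\mathbb{P}_N$. First I would fix a truncation level $M = M_{NT}$ and decompose $X_{it} = Y_{it} + Z_{it}$ with $Y_{it} = X_{it} \mathbf{1}(|X_{it}| \le M)$. The tail assumption together with the union bound gives $P(\max_{i,t}|Z_{it}|>0) \le NT \exp(1 - (M/a)^{d_1})$, which tends to zero once $M \ge c(\log NT)^{1/d_1}$ for a sufficiently large $c$. Moreover $|E Y_{it}|=|EZ_{it}|$ is exponentially small in $M^{d_1}$, so the centering correction $|E Y_{it}|$ is $o(T^{-1/2} \log N)$ and can be absorbed.

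Next, for each fixed $i$, I would apply a Bernstein-type inequality for strongly mixing sequences to the centered, truncated sum $S_{iT} = \sum_{t=1}^T (Y_{it}-EY_{it})$. Using a standard block decomposition (Bernstein blocks of length $\ell_T \asymp (\log T)^{1/d_2}$ so that the mixing remainder $T a_i[\ell_T]$ is negligible), one obtains a bound of the form
\[
P\bigl(|S_{iT}| \ge T x\bigr) \le C_1 \exp\!\left( - \frac{C_2 T x^2}{s_T^2 + M\, q_T\, x}\right) + \text{(mixing remainder)},
\]
where $q_T$ is polylogarithmic in $T$. Plugging in $x = C T^{-1/2}\log N$ with $C$ large, and using $s_T^2 \le C_s \log^{a_s} N$ with $a_s\le 1$, the ratio inside the exponent is
\[
\frac{C^2 \log^2 N}{s_T^2 + M q_T C T^{-1/2}\log N} \ \ge\ C' \log^{2-a_s} N,
\]
provided $M q_T \log N / \sqrt{T} \to 0$; under $M \asymp (\log NT)^{1/d_1}$ and $N T^{-\delta_2}\to 0$, this is guaranteed for $T$ sufficiently large. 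Hence each single-$i$ probability is at most $C_1 \exp(-C' \log^{2-a_s} N)$, with constants that depend only on the class parameters.

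Finally I would take the union bound over $i=1,\dots,N$. Because $\log^{2-a_s} N \ge \log N$, choosing $C$ large makes $N \cdot C_1 \exp(-C' \log^{2-a_s} N)\to 0$, and combining this with the negligible truncation error and the mixing remainder yields the stated conclusion uniformly in $P\in \mathbb{P}_N$. The main obstacle is the bookkeeping in the Bernstein step: one must pick the block length $\ell_T$ and the truncation $M$ simultaneously so that (i) the mixing tail $T \exp(-b \ell_T^{d_2})$ is summable, (ii) the contribution $M q_T\, x$ in the Bernstein denominator stays dominated by $s_T^2$, and (iii) the tail probability of the untruncated part is $o(1)$ after the $N$-fold union bound. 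The condition $NT^{-\delta_2}\to 0$ is exactly what makes these three requirements mutually compatible, because it forces $\log N \lesssim \log T$ and hence keeps both $M$ and $q_T$ merely polylogarithmic in $T$. Since this is the same calibration used in Lemma A.15 of \citet{DzemskiOkui24}, the rest of the verification follows their argument verbatim.
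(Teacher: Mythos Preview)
Your proposal is correct and follows essentially the same approach as the paper: the paper does not give an independent proof but simply states that the argument is almost identical to that of Lemma~A.15 in \citet{DzemskiOkui24}, and your truncation--plus--Bernstein--plus--union-bound sketch is precisely that argument, as you yourself acknowledge in your final sentence. One minor remark: the lemma asserts the conclusion for \emph{any} $C>0$, whereas your union-bound step speaks of ``choosing $C$ large''; when $a_s<1$ the exponent $\log^{2-a_s}N$ dominates $\log N$ so any $C>0$ indeed works, but the borderline $a_s=1$ would need $C$ large enough, so you may want to phrase that step accordingly.
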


	We also use the following version where a random variable is indexed by $i$, $j$, and $t$.
	
	\begin{lemma}
		\label{lem:fuk-nagaev-dependent-ij-b}
		Suppose that $X_{ijt}$ is a strongly mixing process with zero mean for each $i=1, \dots, N$ and $j=1, \dots, N$ with tail probabilities $\sup_{i = 1, \dotsc, N, j = 1, \dots, N} P (|X_{ijt}| > x) \leq \exp (1-(x/a)^{d_1})$ and with strong mixing coefficients $\sup_{i = 1, \dotsc, N, j =1, \dots, N} a_{ij} [t] \leq \exp ( -bt^{d_2} ) $, where $a$, $b$, $d_1$, and $d_2$ are positive constants. Let $\mathbb{P}_N$ denote a sequence of sets of probability measures that satisfy the above conditions with given values of $a$, $b$, $d_1$, and $d_2$. 
		Let \begin{align*}
			s_T^2   = \max_{i = 1, \dotsc, N}\max_{j = 1, \dotsc, N} \max_{t = 1, \dotsc, T} \left(E (X_{ijt}^2) + 2 \sum_{s>t} \left| E( X_{ijt} X_{ijs})\right| \right) .
		\end{align*}
		Assume that $s_T^2 < C_s \log^{a_s} N$ for constants $C_s$ and $0 \leq a_s \leq 1$ which do not depend on $N, T$ nor $P$.
		Then, it holds that for any constants $C'>0$ and $0<c<1$, as $N,T\to \infty$ with $N^2T^{-\delta} \to 0$ for some $\delta>0$,
		\begin{align*}
			\sup_{P \in \mathbb{P}_N} P \left( \max_{b=1, \dots T} \max_{i = 1, \dotsc, N} \max_{j = 1, \dotsc, N}  \left| \frac{1}{T}\sum_{t=1}^b X_{ijt} \right| \geq C' T^{-1/2} 2 \log N  \right) \to 0 .
		\end{align*}
	\end{lemma}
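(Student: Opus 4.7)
The plan is to reduce this doubly-indexed, partial-sum maximal bound to the singly-indexed, full-sum bound of Lemma \ref{lem:fuk-nagaev-dependent-b} in two steps: a reindexing that converts $N^2$ pairs $(i,j)$ into a single index of cardinality $N^2$, and a maximal-inequality argument that absorbs the supremum over the truncation point $b$.

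First, I would relabel the array by setting $\tilde X_{kt} := X_{ijt}$ where $k = (i-1)N + j \in \{1,\dots,N^2\}$. Because the tail-probability bound, the strong-mixing coefficient bound, and the variance proxy $s_T^2$ are all uniform in $(i,j)$ by hypothesis, the process $\{\tilde X_{kt}\}_{k=1}^{N^2}$ satisfies exactly the hypotheses of Lemma \ref{lem:fuk-nagaev-dependent-b} with $N$ replaced by $N^2$. The growth condition $NT^{-\delta_2}\to 0$ required there becomes $N^2 T^{-\delta}\to 0$, which is precisely the hypothesis of the present lemma, and the rate $T^{-1/2}\log N^2 = 2\,T^{-1/2}\log N$ matches the threshold $C'T^{-1/2}\cdot 2\log N$ on the right-hand side. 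So for the case $b=T$, the claim is essentially immediate from Lemma \ref{lem:fuk-nagaev-dependent-b}.

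Second, to handle $\max_{b\le T}$, I would not simply union-bound over $b$ (which would cost a factor of $T$), but instead invoke a maximal version of the Fuk--Nagaev inequality for strongly mixing sequences. Inspection of the proof of Lemma \ref{lem:fuk-nagaev-dependent-b} in \citet{DzemskiOkui24} shows that the controlling step is a Fuk--Nagaev tail bound for $|\sum_{t=1}^T \tilde X_{kt}|$; this has a standard maximal analog controlling $\max_{b\le T} |\sum_{t=1}^b \tilde X_{kt}|$ by the same order-of-magnitude bound (up to absolute constants), via an Ottaviani-type reflection combined with the usual Bernstein/blocking decomposition used to derive Fuk--Nagaev for mixing arrays. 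This replaces, for each fixed $k$, the event $\{|\sum_{t=1}^T \tilde X_{kt}/T|>CT^{-1/2}(2\log N)\}$ by $\{\max_{b\le T}|\sum_{t=1}^b \tilde X_{kt}/T|>C'T^{-1/2}(2\log N)\}$ at the cost of enlarging $C$ to $C'$. Union-bounding this maximal probability over the $N^2$ values of $k$, and using that the per-$k$ probability decays faster than any polynomial in $N$ under the exponential tail and mixing conditions (ensuring $N^2 \cdot \exp(-c\log^{a} N)\to 0$ as in the original proof), yields the stated conclusion.

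The only nontrivial point is the passage from the full sum to the running maximum over $b$. This is the step where one must verify that the blocking argument underlying the original Fuk--Nagaev bound admits a maximal reformulation without losing more than constants; concretely, one can apply the Ottaviani-type inequality to each block of an even/odd blocking partition and then recombine. The reindexing and union bound over $k=1,\dots,N^2$ are routine, so this maximal upgrade is the substantive work.
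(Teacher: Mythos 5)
Your core reduction---relabeling the pair $(i,j)$ as a single index $k\in\{1,\dots,N^2\}$ and applying Lemma \ref{lem:fuk-nagaev-dependent-b} with $N$ replaced by $N^2$, noting that $\log N^2=2\log N$ reproduces the stated threshold and that $N^2T^{-\delta}\to 0$ is exactly the growth condition inherited from that lemma---is precisely the paper's proof; the paper's argument consists of that single reindexing sentence and nothing more. Where you go beyond the paper is your second step: you correctly observe that Lemma \ref{lem:fuk-nagaev-dependent-b} as stated controls only the full normalized sum $T^{-1}\sum_{t=1}^T$, whereas the present lemma requires uniformity over the truncation point $b$, and you supply the missing maximal upgrade via an Ottaviani-type reflection inside the blocking/Fuk--Nagaev argument rather than a naive union bound over $b$. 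The paper is silent on this point, implicitly treating the partial-sum maximum as covered by the reindexing alone (or relying on the underlying result of \citet{DzemskiOkui24} already being of maximal type). Your caution is warranted: a union bound over $b=1,\dots,T$ costs a factor of $T$, and under $N^2T^{-\delta}\to 0$ the horizon $T$ may grow much faster than any fixed power of $N$, while for $a_s=1$ the per-index Gaussian-type tail decays only polynomially in $N$, so the factor of $T$ cannot in general be absorbed. Your version is therefore the more complete of the two, at the cost of having to verify (as you note) that the blocking decomposition in \citet{DzemskiOkui24} admits the maximal reformulation up to constants.
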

	
	\begin{proof}
		The lemma follows Lemma \ref{lem:fuk-nagaev-dependent-b} by regarding the double index $(i,j)$ in this lemma as a single index $i$ with length $N^2$ in Lemma \ref{lem:fuk-nagaev-dependent-b}.
	\end{proof}
	
	We note that condition $s_T^2 < C_s \log^{a_s} N$ holds under Assumptions \ref{as-tail} and \ref{as-mixing}.
	
	\begin{lemma}[Lemma A.7 in \citet{DzemskiOkui24}]
		\label{lem:exp-tail}
		Suppose that two random variances $X_1$ and $X_2$ satisfy $P(|X_a| >x ) \leq C_a \exp( - b_a x^{d_a}) $ for $a=0,1$, then $P(|X_1 X_2| >x ) \leq C \exp( - b x^{d}) $ for some positive constants $C$, $b$, and $d_2$, and $P(|X_1 + X_2| >x ) \leq C' \exp( - b' x^{d'}) $ for some constants $C'$, $b'$, and $d'$.	
	\end{lemma}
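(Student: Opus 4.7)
The plan is to establish both claims by elementary union bounds together with a careful selection of constants so that the two tail contributions can be dominated by a single exponential. No probabilistic machinery beyond the union bound is needed.

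For the product claim I would start from the inclusion $\{|X_1 X_2| > x\} \subseteq \{|X_1| > \sqrt{x}\} \cup \{|X_2| > \sqrt{x}\}$ (since otherwise $|X_1 X_2| \leq \sqrt{x}\cdot\sqrt{x} = x$). Applying the union bound together with the hypothesized tail estimates gives
$$P(|X_1 X_2| > x) \leq P(|X_1| > \sqrt{x}) + P(|X_2| > \sqrt{x}) \leq C_1 \exp(-b_1 x^{d_1/2}) + C_2 \exp(-b_2 x^{d_2/2}).$$
Setting $d := \min(d_1,d_2)/2$ and $b := \min(b_1, b_2)$, for $x \geq 1$ we have $d \leq d_a/2$ and hence $b_a x^{d_a/2} \geq b\, x^d$ for $a=1,2$, so each summand is dominated by $\exp(-b x^d)$. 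This yields the bound $(C_1 + C_2)\exp(-b x^d)$ on $\{x \geq 1\}$, and for $x \in [0,1)$ the trivial bound $P(\cdot) \leq 1$ can be absorbed by enlarging the constant to $C := \max(C_1+C_2, e^{b})$.

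For the sum claim the argument is parallel, now exploiting $\{|X_1 + X_2| > x\} \subseteq \{|X_1| > x/2\} \cup \{|X_2| > x/2\}$. The union bound yields
$$P(|X_1 + X_2| > x) \leq C_1 \exp(-b_1 (x/2)^{d_1}) + C_2 \exp(-b_2 (x/2)^{d_2}).$$
Choosing $d' := \min(d_1,d_2)$ and $b' := \min(b_1 2^{-d_1}, b_2 2^{-d_2})$, the same comparison as above shows $\exp(-b_a (x/2)^{d_a}) \leq \exp(-b' x^{d'})$ for $x \geq 1$, and the resulting multiplicative constant is then inflated to cover $x \in [0,1)$.

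The only genuine point of care — hardly an obstacle — is constant bookkeeping: the assumed tail inequalities control behavior only for large $x$, whereas the conclusion must hold for every $x \geq 0$. This is dispatched by inflating $C$ and $C'$ so that the bound becomes trivial on any bounded interval. I would present the product and sum arguments together, since the two proofs differ only in the choice of threshold ($\sqrt{x}$ versus $x/2$) and the resulting calibration of $(d, b)$ versus $(d', b')$.
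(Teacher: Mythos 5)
Your argument is correct: the inclusions $\{|X_1X_2|>x\}\subseteq\{|X_1|>\sqrt{x}\}\cup\{|X_2|>\sqrt{x}\}$ and $\{|X_1+X_2|>x\}\subseteq\{|X_1|>x/2\}\cup\{|X_2|>x/2\}$, the union bound, and the calibration of $(b,d)$ and $(b',d')$ via minima (with the constants inflated to cover small $x$) are exactly what is needed, and the bookkeeping checks out. Note that the paper itself gives no proof of this statement --- it is imported verbatim as Lemma A.7 of \citet{DzemskiOkui24} --- so there is no in-paper argument to compare against; your proof is the standard one and would serve as a complete, self-contained justification.
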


	\begin{lemma}[Lemma A.9 in \citet{DzemskiOkui24}]
		\label{lem:mixing-g}
		Suppose that $(x_{it}, w_{it}, v_{it})$ is a strong mixing sequence over $t$ with mixing coefficients $\sup_i a_i [t] \leq C \exp (-at^{d}) $ for constants $C$, $a$ and $d$, then so is $ g( x_{it}, w_{it}, v_{it})$ where $g$ is a measurable function. 
	\end{lemma}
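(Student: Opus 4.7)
The plan is to argue directly from the definition of the strong mixing coefficient and exploit the sub-$\sigma$-algebra inclusion produced by measurability of $g$. For each fixed $i$, recall that the strong mixing coefficient at lag $k$ for the sequence $\{(x_{it}, w_{it}, v_{it})\}_t$ is
\begin{equation*}
a_i[k] = \sup_t \sup_{A \in \mathcal{F}_i^{t},\, B \in \mathcal{G}_i^{t+k}} \bigl|P(A \cap B) - P(A)P(B)\bigr|,
\end{equation*}
where $\mathcal{F}_i^{t} := \sigma(x_{is}, w_{is}, v_{is} : s \leq t)$ and $\mathcal{G}_i^{t+k} := \sigma(x_{is}, w_{is}, v_{is} : s \geq t+k)$ are the past and future $\sigma$-algebras of the process.

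The first step is to observe that because $g$ is measurable, each $g(x_{it}, w_{it}, v_{it})$ is $\sigma(x_{it}, w_{it}, v_{it})$-measurable. Consequently, defining $\tilde{\mathcal{F}}_i^{t} := \sigma(g(x_{is}, w_{is}, v_{is}) : s \leq t)$ and $\tilde{\mathcal{G}}_i^{t+k} := \sigma(g(x_{is}, w_{is}, v_{is}) : s \geq t+k)$, the standard fact that the $\sigma$-algebra generated by measurable transformations is contained in the $\sigma$-algebra generated by the underlying random variables yields $\tilde{\mathcal{F}}_i^{t} \subseteq \mathcal{F}_i^{t}$ and $\tilde{\mathcal{G}}_i^{t+k} \subseteq \mathcal{G}_i^{t+k}$, and these inclusions hold separately for each $i$ and each $t$.

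The second step, which delivers the conclusion, is to invoke monotonicity of the mixing coefficient with respect to the generating $\sigma$-algebras: the supremum of $|P(A \cap B) - P(A)P(B)|$ over a smaller family of events can only decrease. Hence the mixing coefficient $\tilde a_i[k]$ of the transformed sequence $\{g(x_{it}, w_{it}, v_{it})\}_t$ satisfies $\tilde a_i[k] \leq a_i[k]$ pointwise in $i$ and $k$, so taking $\sup_i$ produces $\sup_i \tilde a_i[k] \leq \sup_i a_i[k] \leq C \exp(-ak^d)$, which is precisely the required exponential bound. There is no genuine obstacle here; the argument is purely measure-theoretic, and the uniform-in-$i$ bound transfers cleanly because the sub-$\sigma$-algebra inclusion is derived from the measurability of $g$ alone and is therefore available for every $i$ simultaneously.
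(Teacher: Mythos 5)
Your proof is correct and is the standard argument: measurability of $g$ gives the sub-$\sigma$-algebra inclusions $\tilde{\mathcal{F}}_i^{t} \subseteq \mathcal{F}_i^{t}$ and $\tilde{\mathcal{G}}_i^{t+k} \subseteq \mathcal{G}_i^{t+k}$, and monotonicity of the supremum defining the $\alpha$-mixing coefficient then yields $\tilde a_i[k] \leq a_i[k]$ uniformly in $i$. The paper itself does not reprove this lemma but imports it from \citet{DzemskiOkui24}, and your argument is exactly the canonical one-line measure-theoretic proof one would expect there.
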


	\begin{lemma}[Lemma A.12 in \citet{DzemskiOkui24}]
		\label{lem:moment}
		Suppose that a random variable $X$ satisfies that $P (|X| > x) < C \exp (-(x/a)^{d})$ for some $C, a >0 $ and $d>1$. Then, for any integer $p$, $E (| X|^p ) <M$ for $M$ depending only on $C,a, d$ and $p$. 
	\end{lemma}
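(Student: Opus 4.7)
The plan is to prove the moment bound by the standard layer-cake (tail-integral) representation, converting the assumed Weibull-type tail into an explicit integral that reduces to a Gamma function. First I would write $E(|X|^p) = \int_0^\infty p x^{p-1} \Pr(|X| > x)\, dx$, which is valid for any nonnegative random variable and any integer $p \geq 1$ (the $p=0$ case is trivial since the bound is just $1$).

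Next I would apply the hypothesis $\Pr(|X| > x) \leq \min\{1, C \exp(-(x/a)^d)\}$, splitting the integral at $x = a$:
\begin{align*}
E(|X|^p) \leq \int_0^a p x^{p-1}\, dx + \int_a^\infty p x^{p-1} C \exp(-(x/a)^d)\, dx = a^p + p C \int_a^\infty x^{p-1} \exp(-(x/a)^d)\, dx.
\end{align*}
For the remaining integral I would perform the change of variables $u = (x/a)^d$, so that $x = a u^{1/d}$ and $dx = (a/d) u^{1/d - 1} du$. This transforms the integrand into $a^{p-1} u^{(p-1)/d} \cdot (a/d) u^{1/d - 1} e^{-u}$, yielding
\begin{align*}
p C \int_a^\infty x^{p-1} \exp(-(x/a)^d)\, dx = \frac{p C a^p}{d} \int_1^\infty u^{p/d - 1} e^{-u}\, du \leq \frac{p C a^p}{d} \, \Gamma(p/d),
\end{align*}
which is finite because $\Gamma$ is finite at every positive argument. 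Hence one may take $M = a^p + (p C a^p / d)\Gamma(p/d)$, which depends only on $C, a, d, p$ as required. (The condition $d > 1$ stated in the lemma is not used here; the same argument gives a finite bound for any $d > 0$, so the conclusion is a fortiori valid under the hypothesized $d > 1$.)

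There is no real obstacle: this is a routine computation. The only point requiring mild care is the justification of the layer-cake identity (which requires $p \geq 1$ integer, as assumed) and ensuring that the resulting constant $M$ has no hidden dependence on the distribution of $X$ beyond $(C, a, d, p)$ — which is manifest from the closed form above.
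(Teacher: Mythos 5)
Your proof is correct. Note that the paper itself does not prove this statement — it is imported verbatim as Lemma A.12 of \citet{DzemskiOkui24} and used as a black box — so there is no in-paper argument to compare against. Your layer-cake computation is the standard and complete way to establish it: the identity $E(|X|^p)=\int_0^\infty p x^{p-1}\Pr(|X|>x)\,dx$, the split at $x=a$ with the trivial bound $\Pr(|X|>x)\le 1$ on $[0,a]$, and the substitution $u=(x/a)^d$ reducing the tail integral to an incomplete Gamma function all check out, and the resulting constant $M=a^p+(pCa^p/d)\Gamma(p/d)$ visibly depends only on $(C,a,d,p)$. Your observation that only $d>0$ is needed (the stated $d>1$ is not used) is also accurate; the stronger hypothesis presumably matters elsewhere in the source paper (e.g., for mixing or concentration arguments) but not for the existence of moments.
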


\section{Additional empirical results}
\label{app:two-groups}	

This section presents the results from the specification with the latent group structure. The number of groups is $G=2$.

\begin{figure}[htp]\caption{Heatmap of R\&D spillover in the two regimes}\label{fig:empirical-network-g2} 
	\begin{center}
		\begin{tabular}{cc}
			\includegraphics[scale=0.55]{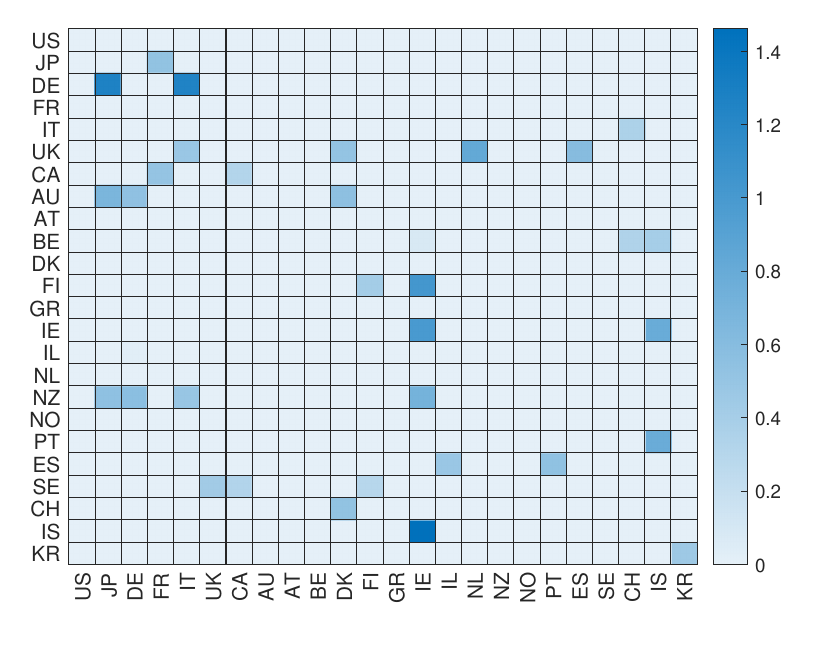} &
			\includegraphics[scale=0.55]{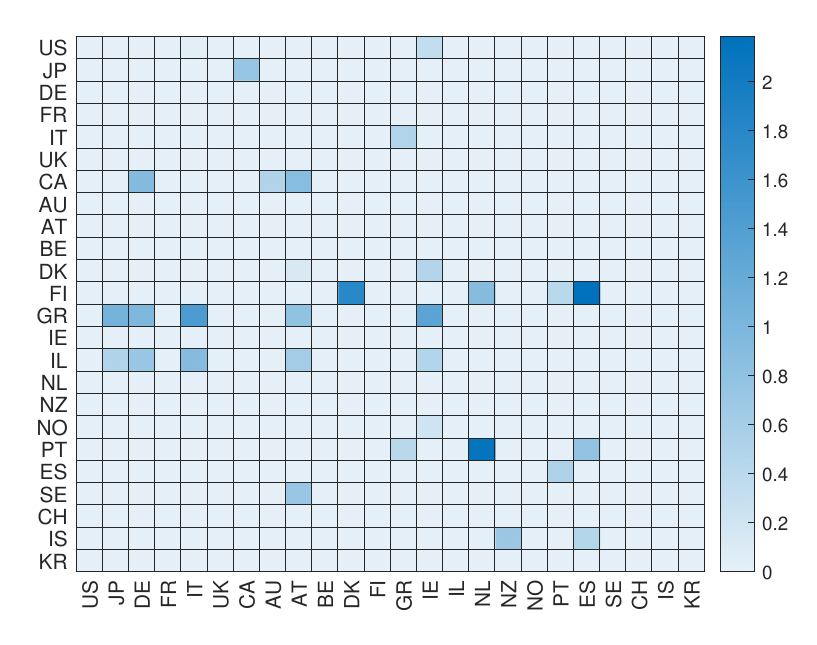}\\
			Before the break & After the break			
	\end{tabular}\end{center}
\end{figure}

Group 1 contains Greece, Portugal, France, Belgium, Switzerland, UK, Sweden, Finland, Norway, Canada, US, New Zealand, Australia Korea, and Israel, characterized by weaker effect of human capital effect (approximately 0.024). Group 2 consists of Austria, Germany, Netherlands, Denmark, Ireland, Japan, Iceland, Italy and Spain, whose effect of human capital is of much larger size (approximately 2.37)

Such group membership estimates are related to the geographic locations to some extent but not perfectly, and they do not coincide with the G7 clustering either. Overall, the estimates of private and spillover effects confirm the robustness of the results on the one hand; on the other hand, they suggest that the difference between the two specifications is minor.

\end{document}